\newtheorem{theorem}{Theorem}[section]
\newtheorem{proposition}[theorem]{Proposition}
\newtheorem{lemma}[theorem]{Lemma}
\newtheorem{corollary}[theorem]{Corollary}
\renewcommand{\epsilon}{\varepsilon}
\newcommand{\vect}[1]{\mathbf{#1}} 
\newcommand{\bff}{\vect{f}}
\newcommand{\bfx}{\vect{x}}
\newcommand{\bfy}{\vect{y}}
\newcommand{\bfX}{\vect{X}}
\newcommand{\bfY}{\vect{Y}}
\newcommand{\Reals}{\mathbb R} 
\newcommand{\reals}{\mathbb R} 
\newcommand{\naturals}{\mathbb N} 
\newcommand{\supp}{\textnormal{supp}}
\newcommand{\abs}[1]{\lvert#1\rvert} 
\newcommand{\card}[1]{\abs{#1}} 
\newcommand{\bigcard}[1]{\bigl\lvert#1\bigr\rvert}
\newcommand{\comp}[1]{{#1}^{\textnormal{c}}} 
\newcommand{\setcomp}[1]{{#1}^{\textnormal{c}}} 
\newcommand{\set}[1]{\mathcal{#1}} 
\newcommand{\E}[1]{\operatorname{E}[#1]} 
\newcommand{\bigE}[1]{\operatorname{E}\bigl[#1\bigr]} 
\newcommand{\BiggE}[1]{\operatorname{E}\Biggl[#1\Biggr]} 
\newcommand{\Cal}{C_{\ell}(\rho)}
\newcommand{\Calone}{C_{\ell}(1)}
\newcommand{\Calf}{C_{\textnormal{$\ell$,fb}}(\rho)}
\newcommand{\Coff}{R_{\textnormal{cutoff}}(\rho)}
\newcommand{\Cofff}{R_{\textnormal{cutoff,fb}}(\rho)}
\newcommand{\Czero}{C_0}
\newcommand{\Czerof}{C_{0,\textnormal{fb}}}
\newcommand{\Ceo}{C_{\textnormal{0-u}}}
\newcommand{\Ceof}{C_{\textnormal{0-u,fb}}}
\begin{document}
%
\title{On the Listsize Capacity with Feedback}
%
%
%

\author{Christoph~Bunte,
        Amos~Lapidoth,~\IEEEmembership{Fellow,~IEEE}
\thanks{C. Bunte and A. Lapidoth are with the Signal and Information Processing
Laboratory at ETH Zurich. E-mail: \{bunte,lapidoth@isi.ee.ethz.ch\}.}
\thanks{This paper was presented 
in part at the 2013 IEEE Information Theory Workshop (ITW) in Seville, Spain.}
}

\maketitle

\begin{abstract}
The listsize capacity of a discrete memoryless channel is the largest
transmission rate for which the 
expectation---or, more generally, the $\rho$-th moment---of the number of messages 
that could have produced the output of the channel approaches one as the
blocklength tends to infinity. We show that for channels with feedback this rate
is upper-bounded by the maximum of Gallager's $E_0$ function divided by $\rho$, 
and that equality holds when the zero-error capacity of the channel is positive.
To establish this inequality we prove that feedback does not increase the cutoff
rate. Relationships to other notions of channel capacity are explored. 
\end{abstract}

\begin{IEEEkeywords}
cutoff rate, feedback, listsize capacity, zero-error capacity, zero-undetected-error capacity
\end{IEEEkeywords}

%
\IEEEpeerreviewmaketitle

\section{Introduction and Results}
The main focus of this paper is the \emph{listsize capacity} of discrete memoryless channels
(DMCs) with feedback. We begin by recalling the definition of the listsize
capacity and, to put things into perspective, some other notions of channel capacity. 
\subsection{Various Notions of Capacity}
If a code for a DMC is to be decoded without
errors, then for every sequence of output letters there can be at
most one message that, when fed to the encoder, can produce it. 
The \emph{zero-error capacity} $\Czero$ of a DMC is the largest rate of
codes with this property. 
Determining $\Czero$ for arbitrary DMCs is one of the longest standing open problems in
Information Theory~\cite{korner1998zero}. 
If we only require that the correct message be decodable 
with probability approaching one as the blocklength tends to infinity,
then suddenly the problem becomes tractable.
Indeed, the largest rate achievable in this sense is the \emph{Shannon capacity} $C$. 

The zero-error capacity is a purely combinatoric quantity: it depends only on
the zeros of the channel matrix. The Shannon capacity, in contrast, is a continuous function of the channel matrix. 
Two notions of channel capacity that lie between these two extremes are the
listsize capacity and the \emph{zero-undetected-error~(z.u.e.) capacity}; 
they may be defined as follows. 
\begin{enumerate}
\item
Consider a decoder that outputs the list of all the messages that could
have produced the given output of the channel. 
The listsize capacity is the largest rate achievable in the sense that 
the $\rho$-th moment of the length of this list approaches one as the
blocklength tends to infinity~\cite{ahlswede1996erasure,telatar1997zero}. 
It is denoted by $\Cal$. In this paper, $\rho$ can be any number greater than zero.
\item
Consider a decoder that either outputs the correct message
(when there is a unique message that could have produced the given output) or
declares an erasure (otherwise). The z.u.e.\ capacity is the largest rate achievable 
in the sense that the probability of erasure approaches zero
as the blocklength tends to
infinity~\cite{csiszar1995channel,ahlswede1996erasure}. It is denoted by $\Ceo$. 
\end{enumerate}

For any given channel, 
\begin{equation}
\label{eq:inequalities}
\Czero \leq \Cal \leq \Ceo \leq C.
\end{equation}
The first and third inequalities are obvious, and the second inequality is proved in Proposition~\ref{prop:list_erasure_relation} ahead.
The listsize and z.u.e.\ capacities are not purely combinatoric quantities, nor are they
continuous functions of the channel matrix. 
But like $\Czero$, determining $\Cal$ or $\Ceo$ for arbitrary DMCs is, to the
best of our knowledge, an open problem. 
\subsection{Feedback and New Results}
The picture changes when there is a noiseless feedback link from the output of the channel to the encoder. 
Indeed, for channels with feedback, the zero-error capacity was proved by
Shannon~\cite{shannon1956zero} to be
equal to the single-letter expression~\eqref{eq:shannon_zero} ahead. The z.u.e.\
capacity with feedback was found in~\cite{nakiboglu2012errors,bunte2012zero} and
can be expressed as in~\eqref{eq:nakiboglu} ahead. 

Encouraged by these results, we focus here on the listsize capacity with feedback
$\Calf$. For channels with positive zero-error capacity we prove that
$\Calf$ equals the maximum over all input distributions of the ratio of
Gallager's $E_0$ function (\cite[p.\ 138]{gallager1968information} or~\eqref{eq:gallager_E0} in
Appendix~\ref{appendix:gallager}) to $\rho$. Moreover, this maximum is always an
upper bound on~$\Calf$: 
\begin{theorem}
\label{thm:list_cutoff_achieve}
For any $\rho>0$,
\begin{equation}
\label{eq:list_cutoff_achieve}
\Calf \leq  \max_{P} \frac{E_0(\rho,P)}{\rho},
\end{equation}
with equality if $\Czero>0$. 
\end{theorem}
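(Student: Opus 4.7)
\emph{Proof plan.} The theorem has two parts: the upper bound~\eqref{eq:list_cutoff_achieve} and, when $\Czero>0$, achievability. The plan is to prove the universal inequality
\begin{equation*}
E[L(Y^n)^\rho]\geq M^\rho e^{-n\max_P E_0(\rho,P)}
\end{equation*}
for every feedback code of size $M$ and blocklength $n$, and then, when $\Czero>0$, to match it by concatenating a random code with a zero-error disambiguation phase made possible by $\Czero>0$.

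\emph{Upper bound.} The first step is H\"older's inequality with exponents $1+\rho$ and $(1+\rho)/\rho$ applied to $\sum_m P(y^n\mid m)^{1/(1+\rho)}$ over the messages compatible with $y^n$, which yields $\bigl(\sum_m P(y^n\mid m)^{1/(1+\rho)}\bigr)^{1+\rho}\leq M\,P(y^n)\,L(y^n)^\rho$, where $P(y^n)=M^{-1}\sum_m P(y^n\mid m)$ is the output distribution under a uniform message. Rearranging and summing against $P(y^n)$ gives
\begin{equation*}
M\,E[L(Y^n)^\rho]\geq \sum_{y^n}\Bigl(\sum_m P(y^n\mid m)^{1/(1+\rho)}\Bigr)^{1+\rho}.
\end{equation*}
The second step is backward induction on~$n$ to show the right-hand side is at least $M^{1+\rho}e^{-n\max_P E_0(\rho,P)}$. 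For fixed $y^{n-1}$, set $H_m=P(y^{n-1}\mid m)^{1/(1+\rho)}$ and $x_m=f_n(m,y^{n-1})$; grouping messages by $x_m$ defines a distribution $Q(\cdot\mid y^{n-1})$ on $\mathcal X$, and the inner sum over $y_n$ evaluates to $\bigl(\sum_m H_m\bigr)^{1+\rho}e^{-E_0(\rho,Q(\cdot\mid y^{n-1}))}$, which is at least $\bigl(\sum_m H_m\bigr)^{1+\rho}e^{-\max_P E_0(\rho,P)}$. Summing over $y^{n-1}$ closes the induction step. Iterating yields the claim, so $E[L(Y^n)^\rho]\to 1$ forces $R\leq\max_P E_0(\rho,P)/\rho$; a by-product is that feedback does not increase Gallager's $E_0$ function and hence the cutoff rate.

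\emph{Achievability when $\Czero>0$.} My plan is a two-phase feedback scheme. Phase~1, of length $n_1$, transmits the message with an i.i.d.\ random code drawn from $P^*$ achieving $\max_P E_0(\rho,P)$ at a rate slightly below $\max_P E_0(\rho,P)/\rho$. In phase~2, of length $n_2=n-n_1=o(n)$, the encoder---knowing $Y^{n_1}$ via feedback---computes the list $\mathcal L(Y^{n_1})$ of messages compatible with the phase-1 output and transmits the rank of the true message within it using a zero-error code of roughly $e^{n_2\Czero}$ distinguishable inputs; such a code exists because $\Czero>0$. When $|\mathcal L(Y^{n_1})|\leq e^{n_2\Czero}$ the final listsize is~$1$, and otherwise it is at most $\lceil|\mathcal L(Y^{n_1})|\,e^{-n_2\Czero}\rceil$, yielding a bound of the form $E[L(Y^n)^\rho]\leq 1+e^{-n_2\rho\Czero}E[L_{n_1}(Y^{n_1})^\rho]$.

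\emph{Main obstacle.} The hardest part is controlling $E[L_{n_1}^\rho]$ and its upper tail sharply enough that the zero-error shrinking factor $e^{-n_2\rho\Czero}$ absorbs the slack. A direct random-coding calculation yields only $E[L_{n_1}^\rho]\lesssim 1+M^\rho\beta^{n_1}$ with $\beta=\sum_y(P^*W)(y)P^*(S(y))^\rho$ and $S(y)=\{x:W(y\mid x)>0\}$, whose exponent $-\log\beta/\rho$ generally falls short of $\max_P E_0(\rho,P)/\rho$. Closing this gap requires careful tuning of $n_1$ and $n_2$ together with expurgation of the random code or a grouping of messages by their phase-1 output supports, so that $E[L_n(Y^n)^\rho]\to 1$ at every rate strictly below $\max_P E_0(\rho,P)/\rho$.
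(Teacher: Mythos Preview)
Your converse is correct and takes a cleaner route than the paper's. The paper deduces the upper bound on $\Calf$ indirectly, via $\Calf\leq\Cofff$ and then $\Cofff\leq\max_P E_0(\rho,P)/\rho$; for the latter it combines Arikan's guessing inequality with Gallager's Kuhn--Tucker characterization of the $E_0$-maximizing input distribution, applied to an auxiliary channel whose inputs are feedback encoding functions. Your H\"older step attacks $\Calf$ directly (it uses the list $\set{L}(\bfy)$, not the ML list $\set{L}(m,\bfy)$), and your backward-induction step furnishes an elementary alternative to the paper's KKT argument for the key inequality $\sum_{\bfy}\bigl(\sum_m P(\bfy\mid m)^{1/(1+\rho)}\bigr)^{1+\rho}\geq M^{1+\rho}e^{-n\max_P E_0(\rho,P)}$. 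One caveat: contrary to your parenthetical remark, your argument does \emph{not} immediately show that feedback leaves the cutoff rate unchanged---the H\"older step is specific to $\set{L}(\bfy)$ and has no analogue for $\set{L}(m,\bfy)$; to recover $\Cofff=\Coff$ from your induction step you would still need Arikan's inequality. The paper's route thus buys the extra corollary; yours gives the theorem's converse more directly.

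Your direct part has a genuine gap, and you have correctly located it. With i.i.d.\ random coding in phase~1, $\bigE{\card{\set{L}(Y^{n_1})}^\rho}$ is controlled only up to the Forney rate $-\rho^{-1}\log\sum_y(PW)(y)P(\set{X}(y))^\rho$; above that rate it grows like $e^{n_1\rho(R_1-\text{Forney}(P))}$, and no choice of $n_2$ that keeps the overall rate near $R$ can absorb this (taking $n_2=\alpha n$ only yields achievable rates on the segment between the Forney rate and $\Czero$, both generally strictly below $\max_P E_0(\rho,P)/\rho$). Expurgation cannot help either: any expurgated code is still a code, so forcing $\bigE{\card{\set{L}(\bfy)}^\rho}\to 1$ would cap its rate at $\Cal$. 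The idea you are missing is that the encoder should not merely reveal the rank of $m$ within $\set{L}(\bfy)$; it should first reveal, over the zero-error channel, the \emph{conditional type} $V$ of $\bfy$ given the transmitted codeword $\bfx_m$. This costs only $o(n)$ channel uses (there are polynomially many types), requires constant-composition rather than i.i.d.\ codes in phase~1, and shrinks the decoder's list from $\set{L}(\bfy)$ to the single likelihood class $\set{M}(\bfy,V)=\{m:\bfy\in T_V(\bfx_m)\}$. Only then is $\set{M}(\bfy,V)$ partitioned into $e^{n\alpha}$ parts via a final zero-error phase. Under random constant-composition coding, $\card{\set{M}(\bfy,V)}$ is binomial with success probability $e^{-nI(P,V)}$; bounding its $(1+\rho)$-th moment and summing over $V$ with the weight $e^{-n(D(V\|W|P)+H(V|P))}$ leads to the condition $R<\min_V\bigl\{I(P,V)+\rho^{-1}D(V\|W|P)\bigr\}$, and that minimum equals $E_0(\rho,P)/\rho$.
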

\begin{figure}
\centering
\begin{tikzpicture}[scale=2, >=latex, thick]
\draw[-] (0,0) node[left] {$2$} -- (2,0) node[right] {$2$};
\draw[-] (0,1) node[left] {$1$} -- node[below] {$1-\epsilon$} (2,1) node[right] {$1$};
\draw[-] (0,2) node[left] {$0$} -- node[above] {$1-\epsilon$} (2,2) node[right] {$0$};
\draw[-] (0,1) -- node[above, near start] {$\epsilon$} (2,2);
\draw[-] (0,2) -- node[above, near start] {$\epsilon$} (2,1);
\end{tikzpicture}
\caption{A channel with $\Calf>\Cal$.}
\label{fig:bsc_plus_wire}
\end{figure}
A lower bound on~$\Calf$ when~$\Czero=0$ is provided in
Section~\ref{sec:feedback_lb} (Theorem~\ref{thm:feedback_lb}).
We can use Theorem~\ref{thm:list_cutoff_achieve} to show: 
\begin{proposition}
\label{prop:feedback_can_increase}
Irrespective of $\rho>0$, feedback can increase the listsize capacity.
\end{proposition}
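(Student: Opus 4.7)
The plan is to use the channel of Fig.~\ref{fig:bsc_plus_wire} with $\epsilon>0$ chosen sufficiently small (depending on $\rho$) as the witness, and to establish directly that $\Cal\le 1<\Calf$ for this channel.

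First I would note that inputs $0$ and $2$ induce disjoint sets of outputs (namely $\{0,1\}$ and $\{2\}$), so the single-letter codewords $0$ and $2$ are never confusable and thus $\Czero\ge 1>0$. Theorem~\ref{thm:list_cutoff_achieve} therefore applies and gives $\Calf=\max_P E_0(\rho,P)/\rho$. To lower-bound this, I would evaluate $E_0(\rho,P)$ at the input distribution $P_{\mathrm{unif}}$ that is uniform on $\{0,1,2\}$. At $\epsilon=0$ the channel becomes the noiseless ternary one, for which $E_0(\rho,P_{\mathrm{unif}})=-\log\sum_x P_{\mathrm{unif}}(x)^{1+\rho}=\rho\log 3$. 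Since $E_0(\rho,P)$ is continuous in the channel matrix for fixed $P$ and $\rho$, the strict inequality $E_0(\rho,P_{\mathrm{unif}})/\rho>1$ persists for all sufficiently small $\epsilon>0$, whence $\Calf>1$.

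Next I would upper-bound $\Cal$ by $1$ in the absence of feedback. The decisive structural feature of this channel is that input $2$ is the only input that can produce output $2$, and it does so deterministically. Hence, with probability one, the positions at which $Y^n$ equals $2$ coincide with the positions at which the transmitted $\bfX^n$ equals $2$; on the remaining positions the channel reduces to a BSC whose outputs are compatible with both binary inputs. Grouping the $M=2^{nR}$ codewords of any blocklength-$n$ code according to their ``$2$-pattern'' $p\in\{0,1\}^n$ and writing $M_p$ for the number of codewords having pattern $p$, the list size $L$ (the number of codewords $\bfx^n$ with $W^n(Y^n|\bfx^n)>0$) equals $M_{p(\bfX^n)}$ almost surely. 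Averaging over a uniform choice of $\bfX^n$ and applying Jensen's inequality to the convex map $x\mapsto x^{1+\rho}$ across the at most $2^n$ patterns gives
\begin{equation*}
\E{L^\rho}=\frac{1}{M}\sum_p M_p^{1+\rho}\;\ge\;\left(\frac{M}{2^n}\right)^{\!\rho}.
\end{equation*}
For any $R>1$ the right-hand side tends to infinity, so $\E{L^\rho}$ cannot approach $1$; hence $\Cal\le 1$.

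Combining the two steps yields $\Cal\le 1<\Calf$ for every $\rho>0$ once $\epsilon$ is chosen small enough. The main obstacle is the second step, since it must rule out \emph{every} code without feedback rather than a specific construction; this is handled by the deterministic nature of the ``$2$-subchannel'' together with a pigeonhole/Jensen argument that works uniformly in~$\rho>0$.
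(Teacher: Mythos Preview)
Your argument is correct. It coincides with the paper's proof on the feedback side (invoking Theorem~\ref{thm:list_cutoff_achieve} and the continuity of $E_0$ at $\epsilon=0$ to push $\Calf$ close to $\log 3$), but the no-feedback upper bound is obtained differently. The paper appeals to Proposition~\ref{prop:combine}: since $\set{X}(0)=\set{X}(1)=\{0,1\}$, outputs $0$ and $1$ may be merged without changing $\Cal$, leaving a binary output alphabet and hence $\Cal\le\log 2$. Your route is more hands-on---you exploit the determinism of the ``$2$''-subchannel to identify $L$ with the number of messages sharing the transmitted $2$-pattern and then apply Jensen to $\frac{1}{M}\sum_p M_p^{1+\rho}$. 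The two arguments are really the same observation expressed at different levels of abstraction: your Jensen step is precisely what the output-cardinality bound delivers once outputs $0$ and $1$ have been merged. Your version is self-contained and does not need Proposition~\ref{prop:combine}; the paper's version is shorter because it cites a reusable lemma. Two cosmetic remarks: you use $M=2^{nR}$ and thresholds of~$1$, whereas the paper works with natural logarithms and $e^{nR}$ (so your ``$1$'' is the paper's $\log 2$); and your ``codewords'' should strictly be ``messages,'' since the encoder need not be injective---your counting argument is unaffected.
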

\begin{proof}
The channel in Figure~\ref{fig:bsc_plus_wire} has positive 
zero-error capacity, and $\max_P E_0(\rho,P)/\rho$ approaches~$\log 3$ as~$\epsilon$ 
tends to zero. Consequently, by Theorem~\ref{thm:list_cutoff_achieve}, $\Calf$
approaches $\log 3$ as $\epsilon$ tends to zero. But according to
Proposition~\ref{prop:combine} ahead we may combine the output symbols 0 and~1 without
altering~$\Cal$, so~$\Cal \leq \log 2$ because the resulting output alphabet is binary.
\end{proof}
We note that also $\Ceof>\Ceo$ for the channel in
Figure~\ref{fig:bsc_plus_wire}~\cite{bunte2012zero}. 

The direct part of Theorem~\ref{thm:list_cutoff_achieve} is proved in
Section~\ref{sec:cutoff_achieve}, where we also show that the
inequality~\eqref{eq:list_cutoff_achieve} need not be tight if $\Czero=0$. 
In order to derive~\eqref{eq:list_cutoff_achieve}, we recall~\cite{telatar1997zero} the following
operational meaning of the right-hand side of~\eqref{eq:list_cutoff_achieve}.
Consider the list of all the messages that under a uniform prior are at least as likely as the correct one 
given the output of the channel. The \emph{cutoff rate $\Coff$} 
is the largest rate of codes for which the $\rho$-th moment 
of the length of this list approaches one as the blocklength tends to
infinity. Since the list of messages that could have produced the output
contains those that are at least as likely as the correct one, 
\begin{equation}
\label{eq:cal_leq_coff}
\Cal \leq \Coff,
\end{equation}
and, for channels with feedback,
\begin{equation}
\label{eq:229_inequ}
\Calf \leq \Cofff.
\end{equation}
En route to the converse part of
Theorem~\ref{thm:list_cutoff_achieve} we prove:
\begin{theorem}
\label{thm:cutoff_no_increase}
For any $\rho>0$,
\begin{equation}
\label{eq:253_534}
\Cofff = \max_{P} \frac{E_0(\rho,P)}{\rho}.
\end{equation}
\end{theorem}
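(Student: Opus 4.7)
The plan has two parts, and only the converse is substantive. For the $\geq$ direction, every non-feedback code is in particular a feedback code, so $\Cofff \geq \Coff$, and the classical Gallager/Telatar result $\Coff = \max_P E_0(\rho,P)/\rho$ closes this half.

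The heart of the matter is the converse, and my approach is to reduce $\E{L^\rho}$ to a Gallager-style partition function and then bound that function by induction on the blocklength. Since $L(y^n;m) = \card{\{m' : P(y^n \mid m') \geq P(y^n \mid m)\}}$ is, under a uniform prior on the $M$ messages, exactly the rank of the true message in the MAP ordering of the posterior, Arikan's inequality on guessing moments yields
\begin{equation*}
\E{L^\rho} \geq (1 + \ln M)^{-\rho} \cdot \frac{1}{M} \sum_{y^n} Z(y^n)^{1+\rho},
\end{equation*}
where $Z(y^n) \eqdef \sum_m P(y^n \mid m)^{1/(1+\rho)}$ and $P(y^n \mid m) = \prod_{i=1}^{n} W(y_i \mid f_i^m(y^{i-1}))$ is the output law induced by the feedback encoder for message $m$. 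The remaining task is to lower bound the partition sum despite the feedback.

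For this I would introduce the truncated partition function $Z^{(k)}(y^k) \eqdef \sum_m \prod_{i=1}^{k} W(y_i \mid f_i^m(y^{i-1}))^{1/(1+\rho)}$, with $Z^{(0)} = M$. Fixing the past outputs $y^k$, I would group the messages by their next encoded input $x = f_{k+1}^m(y^k)$ and define the conditional input distribution $P'(x \mid y^k) \eqdef Z^{(k)}(y^k)^{-1} \sum_{m : f_{k+1}^m(y^k) = x} \prod_{i=1}^{k} W(y_i \mid f_i^m(y^{i-1}))^{1/(1+\rho)}$, which is a bona fide probability distribution on $\mathcal{X}$ for every $y^k$. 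A direct computation then yields the exact recursion
\begin{equation*}
\sum_{y_{k+1}} Z^{(k+1)}(y^k, y_{k+1})^{1+\rho} = Z^{(k)}(y^k)^{1+\rho} \cdot \exp\bigl(-E_0(\rho, P'(\cdot \mid y^k))\bigr),
\end{equation*}
and since $E_0(\rho, P'(\cdot \mid y^k)) \leq \max_P E_0(\rho, P)$ uniformly in $y^k$, summing over $y^k$ and iterating gives $\sum_{y^n} Z(y^n)^{1+\rho} \geq M^{1+\rho} \exp(-n \max_P E_0(\rho, P))$. Combining this with the Arikan bound produces $\E{L^\rho} \geq (1+\ln M)^{-\rho} M^\rho \exp(-n \max_P E_0(\rho,P))$, which for $M = \lceil e^{nR}\rceil$ grows exponentially (up to the polynomial factor $(1+\ln M)^{-\rho}$) whenever $R > \max_P E_0(\rho,P)/\rho$, contradicting $\E{L^\rho} \to 1$ and thus forcing $\Cofff \leq \max_P E_0(\rho,P)/\rho$.

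The main obstacle is the inductive step for the partition function: because of feedback the distribution of $X_{k+1}$ depends on the full past $y^k$, so the partition sum does not factor across time in the usual product form. The observation that resolves this is that, for every realization of the past, the induced distribution $P'(\cdot \mid y^k)$ is still \emph{some} probability distribution on $\mathcal{X}$, so its $E_0$ value is pointwise dominated by $\max_P E_0(\rho, P)$; this converts the non-stationary recursion into a stationary geometric decay and is precisely what makes feedback useless for the cutoff rate.
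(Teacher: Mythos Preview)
Your proof is correct. Both you and the paper begin with Arikan's guessing-moment inequality to reduce the converse to the lower bound $\sum_{y^n} Z(y^n)^{1+\rho} \geq M^{1+\rho} e^{-n\max_P E_0(\rho,P)}$, but you establish this bound differently. The paper reinterprets the sum as $M^{1+\rho}\sum_{\bfy}\bigl(\sum_{\bff}\widetilde{P}(\bff)\widetilde{W}_n(\bfy|\bff)^{1/(1+\rho)}\bigr)^{1+\rho}$ for a ``super-channel'' $\widetilde{W}_n$ whose inputs are entire feedback strategies, and then invokes Gallager's KKT characterization of the $E_0$-maximizer (Lemma~\ref{lem:gallager}) to verify that the product distribution over constant strategies minimizes this functional; the verification proceeds by peeling off coordinates one at a time using the single-letter KKT inequality. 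Your route is more elementary: by grouping messages according to their next input and normalizing, you manufacture a bona fide single-letter input distribution $P'(\cdot\mid y^k)$ at each step, obtain the exact recursion $\sum_{y_{k+1}} Z^{(k+1)}(y^{k+1})^{1+\rho}=Z^{(k)}(y^k)^{1+\rho}e^{-E_0(\rho,P'(\cdot\mid y^k))}$, and then use only the trivial bound $E_0(\rho,P')\leq\max_P E_0(\rho,P)$. This avoids the super-channel abstraction and the KKT lemma altogether, at the cost of being slightly less ``structural'' about why the optimum is attained by memoryless inputs. Both arguments are short; yours has fewer prerequisites.
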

Inequality~\eqref{eq:list_cutoff_achieve} follows directly
from~\eqref{eq:229_inequ} and Theorem~\ref{thm:cutoff_no_increase}.
The converse part of Theorem~\ref{thm:cutoff_no_increase} is proved in
Section~\ref{sec:cutoff_fb}.
The achievability part follows from the well-known 
result (e.g.,~\cite{telatar1997zero})
\begin{equation}
\label{eq:225_gallager}
\Coff =  \max_{P} \frac{E_0(\rho,P)}{\rho}
\end{equation}
combined with the trivial fact that $\Cofff \geq \Coff$. 
To keep this paper self-contained, we prove the achievability part
of~\eqref{eq:225_gallager} in Appendix~\ref{appendix:cutoff}.\footnote{The case
where $\rho=1$ follows essentially from Gallager's derivation of the random
coding error exponent~\cite[Sec.\ 5.6]{gallager1968information}. The general case, however,
requires a bit more work.}

As a corollary to Theorem~\ref{thm:cutoff_no_increase}, we obtain that feedback
does not increase the cutoff-rate:
\begin{corollary}
\label{cor:cutoff_feedback_no_increase}
For any $\rho>0$, 
\begin{equation}
\label{eq:cutoff_E0}
\Cofff=\Coff.
\end{equation}
\end{corollary}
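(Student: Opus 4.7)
The plan is to derive Corollary~\ref{cor:cutoff_feedback_no_increase} as an immediate consequence of Theorem~\ref{thm:cutoff_no_increase} combined with the known feedback-free characterization~\eqref{eq:225_gallager} of the cutoff rate.

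Specifically, Theorem~\ref{thm:cutoff_no_increase} already identifies $\Cofff$ with $\max_P E_0(\rho,P)/\rho$, and~\eqref{eq:225_gallager} identifies $\Coff$ with the same quantity. Chaining these two equalities yields
\begin{equation}
\Cofff \;=\; \max_{P} \frac{E_0(\rho,P)}{\rho} \;=\; \Coff,
\end{equation}
which is the desired identity. Thus no additional argument is required at this point; the corollary is a formal consequence of results already in place.

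The only non-trivial ingredients are the results being invoked, not the corollary itself. The achievability direction $\Cofff \geq \Coff$ is trivial because any code without feedback is in particular a code with feedback. The converse direction $\Cofff \leq \max_P E_0(\rho,P)/\rho$, which is the substance of Theorem~\ref{thm:cutoff_no_increase}, is the real work; its proof is deferred to Section~\ref{sec:cutoff_fb}. So in writing out the corollary one simply cites the two statements and invokes transitivity of equality, and the expected main obstacle—proving that feedback cannot push the cutoff rate above $\max_P E_0(\rho,P)/\rho$—is handled elsewhere.
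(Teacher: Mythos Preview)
Your proposal is correct and matches the paper's approach exactly: the paper presents the corollary as an immediate consequence of Theorem~\ref{thm:cutoff_no_increase} together with~\eqref{eq:225_gallager}, without even writing out a separate proof. Your chaining of the two equalities is precisely the intended derivation.
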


This paper also contains the following other contributions:
\begin{enumerate}
\item A generalization of Forney's~\cite{forney1968exponential} lower bound on $\Calone$ to $\Cal$ for all
$\rho>0$ and a proof that the $n$-letter version of this bound becomes tight as
$n\to\infty$ even when the input distributions (PMFs) are restricted to be uniform over their
support; see Proposition~\ref{prop:forney_cal_rho} and Section~\ref{sec:forney}. 
\item Sufficient conditions for equality in $\Cal\leq \Coff$; see
Proposition~\ref{prop:factorize}.
\item A simple method to tighten the upper bounds in $\Cal \leq \Coff$ and $\Ceo \leq C$;
see Proposition~\ref{prop:combine}.
\item A proof that $\lim_{\rho\to0} \Cal = \Ceo$; see
Proposition~\ref{prop:list_erasure_relation}.
\item The limit of $\Calf$ as $\epsilon\to0$ for a class of ``$\epsilon$-noise'' channels; see Proposition~\ref{prop:low_noise_fb}. 
\end{enumerate}

\subsection{Notation and Definitions}
The cardinality of a finite set~$\set{X}$ is denoted by~$\card{\set{X}}$. 
We use boldface letters to denote $n$-tuples, e.g., $\bfx=(x_1,\ldots,x_n)$, and
uppercase boldface letters for random $n$-tuples, e.g., $\bfY=(Y_1,\ldots,Y_n)$. 
Sometimes we use~$x^i$ as shorthand for $(x_1,\ldots, x_i)$ when $0\leq i \leq
n$, where $x^0$ is the empty tuple. All logarithms are natural logarithms. We adopt the 
convention that~$a\log (b/c)$ equals zero if $a=0$; equals $+\infty$ if
$a>0$, $b>0$, and $c=0$; and equals $-\infty$ if $a>0$, $b=0$, and $c>0$. 
For information-theoretic quantities like entropy and relative entropy we follow the
notation in~\cite{csiszar2011information}. In some of the proofs we use basic
results about types, all of which can be found
in~\cite[Chapter~2]{csiszar2011information}. 
In particular, the set of all sequences of type $P$ is denoted by $T_P$. The set of all
sequences whose conditional type is $V$ given $\bfx$, i.e., the $V$-shell of $\bfx$, is
denoted by $T_V(\bfx)$. 
Throughout~$(\delta_n)_{n\geq1}$ is used to denote sequences of nonnegative numbers that
tend to zero. We write $\delta_n'$, $\delta_n''$, etc., if we want to emphasize that different such
sequences are being used. The indicator function is denoted by~$1\{\cdot\}$.

A \emph{discrete memoryless channel (DMC)} is specified by its transition law (channel
matrix) $W(y|x)$, $x\in \set{X}$, $y\in\set{Y}$, where~$\set{X}$ and~$\set{Y}$ are finite
input and output alphabets. 
If~$P$ is a probability mass function (PMF) on~$\set{X}$, then $PW$ denotes the
distribution induced on~$\set{Y}$ by $P$ and the transition law $W$
\begin{equation}
(PW)(y) = \sum_{x\in\set{X}} P(x)W(y|x),\quad y\in\set{Y}.
\end{equation}
We write $P^n$ for the product PMF on $\set{X}^n$ 
\begin{equation}
P^n(\bfx)=\prod_{i=1}^n P(x_i),\quad \bfx\in\set{X}^n.
\end{equation}
The support of a PMF $P$ is denoted by $\supp(P)$, i.e.,
$\supp(P)=\{x\in\set{X}:P(x)>0\}$. 
If $\set{A}\subseteq \set{X}$, we write~$P(\set{A})$ instead of
$\sum_{x\in\set{A}} P(x)$. Similarly, if $\set{B} \subseteq \set{Y}$, we write
$W(\set{B}|x)$ instead of $\sum_{y\in \set{B}} W(y|x)$. 

In the absence of feedback, a blocklength-$n$ rate-$R$ encoder is a
mapping\footnote{More precise would be the integer part of $e^{nR}$, but for
typographical reasons we write $e^{nR}$ instead of $\lfloor e^{nR} \rfloor$.}
\begin{equation}
\label{eq:encoder}
f\colon \{1,\ldots,e^{nR}\} \to \set{X}^n.
\end{equation}
The domain of $f$ is the \emph{message set} and the (not necessarily distinct)
\emph{codewords} $f(1),\ldots,f(e^{nR})$ constitute the \emph{codebook}. 
We sometimes write $\bfx_m$ instead of $f(m)$ for the codeword to
which the encoder maps the $m$-th message.
Sending the $m$-th message induces on $\set{Y}^n$ the distribution 
\begin{equation}
W^n\bigl(\bfy\big|f(m)\bigr), \quad \bfy \in \set{Y}^n,
\end{equation}
where 
\begin{equation}
W^n(\bfy|\bfx) = \prod_{i=1}^n W(y_i|x_i),\quad \bfx \in \set{X}^n,\, \bfy \in
\set{Y}^n.
\end{equation}
We often use the notation
\begin{equation}
\set{X}(y) = \{x\in\set{X}: W(y|x)>0\},
\end{equation}
and
\begin{equation}
\set{X}^n(\bfy) = \{\bfx\in\set{X}^n:W^n(\bfy|\bfx)>0\}.
\end{equation}
Given an encoder $f$ as in~\eqref{eq:encoder}, we define the lists\footnote{We
use the word ``list'' in the sense of a set.} 
\begin{equation}
\set{L}(\bfy) = \bigl\{ m: W^n\bigl(\bfy\big|f(m)\bigr)>0\bigr\},\quad \bfy \in
\set{Y}^n,
\end{equation}
and
\begin{equation}
\set{L}(m,\bfy) = \bigl\{\tilde{m}: W^n\bigl(\bfy\big|f(\tilde{m})\bigr)\geq
W^n\bigl(\bfy\big|f(m)\bigr)\bigr\}.
\end{equation}
Stated differently,~$\set{L}(\bfy)$ is the list of all messages that can 
produce the output sequence $\bfy$, and~$\set{L}(m,\bfy)$ is the list of all
messages that under the uniform prior are at least as likely as the~$m$-th message 
given that~$\bfy$ is observed at the output. 

We can now give precise definitions of $\Cal$, $\Ceo$, and~$\Coff$. 
\begin{enumerate}
\item $\Cal$ is the supremum of all rates $R$ for which there exists a sequence of
blocklength-$n$ rate-$R$ encoders~$(f_n)_{n\geq 1}$ such that
\begin{equation}
\label{eq:cal_def}
\lim_{n\to\infty} e^{-nR} \sum_{m=1}^{e^{nR}} \sum_{\bfy \in \set{Y}^n}
W^n\bigl(\bfy\big|f_n(m)\bigr)\,
\card{\set{L}(\bfy)}^\rho =1.
\end{equation}
\item $\Ceo$ is the supremum of all rates $R$ for which there exists a sequence 
of blocklength-$n$ rate-$R$ encoders~$(f_n)_{n\geq 1}$ such that
\begin{equation}
\label{eq:ceo_def}
\lim_{n\to\infty} e^{-nR} \sum_{m=1}^{e^{nR}} \sum_{\bfy:
\card{\set{L}(\bfy)}\geq 2} W^n\bigl(\bfy\big|f_n(m)\bigr) = 0.
\end{equation}
\item $\Coff$ is the supremum of all rates $R$ for which there exists a sequence 
of blocklength-$n$ rate-$R$ encoders~$(f_n)_{n\geq 1}$ such that
\begin{equation}
\label{eq:coff_def}
\lim_{n\to\infty} e^{-nR} \sum_{m=1}^{e^{nR}} \sum_{\bfy\in\set{Y}^n}
W^n\bigl(\bfy\big|f_n(m)\bigr)\,\card{\set{L}(m,\bfy)}^\rho = 1.
\end{equation}
\end{enumerate}
It follows from Gallager's derivation of the Channel Coding
Theorem~\cite[Ch.\ 5]{gallager1968information} that the Shannon capacity $C$ can be achieved by
strict ML-decoding, i.e., by a decoder that either produces the unique message of
maximum likelihood (if there is one) or erases
(otherwise). Consequently, we may define~$C$ in terms of 
the list $\set{L}(m,\bfy)$ as follows.
\begin{enumerate}
\item[4.] $C$ is the supremum of all rates $R$ for which there exists a sequence
of blocklength-$n$ rate-$R$ encoders~$(f_n)_{n\geq 1}$ such that
\begin{equation}
\label{eq:c_def}
\lim_{n\to\infty} e^{-nR} \sum_{m=1}^{e^{nR}} \sum_{\bfy:
\card{\set{L}(m,\bfy)}\geq 2} W^n\bigl(\bfy\big|f_n(m)\bigr) = 0.
\end{equation}
\end{enumerate}
(The above definitions remain unchanged when the average over the
messages is replaced with the maximum. This follows from a standard expurgation
argument.)

To extend the above definitions to channels with feedback, we replace $f$ with an $n$-tuple
$(f^{(1)},\ldots,f^{(n)})$, where 
\begin{equation}
\label{eq:feedback_encoder}
f^{(i)} \colon \{1,\ldots,e^{nR}\} \times \set{Y}^{i-1} \to \set{X},\quad
i=1,\ldots,n.
\end{equation}
(By convention, $\set{Y}^0$ contains only the empty tuple.)
In this case, sending the $m$-th message induces on $\set{Y}^n$ the distribution
\begin{equation}
\prod_{i=1}^n W\bigl(y_i\big|f^{(i)}(m, y^{i-1})\bigr),\quad \bfy\in\set{Y}^n.
\end{equation}
The definitions of $\set{L}(\bfy)$, $\set{L}(m,\bfy)$, $\Calf$, $\Ceof$, and $\Cofff$ are
analogous to their no-feedback counterparts. 

\subsection{Bounds---Old and New}
We begin with some known lower bounds on $\Cal$ and~$\Ceo$. 
Forney~\cite{forney1968exponential} showed that
\begin{equation}
\label{eq:forney_ceo}
\Ceo \geq \max_{P} - \sum_{y\in\set{Y}} (PW)(y) \log P\bigl(\set{X}(y)\bigr)
\end{equation}
and
\begin{equation}
\label{eq:forney_cal}
\Calone \geq \max_{P} -\log \sum_{y\in\set{Y}} (PW)(y) P\bigl(\set{X}(y)\bigr),
\end{equation}
where the maxima are over all PMFs on $\set{X}$.
Forney's bounds can be derived using
standard random coding where each component of each codeword is drawn
independently according to a PMF~$P$. 
In Section~\ref{sec:forney} we prove the following generalization
of~\eqref{eq:forney_cal} (also using standard random coding).
\begin{proposition}
\label{prop:forney_cal_rho}
For any $\rho>0$,
\begin{equation}
\label{eq:forney_cal_rho}
\Cal \geq \max_P -\rho^{-1} \log \sum_{y\in\set{Y}} (PW)(y)
P\bigl(\set{X}(y)\bigr)^\rho.
\end{equation}
\end{proposition}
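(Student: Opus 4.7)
The plan is to apply standard i.i.d.\ random coding. Fix a PMF~$P$ on~$\set{X}$ and a rate $R$ strictly smaller than the right-hand side of~\eqref{eq:forney_cal_rho}, and draw $M=\lceil e^{nR}\rceil$ codewords $\bfX_1,\ldots,\bfX_M$ independently from~$P^n$. Since the normalized $\rho$-th listsize moment $e^{-nR}\sum_m\sum_\bfy W^n(\bfy|\bfx_m)\card{\set{L}(\bfy)}^\rho$ is always at least one (the transmitted message itself lies in $\set{L}(\bfy)$ whenever the summand is nonzero, and $\sum_\bfy W^n(\bfy|\bfx_m)=1$), it suffices to show that its ensemble average tends to one; a standard pigeonhole argument then yields a deterministic sequence of codes satisfying~\eqref{eq:cal_def}, and maximizing over~$P$ completes the proof.

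By the message-symmetry of the ensemble, the expected normalized listsize moment equals $\sum_\bfy \bigE{W^n(\bfy|\bfX_1)\card{\set{L}(\bfy)}^\rho}$. Conditioning on $\bfX_1=\bfx$ with $W^n(\bfy|\bfx)>0$, we have $\card{\set{L}(\bfy)}=1+Z_\bfy$, where $Z_\bfy\sim\mathrm{Binomial}(M-1,\,p_\bfy)$ with $p_\bfy\eqdef P^n\bigl(\set{X}^n(\bfy)\bigr)$ and is independent of~$\bfX_1$. The crux of the argument is the bound
\[
\bigE{(1+Z_\bfy)^\rho} \leq 1 + c_\rho(Mp_\bfy)^\rho + c_\rho(Mp_\bfy)\cdot 1\{\rho>1\},
\]
for some constant $c_\rho$ depending only on~$\rho$. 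For $\rho\in(0,1]$ this follows from $(1+z)^\rho\leq 1+z^\rho$ ($z\geq 0$) combined with Jensen's inequality applied to the concave map $z\mapsto z^\rho$, which gives $\bigE{(1+Z_\bfy)^\rho}\leq 1+(E[Z_\bfy])^\rho\leq 1+(Mp_\bfy)^\rho$. For $\rho>1$, start from $(1+z)^\rho\leq 1+(2z)^\rho$ (valid for integer $z\geq 0$, since $1+z\leq 2z$ when $z\geq 1$) and bound $\bigE{Z_\bfy^\rho}$ by $c_\rho(Mp_\bfy+(Mp_\bfy)^\rho)$ via a two-regime analysis of the Binomial: for $Mp_\bfy\leq 1$ use the Poisson-type bound $\Pr[Z_\bfy=k]\leq(Mp_\bfy)^k/k!$; for $Mp_\bfy>1$ combine Lyapunov's inequality $\|Z_\bfy\|_\rho\leq\|Z_\bfy\|_{\lceil\rho\rceil}$ with a standard bound on the $\lceil\rho\rceil$-th factorial moment.

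Summing the per-$\bfy$ bound against $(PW)^n(\bfy)$ and exploiting the factorization $\sum_\bfy(PW)^n(\bfy)\,P^n(\set{X}^n(\bfy))^k = \bigl(\sum_y(PW)(y)\,P(\set{X}(y))^k\bigr)^n$ for $k\in\{1,\rho\}$, the ensemble-averaged listsize moment is at most
\[
1 + c_\rho\, e^{n\rho(R-R_\rho)} + c_\rho\, e^{n(R-R_1)}\cdot 1\{\rho>1\},
\]
where $R_\rho\eqdef -\rho^{-1}\log\sum_y(PW)(y)P(\set{X}(y))^\rho$ and $R_1\eqdef -\log\sum_y(PW)(y)P(\set{X}(y))$. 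By Jensen's inequality applied to the convex map $x\mapsto x^\rho$ for $\rho\geq 1$, one has $R_\rho\leq R_1$; hence the choice $R<R_\rho$ forces both exponentials to vanish, and the ensemble-average listsize moment tends to one as claimed.

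The main obstacle is the bound on $\bigE{Z_\bfy^\rho}$ for $\rho>1$: a naive application of Minkowski's inequality yields only $\bigE{Z_\bfy^\rho}\leq M^\rho p_\bfy$, producing a factor of $p_\bfy$ rather than $p_\bfy^\rho$ after summation and thus a threshold governed by the weaker rate $R_1$ instead of the desired $R_\rho$. The two-regime bound recovers the correct $(Mp_\bfy)^\rho$ behavior in the concentrated regime $Mp_\bfy>1$, while leaving only a subordinate $Mp_\bfy$ correction from the small-mean regime that is harmless precisely because $R<R_\rho\leq R_1$ when $\rho\geq 1$.
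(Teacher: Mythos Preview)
Your proof is correct, and it follows the same overall strategy as the paper's argument in Section~\ref{sec:forney}: i.i.d.\ random coding, reduction to the moments of a binomial count $Z_\bfy$, a two-regime moment estimate, and the observation that for $\rho\geq 1$ the additional linear term $Mp_\bfy$ is harmless because $R_\rho\leq R_1$.

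The execution, however, differs in a useful way. The paper groups the output sequences by their type, rewrites $\Pr(Z_m(\bfy)=1)=e^{-nF(P_\bfy)}$, and passes through the dual characterization
\[
-\rho^{-1}\log\sum_{y}(PW)(y)P(\set{X}(y))^\rho=\min_Q\bigl(F(Q)+\rho^{-1}D(Q\|PW)\bigr),
\]
then invokes Lemma~\ref{lem:binomial} and the bound $(PW)^n(T_Q)\leq e^{-nD(Q\|PW)}$ type by type. You instead exploit directly the product factorization
\[
\sum_{\bfy}(PW)^n(\bfy)\,P^n(\set{X}^n(\bfy))^k=\Bigl(\sum_y (PW)(y)P(\set{X}(y))^k\Bigr)^n,
\]
which short-circuits both the type bookkeeping and the dual form. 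Your binomial estimate is also lighter: for $\rho\leq 1$ it is just subadditivity plus Jensen, and for $\rho>1$ the Poisson tail plus Lyapunov/factorial-moment argument replaces the MGF manipulations in the paper's Lemma~\ref{lem:binomial}. The paper's type-based route, on the other hand, dovetails more naturally with the constant-composition lower bound~\eqref{eq:const_comp_lb_cal} and with the asymptotic-tightness argument for~\eqref{eq:forney_cal_rho_multi} that follows in the same section.
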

Neither~\eqref{eq:forney_ceo} nor~\eqref{eq:forney_cal_rho} is tight 
in general.\footnote{An example where they are not tight is the
Z-channel; see~\cite[Example~4.1]{telatar1992phd}.}
Tighter bounds can be
derived using random coding over constant composition 
codes~\cite{telatar1992phd, ahlswede1996erasure, csiszar1995channel, telatar1997zero}: The
corresponding bound on~$\Ceo$ is 
\begin{equation}
\label{eq:const_comp_lb_ceo}
\Ceo \geq \max_{P} \min_{\substack{V\ll W\\PV=PW}} I(P,V),
\end{equation}
where the minimization is over all auxiliary channels $V(y|x)$, $x\in \set{X}$,
$y\in \set{Y}$, such that $V(y|x) = 0$ whenever $W(y|x)=0$ (i.e., $V\ll W$) and
such that the induced output distribution under $P$ is the same as under
the true channel $W$ (i.e., $PV=PW$). The corresponding bound on $\Cal$ is 
\begin{equation}
\label{eq:const_comp_lb_cal}
\Cal \geq \max_{P} \min_{\substack{V,V'\\V\ll W\\PV=PV'}} I(P,V)+
\rho^{-1} D(V'||W|P).
\end{equation}
It is shown in~\cite{ahlswede1996erasure} and \cite{telatar1992phd} 
that~\eqref{eq:const_comp_lb_ceo} is
at least as tight as~\eqref{eq:forney_ceo}. Appendix~\ref{appendix:D}
contains a proof that~\eqref{eq:const_comp_lb_cal} is
at least as tight as~\eqref{eq:forney_cal_rho}. (This result may not have
appeared in print before.) However, the weaker bounds are
simpler because no minimization over auxiliary channels is required.

We can tighten any of the above lower bounds by applying them to the 
channel~$W^n(\bfy|\bfx)$, $\bfx\in\set{X}^n$, $\bfy\in\set{Y}^n$ and normalizing the result by~$1/n$. 
Indeed, any blocklength-$\nu$ \mbox{rate-$R$} code for the channel $W^n$ is a
\mbox{blocklength-$n \nu$} \mbox{rate-$R/n$} code for the 
channel $W$.\footnote{For blocklengths that are not divisible by $n$, we can 
interpolate as follows. Suppose the blocklength is $n\nu + \ell$ 
where $1\leq \ell < n$. Then we use a good code for~$W^n$ of blocklength~$\nu$ 
and rate $R$, and we extend it to a blocklength-$(n\nu+\ell)$ code for $W$ 
by padding $\ell$ dummy symbols. 
Accordingly, the last~$\ell$ output symbols  are ignored at the receiver. 
The rate of the resulting code for~$W$ is $\nu R/(n \nu +\ell)$,
and this approaches $R/n$ as $\nu\to\infty$.}

To give a concrete example, the $n$-letter version of~\eqref{eq:forney_cal_rho} is
\begin{equation}
\label{eq:forney_cal_rho_multi}
\Cal \geq \frac{1}{n}\max_P -\rho^{-1} \log \sum_{\bfy \in \set{Y}^n}
(PW^n)(\bfy) P\bigl(\set{X}^n(\bfy)\bigr)^\rho,
\end{equation}
where the maximum is over all PMFs on $\set{X}^n$. 
A numerical evaluation in~\cite{ahlswede1996erasure} of the one and two-letter versions
of~\eqref{eq:const_comp_lb_ceo} for a specific channel suggests that a
strict improvement is possible, and thus that~\eqref{eq:const_comp_lb_ceo} is
not always tight.
In~\cite{ahlswede1996erasure} and~\cite{telatar1997zero} it is shown that the
$n$-letter versions of~\eqref{eq:const_comp_lb_ceo}
and~\eqref{eq:const_comp_lb_cal} become tight as $n\to\infty$. 
In~\cite{bunte2014zero} it is shown that also the $n$-letter version of the weaker
bound~\eqref{eq:forney_ceo} becomes tight as $n\to\infty$, and that this is true
even when the input PMFs are restricted to be uniform over their support. 
In Section~\ref{sec:forney} we prove a similar statement for the $n$-letter
version of~\eqref{eq:forney_cal_rho}. 

The aforementioned limits are not computable in
general, but they can be useful nonetheless. For example,
in~\cite{bunte2014zero} the multiletter version of~\eqref{eq:forney_ceo} 
is used to derive an upper bound on $\Ceo$ for the
class of $\epsilon$-noise channels~(see below).

We now discuss upper bounds on $\Cal$ and $\Ceo$. Specifically, recall~\eqref{eq:cal_leq_coff} and
the rightmost inequality in~\eqref{eq:inequalities}:
\begin{equation}
\label{eq:713_bounds}
\Ceo \leq C\quad \text{and}\quad \Cal \leq \Coff.
\end{equation}
For a large class of channels the bounds in~\eqref{eq:713_bounds}
are tight:
\begin{proposition}
\label{prop:factorize}
The inequalities in~\eqref{eq:713_bounds} hold with equality if there exist functions
$A\colon\set{X}\to(0,\infty)$ and $B\colon\set{Y}\to(0,\infty)$ such that
\begin{equation}
\label{eq:factorize}
W(y|x) = A(x)B(y),\quad\text{if $W(y|x)>0$.}
\end{equation}
\end{proposition}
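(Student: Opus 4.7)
The plan is to exploit the factorization $W(y|x)=A(x)B(y)$ on constant\hyp{}composition codes, where the product $\prod_{i=1}^n A(x_i)$ depends only on the type of $\bfx$: if $\bfx$ has type $P$, then $\prod_i A(x_i) = \prod_{x\in\set{X}} A(x)^{nP(x)}$, a value I will call $A_P$. Consequently, any two codewords of the same type that can both produce a given output $\bfy$ do so with identical likelihood $A_P \prod_i B(y_i)$. This is the one structural fact that drives the whole proof.

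Using this, I would fix a constant-composition code of type $P$ and any $m,\bfy$ with $W^n(\bfy|\bfx_m)>0$. For $\tilde m \in \set{L}(\bfy)$, the observation above gives $W^n(\bfy|\bfx_{\tilde m}) = A_P\prod_i B(y_i) = W^n(\bfy|\bfx_m)$, so $\tilde m \in \set{L}(m,\bfy)$. Conversely, $\tilde m \in \set{L}(m,\bfy)$ forces $W^n(\bfy|\bfx_{\tilde m}) \geq W^n(\bfy|\bfx_m) > 0$, which in turn forces $\tilde m \in \set{L}(\bfy)$. Hence $\set{L}(m,\bfy) = \set{L}(\bfy)$ on the support of $W^n(\,\cdot\,|\bfx_m)$, and therefore the sums defining~\eqref{eq:cal_def} and~\eqref{eq:coff_def} (resp.\ \eqref{eq:ceo_def} and~\eqref{eq:c_def}) are identical for constant-composition codes on such a channel.

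Finally, I would invoke the standard fact that both $C$ and $\Coff$ can be achieved by sequences of constant-composition codes: $C$ by classical type-class achievability of the Shannon capacity, and $\Coff$ either by the random-coding argument in Appendix~\ref{appendix:cutoff} specialized to a random constant-composition ensemble, or, more crudely, by passing any near-optimal code to its largest constant-composition subcode at a vanishing rate penalty since the number of types is polynomial in $n$. Applied to such codes, the identities from the previous paragraph give $\Cal \geq \Coff$ and $\Ceo \geq C$, which combined with~\eqref{eq:713_bounds} yield equality in both.

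The only genuinely technical point in the plan is verifying that constant-composition codes suffice to achieve $\Coff$; the rest is a short direct identification of the two decoding lists under the factorization hypothesis.
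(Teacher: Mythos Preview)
Your proposal is correct and follows essentially the same line as the paper's own proof: both exploit that on a constant-composition code the factorization forces all codewords with positive likelihood to have \emph{equal} likelihood, so $\set{L}(\bfy)=\set{L}(m,\bfy)$ on the support of $W^n(\cdot\,|\,\bfx_m)$, and both then reduce to constant-composition codes via the polynomial number of types. The paper phrases the reduction as ``every code has a constant-composition subcode of exponentially the same size,'' which is exactly your ``crude'' option.
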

\begin{proof}
The hypothesis implies that the lists $\set{L}(\bfy)$ and $\set{L}(m,\bfy)$ coincide whenever
$W^n(\bfy|f(m))>0$ and constant composition codes are used.\footnote{Constant
composition codes comprise codewords of the same type~\cite[p.\ 144]{csiszar2011information}.} 
Indeed, observe that if $\bfx$ and $\bfx'$ are codewords
of the same type, then
\begin{align}
W^n(\bfy|\bfx) &= \bigg(\prod_{i=1}^n A(x_i)\biggr) \biggl(\prod_{j=1}^n
B(y_j)\biggr)\notag\\
&=\biggl(\prod_{i=1}^n A(x'_i)\biggr)\biggl(\prod_{j=1}^n B(y_j)\biggr)\notag\\
&=W^n(\bfy|\bfx'),
\end{align}
where the first and last equality hold 
provided that $W^n(\bfy|\bfx)>0$ and $W^n(\bfy|\bfx')>0$. Thus, all codewords
with positive likelihood have the same likelihood. 

Since every code has a constant composition subcode of exponentially the same
size (there are only polynomially many types), 
the proposition follows by comparing~\eqref{eq:cal_def} and~\eqref{eq:coff_def},
and~\eqref{eq:ceo_def} and~\eqref{eq:c_def}. 
\end{proof}
Proposition~\ref{prop:factorize} is essentially due to Csisz\'ar and
Narayan~\cite{csiszar1995channel} (they considered only $\Ceo$), who also observed that all channels with acyclic
channel graphs\footnote{The channel graph of a DMC $W$ is the undirected bipartite graph whose two
independent sets are $\set{X}$ and $\set{Y}$, and where there is an edge between $x$
and $y$ if $W(y|x)>0$. It is customary to draw the inputs on the left and the
outputs on the right, and to label the edges with the transition probabilities.
Acyclic means that we cannot find distinct inputs $x_1,\ldots,x_n$ and distinct
outputs $y_1,\ldots,y_n$ such that $W(y_i|x_i)>0$ and $W(y_{i}|x_{i+1})>0$ for all
$i\in \{1,\ldots,n\}$ where $n\geq 2$ and $x_{n+1} = x_1$.} can be factorized as in~\eqref{eq:factorize}. 
This important special case had been proved earlier by Pinsker and
Sheverdyaev~\cite{pinsker1970transmission} for $\Ceo$, and by
Telatar~\cite{telatar1997zero} for $\Cal$. (An intermediate result was
obtained by Telatar in~\cite[Sec.\ 4.3]{telatar1992phd}.) Notable examples of channels
with acyclic channel graphs are the Z-channel and the binary erasure channel. 
In~\cite{csiszar1995channel} it is conjectured that a necessary
condition for $\Ceo=C$ is that a factorization of the channel law in the sense
of~\eqref{eq:factorize} hold on
some capacity-achieving subset of inputs (which is clearly also sufficient). 


We can sometimes tighten the bounds in~\eqref{eq:713_bounds} by judiciously combining output
symbols: 
\begin{proposition}
\label{prop:combine}
If $y,y' \in \set{Y}$ are such that for every $x\in \set{X}$, $W(y|x)>0$ if, and
only if, $W(y'|x)>0$, then~$\Ceo$ and~$\Cal$ are unaltered when $y$ and $y'$ are combined into a
single output symbol distinct from all other output symbols. 
\end{proposition}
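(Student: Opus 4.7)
The plan is to show that for any blocklength-$n$ encoder $f$, the sums appearing in~\eqref{eq:cal_def} and~\eqref{eq:ceo_def} are identical for the original channel $W$ and for the merged channel $\tilde{W}$, and to conclude that the two supremum are equal.

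First I would introduce the merging map $\phi\colon \set{Y}\to\tilde{\set{Y}}$ that sends $y$ and $y'$ to the new symbol $y^\star$ and is the identity on all other letters, and extend it componentwise to $\phi\colon\set{Y}^n\to\tilde{\set{Y}}^n$. By construction,
\begin{equation}
\tilde{W}^n(\tilde{\bfy}\,|\,\bfx) \;=\; \sum_{\bfy:\,\phi(\bfy)=\tilde{\bfy}} W^n(\bfy\,|\,\bfx), \quad \bfx\in\set{X}^n,\ \tilde{\bfy}\in\tilde{\set{Y}}^n.
\end{equation}
The hypothesis that $W(y|x)>0 \Leftrightarrow W(y'|x)>0$ means that the zero pattern of $\tilde{W}$ matches that of $W$: for every position $i$, $\tilde{W}(\phi(y_i)|x_i)>0$ if and only if $W(y_i|x_i)>0$. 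Hence $W^n(\bfy|\bfx)>0$ if and only if $\tilde{W}^n(\phi(\bfy)|\bfx)>0$, and consequently $\set{X}^n(\bfy)=\tilde{\set{X}}^n(\phi(\bfy))$ as subsets of $\set{X}^n$.

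The key consequence is that the decoding lists are preserved under $\phi$: for any code $f$ and any $\bfy$,
\begin{equation}
\set{L}(\bfy) \;=\; \tilde{\set{L}}\bigl(\phi(\bfy)\bigr),
\end{equation}
where $\tilde{\set{L}}$ denotes the list computed with respect to $\tilde{W}^n$. Using this identity together with the marginalization above, for every message $m$,
\begin{align}
\sum_{\tilde{\bfy}}\tilde{W}^n\bigl(\tilde{\bfy}\big|f(m)\bigr)\,\bigcard{\tilde{\set{L}}(\tilde{\bfy})}^\rho
&= \sum_{\tilde{\bfy}}\sum_{\bfy:\phi(\bfy)=\tilde{\bfy}} W^n\bigl(\bfy\big|f(m)\bigr)\,\bigcard{\tilde{\set{L}}(\phi(\bfy))}^\rho\notag\\
&= \sum_{\bfy}W^n\bigl(\bfy\big|f(m)\bigr)\,\bigcard{\set{L}(\bfy)}^\rho.
\end{align}
An analogous computation, with $\card{\set{L}(\bfy)}^\rho$ replaced by $1\{\card{\set{L}(\bfy)}\geq 2\}$, handles the erasure-probability sum in~\eqref{eq:ceo_def}.

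Since the two defining sums agree for every code and every rate $R$, the sets of rates satisfying the limiting conditions in~\eqref{eq:cal_def} and~\eqref{eq:ceo_def} are the same for $W$ and $\tilde{W}$, so their suprema---namely $\Cal$ (resp.\ $\Ceo$) for $W$ and their counterparts for $\tilde{W}$---coincide. The only subtle point, and the step that the hypothesis on $y$ and $y'$ is tailored for, is the list-preservation equality $\set{L}(\bfy)=\tilde{\set{L}}(\phi(\bfy))$; without the stated biconditional, merging would in general change which messages can produce a given (merged) output sequence. Everything else reduces to rearranging the sum over $\bfy$ as an iterated sum over the fibers of $\phi$.
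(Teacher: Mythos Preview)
Your proof is correct and follows essentially the same approach as the paper: the paper's one-line proof simply states the key observation that $\set{L}(\bfy)$ is unchanged when any occurrence of $y$ in $\bfy$ is replaced by $y'$ (or vice versa), which is exactly your list-preservation identity $\set{L}(\bfy)=\tilde{\set{L}}(\phi(\bfy))$. You have just written out in full the marginalization and summation details that the paper leaves implicit.
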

\begin{proof}
The set $\set{L}(\bfy)$ remains unchanged when any occurrence of $y$ in $\bfy$
is replaced with $y'$, or vice versa. 
\end{proof}
Using Proposition~\ref{prop:combine} we can also reduce the size of the output
alphabet to at most $2^{\card{\set{X}}}-1$ symbols while preserving~$\Ceo$ and
$\Cal$ (there are $2^{\card{\set{X}}}-1$ nonempty subsets of inputs).
In particular, every binary-input DMC can be reduced to an asymmetric binary
erasure channel (possibly with some transition probabilities equal to zero). 
And since the channel graph of the latter is acyclic, we can apply
Proposition~\ref{prop:factorize} to it. In this way we can determine~$\Ceo$
and~$\Cal$ for any binary-input channel.

%

\subsection{Relationships and Analogies}
There is a remarkable similarity between the way $\Coff$ relates to $C$ and the
way $\Cal$ relates to $\Ceo$. The following two propositions illustrate this.
The first is well-known~\cite{gallager1968information}.
\begin{proposition}
\label{prop:relationship_c_cutoff}
For every $\rho>0$, 
\begin{enumerate}
\item $\Coff \leq C$;
\item $\Coff>0$ $\iff$ $C>0$ $\iff$ there exist $x,x',y$ such that $W(y|x)\neq
W(y|x')$; \label{part:relationship_c_cutoff_2}
\item $\lim_{\rho\to0} \Coff = C$; \label{part:relationship_c_cutoff_3}
\item \label{part:relationship_c_cutoff_4} and $\lim_{\rho\to\infty} \Coff = -
\log \pi_0$, where
\begin{equation}
\label{eq:pi_0}
\pi_0=\min_{P} \max_{y\in\set{Y}} P\bigl(\set{X}(y)\bigr).
\end{equation}
\end{enumerate}
\end{proposition}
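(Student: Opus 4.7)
The plan is to deduce all four parts from the identity $\Coff = \max_P E_0(\rho, P)/\rho$ recorded in~\eqref{eq:225_gallager}, together with standard analytic properties of Gallager's function: $E_0(\cdot, P)$ is concave with $E_0(0, P) = 0$ and $\partial E_0(\rho, P)/\partial \rho|_{\rho = 0^+} = I(P, W)$, so the ratio $\rho \mapsto E_0(\rho, P)/\rho$ is nonincreasing on $(0, \infty)$ and tends to $I(P, W)$ as $\rho \downarrow 0$. Part~1 is then immediate: for every $\rho > 0$ and every PMF $P$, $E_0(\rho, P)/\rho \leq I(P, W)$, and maximizing over $P$ yields $\Coff \leq C$.

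For Part~3, I would combine this monotonicity with the compactness of the probability simplex and the continuity of $P \mapsto I(P, W)$. A Dini-type argument---a monotone pointwise limit of continuous functions on a compact set to a continuous limit function is uniform---allows exchanging $\max_P$ with $\lim_{\rho \downarrow 0}$ to conclude $\lim_{\rho \downarrow 0} \Coff = \max_P I(P, W) = C$.

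For Part~2, I would apply Jensen's inequality to the concave map $t \mapsto t^{1/(1+\rho)}$ to obtain $\sum_x P(x) W(y|x)^{1/(1+\rho)} \leq (PW)(y)^{1/(1+\rho)}$. Raising to the $(1+\rho)$-th power and summing over $y$ shows $E_0(\rho, P) \geq 0$, with equality iff $W(\cdot|x)$ is constant on $\supp(P)$. Hence $\max_P E_0(\rho, P) > 0$ iff some two rows of $W$ differ, which is exactly the standard criterion for $C > 0$.

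The main work is in Part~4. Writing $g_\rho(y, P) = \sum_x P(x) W(y|x)^{1/(1+\rho)}$, the pointwise convergence $W(y|x)^{1/(1+\rho)} \to 1\{W(y|x) > 0\}$ implies $g_\rho(y, P) \to P(\set{X}(y))$ uniformly in $P$. Since $e^{-E_0(\rho, P)} = \sum_y g_\rho(y, P)^{1+\rho}$ is sandwiched between $(\max_y g_\rho(y, P))^{1+\rho}$ and $\card{\set{Y}}$ times that quantity, dividing $-\log$ by $\rho$ and passing to the limit yields $E_0(\rho, P)/\rho \to -\log \max_y P(\set{X}(y))$. The main obstacle is justifying the exchange of $\max_P$ with this limit; I would do so by upgrading to uniform convergence in $P$ on the simplex, using that $P \mapsto \max_y P(\set{X}(y))$ is continuous and bounded below by $1/\card{\set{Y}}$ (since $\sum_y P(\set{X}(y)) \geq 1$ for every $P$). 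With uniformity in hand, $\lim_{\rho \to \infty} \Coff = -\log \min_P \max_y P(\set{X}(y)) = -\log \pi_0$.
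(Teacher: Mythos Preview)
Your proposal is correct. Parts~1--3 proceed along the same lines the paper invokes (concavity of $\rho\mapsto E_0(\rho,P)$, $E_0(0,P)=0$, $\partial_\rho E_0(0,P)=I(P,W)$, and the Jensen argument for positivity), so there is no substantive difference there; your Dini step for Part~3 is a clean way to justify the interchange that the paper leaves implicit by citing Gallager.

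Part~4 is where your route genuinely differs from the paper's. The paper (Appendix~\ref{appendix:gallager}) first establishes the pointwise limit $E_0(\rho,P)/\rho\to-\log\max_y P(\set{X}(y))$ by a case split: if some $y_0$ satisfies $P(\set{X}(y_0))=1$, it bounds $E_0$ above by a single term and uses L'Hospital to show the limit is finite, hence the ratio tends to~$0$; otherwise $E_0(\rho,P)\to\infty$, and another application of L'Hospital on $\partial_\rho E_0$ gives the limit. The interchange of $\max_P$ and $\lim_{\rho\to\infty}$ is then handled by a bespoke minimax lemma (Lemma~\ref{lem:minimax}) exploiting monotonicity in~$\rho$ and compactness. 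Your argument bypasses all of this: the sandwich
\[
M_\rho(P)^{1+\rho}\le e^{-E_0(\rho,P)}\le \card{\set{Y}}\,M_\rho(P)^{1+\rho},\qquad M_\rho(P)=\max_y\sum_x P(x)W(y|x)^{1/(1+\rho)},
\]
together with the uniform (in~$P$) convergence $M_\rho(P)\to\max_y P(\set{X}(y))\ge 1/\card{\set{Y}}$, delivers both the pointwise limit and the uniformity needed to swap $\max_P$ and $\lim_\rho$ in one stroke. This is more elementary---no L'Hospital, no case analysis, no auxiliary minimax lemma---at the modest cost of checking the uniform bound $M_\rho(P)\ge 1/\card{\set{Y}}$, which you correctly derive from $\sum_y g_\rho(y,P)\ge\sum_{x,y}P(x)W(y|x)=1$.
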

\begin{proof}
All assertions follow from~\eqref{eq:225_gallager}, the fact that $C=\max_P
I(P,W)$, and the properties of mutual information and Gallager's $E_0$ function
(see~\cite[Thm.~5.6.3]{gallager1968information} and Appendix~\ref{appendix:gallager}). 
\end{proof}

Proposition~\ref{prop:relationship_c_cutoff} remains (almost) true when $\Coff$ is
replaced with $\Cal$, and $C$ is replaced with $\Ceo$. 
\begin{proposition}
\label{prop:list_erasure_relation}
For every $\rho>0$,
\begin{enumerate}
\item \label{item:prop:list_erasure_relation_1} $\Cal \leq \Ceo$;
\item \label{item:prop:list_erasure_relation_2} $\Cal>0$ $\iff$ $\Ceo>0$ $\iff$ there exist $x,x',y$ such that
$W(y|x)>W(y|x')=0$; 
\item \label{item:prop:list_erasure_relation_3} $\lim_{\rho\to0} \Cal = \Ceo$; 
\item \label{item:prop:list_erasure_relation_4} and $\lim_{\rho\to\infty} \Cal =
-\log \pi_0$.
\end{enumerate}
\end{proposition}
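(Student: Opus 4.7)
The plan is to tackle the four parts in order, drawing on Forney's lower bound~\eqref{eq:forney_cal_rho} and its $n$-letter version~\eqref{eq:forney_cal_rho_multi}, together with the inequality $\Cal\leq\Coff$ and the cutoff-rate asymptotics from Proposition~\ref{prop:relationship_c_cutoff}. For part~\ref{item:prop:list_erasure_relation_1}, the key observation is that $|\set{L}(\bfY)|\geq 1$ almost surely, while $|\set{L}(\bfY)|^\rho\geq 2^\rho$ whenever $|\set{L}(\bfY)|\geq 2$, so
\begin{equation*}
E[|\set{L}(\bfY)|^\rho]\geq 1+(2^\rho-1)\,P[|\set{L}(\bfY)|\geq 2];
\end{equation*}
since $2^\rho>1$, any code sequence with $E[|\set{L}(\bfY)|^\rho]\to 1$ also has $P[|\set{L}(\bfY)|\geq 2]\to 0$, proving $\Cal\leq\Ceo$.

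For part~\ref{item:prop:list_erasure_relation_2}, I would argue the three-way equivalence cyclically. If the stated condition on $W$ fails, then for every $y$ the set $\set{X}(y)$ is either empty or all of $\set{X}$, so every output sequence $\bfy$ of positive likelihood is compatible with every codeword; this forces $|\set{L}(\bfy)|=e^{nR}$ whenever it is nonempty, hence $\Ceo=\Cal=0$ for all $R>0$. Conversely, if $x,x',y$ satisfy the stated condition, then taking $P$ uniform on $\{x,x'\}$ yields $(PW)(y)>0$ and $P(\set{X}(y))=1/2<1$, so $\sum_{y'}(PW)(y')P(\set{X}(y'))^\rho<1$, and \eqref{eq:forney_cal_rho} gives $\Cal>0$. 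Part~\ref{item:prop:list_erasure_relation_1} then upgrades this to $\Ceo>0$.

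For part~\ref{item:prop:list_erasure_relation_3}, the upper bound $\limsup_{\rho\to 0}\Cal\leq\Ceo$ is immediate from part~\ref{item:prop:list_erasure_relation_1}. For the matching lower bound, fix $n$ and a PMF $P$ on $\set{X}^n$; applying \eqref{eq:forney_cal_rho_multi} and sending $\rho\to 0$ via L'H\^opital gives
\begin{equation*}
\liminf_{\rho\to 0}\Cal\geq -\frac{1}{n}\sum_{\bfy}(PW^n)(\bfy)\log P(\set{X}^n(\bfy)),
\end{equation*}
which is the integrand of the $n$-letter version of~\eqref{eq:forney_ceo}. Maximizing over $P$ shows $\liminf_{\rho\to 0}\Cal$ dominates the $n$-letter Forney bound on $\Ceo$ for every $n$, and letting $n\to\infty$ combined with the convergence of that bound to $\Ceo$ (from~\cite{bunte2014zero}) closes the gap. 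For part~\ref{item:prop:list_erasure_relation_4}, the upper bound follows from $\Cal\leq\Coff$ and part~\ref{part:relationship_c_cutoff_4} of Proposition~\ref{prop:relationship_c_cutoff}; for the lower bound, since $(PW)(y)=0$ iff $P(\set{X}(y))=0$, the $\rho\to\infty$ limit of the right side of~\eqref{eq:forney_cal_rho} evaluated at each $P$ equals $-\log\max_y P(\set{X}(y))$, and minimizing over $P$ recovers $-\log\pi_0$.

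The main obstacle is part~\ref{item:prop:list_erasure_relation_3}: the direct estimate $E[|\set{L}|^\rho]\leq 1+e^{nR\rho}\,P[|\set{L}|\geq 2]$ yields nothing without exponential decay of the erasure probability, which is not available from the bare definition of $\Ceo$. The detour through the $n$-letter Forney bound---pointwise in $\rho$, with $\rho\to 0$ limit recovering the $n$-letter Forney bound on $\Ceo$---is what bridges this gap and allows the single-letter limit to be taken cleanly.
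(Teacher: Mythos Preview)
Your argument is correct throughout, but Part~\ref{item:prop:list_erasure_relation_3} follows a genuinely different route from the paper's. The paper does \emph{not} avoid the exponential-decay obstacle you flag; instead it invokes~\cite[Theorem~1]{ahlswede1996erasure} to assert that for every $R<\Ceo$ there is a code sequence with erasure probability at most $e^{-n\delta}$ for some $\delta>0$, and then the direct estimate
\[
\E{\card{\set{L}(\bfY)}^\rho}\leq 1+e^{n\rho R}e^{-n\delta}
\]
suffices once $\rho<\delta/R$. Your detour through the $n$-letter Forney bound~\eqref{eq:forney_cal_rho_multi}, its $\rho\to0$ limit (which is indeed the $n$-letter version of~\eqref{eq:forney_ceo}), and the asymptotic tightness result of~\cite{bunte2014zero} is a legitimate alternative that trades one external citation for another. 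The paper's approach is shorter and more self-contained relative to the classical erasure-exponent literature; yours stays entirely within the Forney-bound machinery developed elsewhere in the paper and avoids any appeal to error-exponent results.

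For Part~\ref{item:prop:list_erasure_relation_2} you also differ slightly: the paper reduces to a Z-channel and invokes Proposition~\ref{prop:factorize} to get $\Cal=\Coff>0$, whereas you plug $P$ uniform on $\{x,x'\}$ directly into~\eqref{eq:forney_cal_rho}. Both are fine. Parts~\ref{item:prop:list_erasure_relation_1} and~\ref{item:prop:list_erasure_relation_4} match the paper essentially verbatim (your Part~\ref{item:prop:list_erasure_relation_4} lower bound via the $\rho\to\infty$ limit of~\eqref{eq:forney_cal_rho} at fixed $P$ is a minor rephrasing of the paper's observation that bounding the average by the maximum gives $\Cal\geq -\log\pi_0$ for \emph{all} $\rho>0$).
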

\begin{proof}
Part~\ref{item:prop:list_erasure_relation_1} follows from Markov's inequality:
\begin{align}
\Pr\bigl(\card{\set{L}(\bfY)}\geq
2\bigr)&=\Pr\bigl(\card{\set{L}(\bfY)}^\rho-1\geq 2^\rho-1\bigr)\notag\\
&\leq \frac{\E{\card{\set{L}(\bfY)}^\rho}-1}{2^\rho-1},\label{eq:681}
\end{align}
and the right-hand side of~\eqref{eq:681} tends to zero if $\E{\card{\set{L}(\bfY)}^\rho}$
tends to one.

To prove Part~\ref{item:prop:list_erasure_relation_2}, assume that for every
$y\in\set{Y}$, $W(y|x)>0$
for some $x\in \set{X}$ implies $W(y|x')>0$ for all $x'\in\set{X}$. Then $\card{\set{L}(\bfy)}=e^{nR}$ for all 
sequences~$\bfy\in\set{Y}^n$ that can be produced by some (and hence all) messages.
Thus, $\Ceo=\Cal=0$. Conversely, if there exist~$x,x',y$ for 
which~$W(y|x)>W(y|x')=0$, then combine all outputs other than~$y$ into a single
output distinct from $y$, and use only the inputs $x$ and $x'$. This reduces the
channel to a Z-channel with crossover probability $1-W(y|x)$. For the
Z-channel we have by Proposition~\ref{prop:factorize} that $\Ceo= C$ and $\Cal =\Coff$,
where both $C$ and $\Coff$ are positive 
by Proposition~\ref{prop:relationship_c_cutoff} Part~\ref{part:relationship_c_cutoff_2}.

As to Part~\ref{item:prop:list_erasure_relation_3}, 
since $\Cal$ is clearly nonincreasing in $\rho$, the limit exists 
and is upper-bounded by~$\Ceo$ on account of Part~\ref{item:prop:list_erasure_relation_1}. 
On the other hand, it follows from the proof of~\cite[Thereom~1]{ahlswede1996erasure} that for
every rate $R<\Ceo$ the probability that $\card{\set{L}(\bfY)}$ exceeds one can be driven to zero
exponentially in the blocklength, i.e., we can find a sequence of 
blocklength-$n$ rate-$R$ encoders~$(f_n)_{n\geq 1}$
for which this probability is bounded by~$e^{-n\delta}$ for some (possibly very
small)~$\delta>0$.
For this sequence of encoders, 
\begin{align}
&\frac{1}{e^{nR}} \sum_{m=1}^{e^{nR}} \sum_{\bfy\in\set{Y}^n}
W^n\bigl(\bfy\big|f_n(m)\bigr)
\card{\set{L}(\bfy)}^\rho\notag\\
& \leq 1 + e^{n\rho R} \frac{1}{e^{nR}} \sum_{m=1}^{e^{nR}}
\sum_{\bfy:\card{\set{L}(\bfy)}\geq 2} W^n\bigl(\bfy\big|f_n(m)\bigr)\label{eq:1003_dfjsl}\\
&\leq 1+ e^{n\rho R} e^{-n\delta},\label{eq:634}
\end{align}
where~\eqref{eq:1003_dfjsl} follows by splitting the sum over
$\bfy\in \set{Y}^n$ into a sum over all $\bfy$ for 
which~$\card{\set{L}(\bfy)}=1$ and a sum over all other $\bfy$, and by using
$\card{\set{L}(\bfy)} \leq e^{nR}$ to bound the latter.  
Part~\ref{item:prop:list_erasure_relation_3} follows by noting that the right-hand side of~\eqref{eq:634} tends to one as
$n$ tends to infinity if $\rho< \delta/R$. 

As to Part~\ref{item:prop:list_erasure_relation_4}, since $\Cal \leq \Coff$, we
have $\lim_{\rho\to\infty} \Cal \leq - \log \pi_0$ by Proposition~\ref{prop:relationship_c_cutoff}
Part~\ref{part:relationship_c_cutoff_4}. On the other hand, 
it follows from~\eqref{eq:forney_cal_rho} by replacing the average over
$y\in\set{Y}$ with the maximum that $\Cal \geq  - \log \pi_0$ for all~$\rho>0$. 
\end{proof}
\begin{proposition}
\label{prop:list_erasure_relation_fb}
Propositions~\ref{prop:relationship_c_cutoff}
and~\ref{prop:list_erasure_relation} are true also for channels
with feedback. In particular, if $\Cal$, $\Calf$, $\Ceo$, or $\Ceof$ is
positive, then they all are.
\end{proposition}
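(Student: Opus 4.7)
The plan is to piggyback on three facts: the identity $\Cofff=\Coff$ (Corollary~\ref{cor:cutoff_feedback_no_increase}), the standard observation that feedback does not increase the Shannon capacity of a DMC (so $\Cf=\C$), and the trivial monotonicities $\Cal\leq\Calf$ and $\Ceo\leq\Ceof$. Given these, the feedback version of Proposition~\ref{prop:relationship_c_cutoff} is immediate: every one of its four parts is a statement about $\Coff$ and $\C$, so substituting $\Cofff$ for $\Coff$ and $\Cf$ for $\C$ transfers all four parts from the no-feedback statement verbatim.

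For the feedback version of Proposition~\ref{prop:list_erasure_relation}, Part~1 ($\Calf\leq\Ceof$) is the Markov-inequality step~\eqref{eq:681}, whose derivation uses only the joint distribution of message and output and therefore applies unchanged to feedback codes. For Part~2, if the channel is such that for every $y$, $W(y|x)>0$ for some $x$ implies $W(y|x')>0$ for all $x'$, then no matter how the feedback encoder selects $f^{(i)}(m,y^{i-1})$, the conditional probability $\prod_i W(y_i\mid f^{(i)}(m,y^{i-1}))$ is positive iff each $y_i$ is producible by some (equivalently, every) input; hence $\card{\set{L}(\bfy)}\in\{0,e^{nR}\}$ and $\Calf=\Ceof=0$. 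Conversely, if the channel condition holds, the no-feedback Part~2 yields $\Cal,\Ceo>0$, which combined with $\Calf\geq\Cal$ and $\Ceof\geq\Ceo$ gives $\Calf,\Ceof>0$. Part~4 follows from the sandwich $\Cal\leq\Calf\leq\Cofff=\Coff$ by noting that both endpoints tend to $-\log\pi_0$ as $\rho\to\infty$ by the no-feedback propositions. The joint positivity claim stated at the end of Proposition~\ref{prop:list_erasure_relation_fb} then follows from combining Part~2 in both its no-feedback and feedback forms with $\Cal\leq\Calf$ and $\Ceo\leq\Ceof$: positivity of any single one of $\Cal,\Ceo,\Calf,\Ceof$ is equivalent to the very same channel condition, so they are positive or zero together.

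The remaining and essential step is Part~3 ($\lim_{\rho\to 0}\Calf=\Ceof$). One direction is Part~1; for the other I plan to re-run the argument culminating in~\eqref{eq:1003_dfjsl}--\eqref{eq:634} with $f_n$ now a feedback encoder, so that $\card{\set{L}(\bfy)}\leq e^{nR}$ still holds and the same splitting yields $\operatorname{E}[\card{\set{L}(\bfY)}^\rho]\leq 1+e^{n\rho R}\Pr(\card{\set{L}(\bfY)}\geq 2)$. The only non-trivial ingredient required is the existence, for every rate $R<\Ceof$, of a sequence of blocklength-$n$ feedback encoders whose erasure probability $\Pr(\card{\set{L}(\bfY)}\geq 2)$ decays as $e^{-n\delta}$ for some $\delta>0$; granting this, sending $n\to\infty$ for $\rho<\delta/R$ gives $\lim_{\rho\to 0}\Calf\geq R$. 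The hard part will therefore be to extract this exponential decay from the achievability proofs of $\Ceof$ in~\cite{nakiboglu2012errors,bunte2012zero}, which construct feedback codes attaining a strictly positive error exponent at every rate strictly below $\Ceof$.
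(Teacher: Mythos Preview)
Your treatment of the feedback version of Proposition~\ref{prop:relationship_c_cutoff} and of Parts~1, 2, and~4 of Proposition~\ref{prop:list_erasure_relation} matches the paper's proof essentially verbatim: the paper likewise invokes $\Cofff=\Coff$, $\Cf=\C$, and observes that the original arguments (Markov's inequality, the ``every-output-reachable'' dichotomy, the sandwich $\Cal\leq\Calf\leq\Cofff$) carry over to feedback encoders unchanged.

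The substantive divergence is in Part~3. You propose to replay~\eqref{eq:1003_dfjsl}--\eqref{eq:634} with feedback encoders, which would indeed work \emph{provided} you can establish an exponentially decaying erasure probability for feedback codes at every rate below $\Ceof$---a fact you candidly leave to be mined from~\cite{nakiboglu2012errors,bunte2012zero}. The paper does not attempt this. Instead it defers Part~3 to Corollary~\ref{cor:limit}, whose proof rests on the constructive lower bound of Theorem~\ref{thm:feedback_lb}: one shows $\lim_{\rho\to 0}R^\star(\rho)=\C$ directly from properties of $E_0$, and then~\eqref{eq:feedback_lb} yields $\lim_{\rho\to 0}\Calf\geq\C$, which together with $\Calf\leq\Ceof\leq\C$ closes the sandwich. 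The paper's route is thus self-contained (the needed feedback scheme is built explicitly in Section~\ref{sec:feedback_lb}) and bypasses any appeal to feedback error exponents for the z.u.e.\ problem. Your route is plausible but incomplete as stated: it hinges on a positive-exponent claim that is not proved here and whose extraction from the cited references you have not carried out.
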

\begin{proof}
In the case of Proposition~\ref{prop:relationship_c_cutoff} this follows
from the fact that feedback does not increase the Shannon capacity or the cutoff
rate (Corollary~\ref{cor:cutoff_feedback_no_increase}).
In the case of Proposition~\ref{prop:list_erasure_relation} the original 
proof goes through except for Part~\ref{item:prop:list_erasure_relation_3}. This
part, however, is contained in Corollary~\ref{cor:limit} ahead.
\end{proof}

The quantity $-\log \pi_0$ appearing in
Propositions~\ref{prop:relationship_c_cutoff}
and~\ref{prop:list_erasure_relation} has the following operational significance.
Shannon~\cite{shannon1956zero} proved that the zero-error capacity with feedback
$\Czerof$ can be expressed as 
\begin{equation}
\label{eq:shannon_zero}
\Czerof = \begin{cases} -\log \pi_0& \text{if $\Czero>0$,}\\0&
\text{otherwise.}\end{cases}
\end{equation}
He further conjectured that
\begin{equation}
\label{eq:shannon_conjecture}
-\log \pi_0 = \min_{V\ll W} C(V),
\end{equation}
where $C(V)$ denotes the Shannon capacity of the channel $V(y|x)$,
$x\in\set{X}$, $y\in\set{Y}$, and where, as above, $V\ll W$ means that $V(y|x) =0$ whenever $W(y|x)=0$. Ahlswede proved this
conjecture in~\cite{ahlswede1973channels}. Using the multiletter 
version of~\eqref{eq:const_comp_lb_cal}, Telatar~\cite{telatar1997zero} showed that
\begin{equation}
\label{eq:1056_telatar}
\lim_{\rho\to\infty} \Cal = \min_{V\ll W} C(V). 
\end{equation}
Combining~\eqref{eq:1056_telatar} with Part~\ref{item:prop:list_erasure_relation_4} 
of Proposition~\ref{prop:list_erasure_relation} furnishes an alternative
proof of~\eqref{eq:shannon_conjecture}.

\subsection{Sperner Capacity and $\epsilon$-Noise Channels}
There is an interesting relationship between the listsize capacity, the z.u.e.\
capacity, and the Sperner capacity of directed graphs~\cite{bunte2014zero}.
We say that a DMC is $\varepsilon$-noise if $\set{X}
\subseteq \set{Y}$ and 
\begin{equation}
W(x|x) \geq 1- \varepsilon,\quad \text{for all $x\in\set{X}$.}
\end{equation}
A natural way to associate a directed graph $G$ with an $\varepsilon$-noise
channel~$W$ is to take $\set{X}$ as the vertex set and to  introduce an edge
from $x$ to $y$ if $x\neq y$ and $W(y|x)>0$. 
It can be shown that~\cite{bunte2014zero,ahlswede1996erasure} 
\begin{equation}
\lim_{\varepsilon \to 0} \Ceo = \lim_{\varepsilon \to 0}
\Cal = \Sigma(G),
\end{equation}
where $\Sigma(G)$ denotes the Sperner capacity of $G$. (The limits are to be
understood in a uniform sense with respect to all $\varepsilon$-noise channels
with given graph $G$.) 

As a corollary to Theorem~\ref{thm:feedback_lb} ahead, we 
can show:
\begin{proposition}
\label{prop:low_noise_fb}
For any $\epsilon$-noise channel with $\set{X}=\set{Y}$ and $\Cal>0$,
\begin{equation}
\label{eq:low_noise_fb}
\lim_{\varepsilon\to0} \Calf = \log \card{\set{X}}.
\end{equation}
\end{proposition}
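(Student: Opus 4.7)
\emph{Proof plan.} The upper bound $\Calf \leq \log|\set{X}|$ is immediate from the chain $\Calf \leq \Ceof \leq C \leq \log|\set{Y}| = \log|\set{X}|$, using $|\set{X}|=|\set{Y}|$ and the standard fact that feedback does not increase Shannon capacity. It therefore suffices to prove the matching lower bound $\liminf_{\epsilon \to 0} \Calf \geq \log|\set{X}|$.

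I would split the proof of the lower bound into two cases according to whether $\Czero$ is positive. When $\Czero > 0$, Theorem~\ref{thm:list_cutoff_achieve} yields the closed form $\Calf = \max_P E_0(\rho,P)/\rho$, and it suffices to evaluate this at the uniform input distribution $P^*(x) = 1/|\set{X}|$. As $\epsilon \to 0$ the transition law converges pointwise to the identity, so $\sum_x W(y|x)^{1/(1+\rho)} \to 1$ for each $y \in \set{Y}$, and a short calculation gives
\begin{equation}
\frac{E_0(\rho, P^*)}{\rho} \to \log|\set{X}|.
\end{equation}
This settles Case~1 by continuity of $E_0$ in the channel matrix.

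In the remaining case $\Czero = 0$ (with $\Cal>0$ still in force), I would invoke Theorem~\ref{thm:feedback_lb}, which supplies a feedback-based lower bound on $\Calf$ precisely for the regime in which Theorem~\ref{thm:list_cutoff_achieve} does not determine $\Calf$. The task is then to verify that this bound, specialized to an $\epsilon$-noise channel satisfying $\Cal > 0$, converges to $\log|\set{X}|$ as $\epsilon \to 0$. The intuition is that a feedback scheme can combine the high per-symbol reliability $W(x|x) \geq 1-\epsilon$ with the detectable zero guaranteed by $\Cal>0$ (the existence of $x,x',y$ with $W(y|x) > W(y|x') = 0$, cf.\ Proposition~\ref{prop:list_erasure_relation}) to drive the $\rho$-th moment of the listsize close to one at rates approaching $\log|\set{X}|$.

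The delicate step is Case~2. Case~1 is a pure continuity argument for $E_0$, whereas Case~2 requires a quantitative analysis of the bound from Theorem~\ref{thm:feedback_lb}. I expect the hardest part to be exhibiting a single construction (or a choice of auxiliary distribution in a variational form) that works uniformly over \emph{all} $\epsilon$-noise channels with $\Cal > 0$, regardless of the off-diagonal support structure, and which shows that the limit of that lower bound is indeed the full alphabet rate $\log|\set{X}|$ rather than merely some smaller positive quantity.
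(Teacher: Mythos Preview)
Your upper bound and Case~1 argument are fine, but the case split is an unnecessary detour: the paper applies Theorem~\ref{thm:feedback_lb} directly to \emph{all} $\epsilon$-noise channels with $\Cal>0$, since that theorem already subsumes $\Czero>0$ as the degenerate case $q^\star=1$ in which~\eqref{eq:feedback_lb} reads $\Calf\geq R^\star(\rho)$.

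For the application of Theorem~\ref{thm:feedback_lb} that you leave open, the two ingredients you are missing are both short once spotted. First---and this is where the hypothesis $\set{X}=\set{Y}$ enters, which your plan never uses---one always has $q^\star\geq 1-\epsilon$: by Proposition~\ref{prop:list_erasure_relation} Part~\ref{item:prop:list_erasure_relation_2} there exist $x_0,y_0$ with $W(y_0|x_0)=0$; since $y_0\in\set{X}$, the choice $x_1=y_0$ and $\set{Y}_0=\{y_0\}$ gives $W(\set{Y}_0|x_1)=W(y_0|y_0)\geq 1-\epsilon$. Hence the denominator $1+\rho R^\star(\rho)\big/\log\frac{1}{1-q^\star}$ in~\eqref{eq:feedback_lb} tends to~$1$ as $\epsilon\to0$. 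Second, to push $R^\star(\rho)=\sup_{\xi>0}\max_P E_0(\xi,P)/(\xi+\rho)$ toward $\log\card{\set{X}}$, evaluate at uniform $P$ and at a \emph{large} auxiliary $\xi$ (not $\xi=\rho$): the very estimate behind your Case~1 yields
\begin{equation*}
E_0(\xi,P)\geq \xi\log\card{\set{X}}-(1+\xi)\log\Bigl(1+\bigl(\card{\set{X}}-1\bigr)\epsilon^{\frac{1}{1+\xi}}\Bigr),
\end{equation*}
so for any $\delta>0$ first fix $\xi$ with $\xi/(\xi+\rho)>1-\delta$, then send $\epsilon\to0$, then $\delta\to0$. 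Both bounds depend on the channel only through $\epsilon$ and $\card{\set{X}}$, which resolves the uniformity concern in your final paragraph: no information about the off-diagonal support beyond the existence of a single zero is needed.
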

The proof of Proposition~\ref{prop:low_noise_fb} is postponed until 
Section~\ref{sec:feedback_lb}. 

\subsection{A Dual Source-Coding Problem}
A source coding analog to the listsize capacity has recently been 
studied in~\cite{bunte2014encoding}. There, the encoder uses~$nR$ bits to describe a sequence of
length~$n$ emitted by an IID source~$P_X$. Based on this
description, the decoder produces a list of sequences that is guaranteed
to contain the one emitted by the source. It is shown that the smallest rate~$R$ achievable in the
sense that the $\rho$-th moment of the length of this list tends to one as $n$
tends to infinity is given by the R\'enyi entropy of order $1/(1+\rho)$
\begin{equation}
H_{\frac{1}{1+\rho}}(X) = \frac{1}{\rho}\log \biggl(\sum_{x\in\set{X}}
P_X(x)^{\frac{1}{1+\rho}}\biggr)^{1+\rho}.
\end{equation}
It is also shown that if the source produces pairs $(X,Y)$ and the
$Y$-sequence is known as side-information at the encoder and
decoder, then the smallest achievable rate is given by a conditional version of
R\'enyi entropy
\begin{equation}
H_{\frac{1}{1+\rho}}(X|Y)= \frac{1}{\rho}\log \sum_{y\in\set{Y}}\biggl(\sum_{x\in\set{X}}
P_{X,Y}(x,y)^{\frac{1}{1+\rho}}\biggr)^{1+\rho}.
\end{equation}
This definition of conditional R\'enyi entropy was proposed by
Arimoto~\cite{arimoto1977information}, who showed that
\begin{equation}
\max_P \frac{E_0(\rho,P)}{\rho} = \max_{P}
H_{\frac{1}{1+\rho}}(X)-H_{\frac{1}{1+\rho}}(X|Y),
\end{equation}
where $(X,Y) \sim P(x)W(y|x)$. Thus, at least for channels
whose channel law factorizes in the Csisz\'ar-Narayan sense~\eqref{eq:factorize}, 
R\'enyi entropy plays a role in
channel and source coding with lists that is reminiscent of the role played by Shannon entropy in 
channel and source coding with the usual probability of error criteria. 

\section{The Converse Part of Theorem~\ref{thm:cutoff_no_increase}}
\label{sec:cutoff_fb}
In this section we prove the converse part of
Theorem~\ref{thm:cutoff_no_increase}, i.e., we prove
\begin{equation}
\label{eq:1340_converse}
\Cofff \leq \max_P \frac{E_0(\rho,P)}{\rho}.
\end{equation}
We need the following lemmas.
\begin{lemma}[\protect{\cite[Thm.~1]{arikan1996inequality}}]
\label{lem:arikan}
If the pair $(X,Y)\in\set{X}\times\set{Y}$ (where $\set{X}$ and $\set{Y}$ are
finite sets) has PMF $P_{X,Y}$, and if the function $G\colon
\set{X}\times\set{Y} \to \{1,\ldots, \card{\set{X}}\}$ is one-to-one as a
function of $x\in\set{X}$ for every $y\in\set{Y}$, then
\begin{equation}
\E{G(X,Y)^{\rho}} \geq \frac{1}{(1+\log \card{\set{X}})^\rho}  \sum_{y\in
\set{Y}} \Bigl( \sum_{x\in\set{X}}
P_{X,Y}(x,y)^{\frac{1}{1+\rho}}\Bigr)^{1+\rho}.
\end{equation}
\end{lemma}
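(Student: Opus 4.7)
The plan is to reduce the bivariate inequality to a per-$y$ statement about PMFs on $\{1,\ldots,\card{\set{X}}\}$ and then apply H\"older's inequality together with the harmonic sum bound $\sum_{k=1}^{N} 1/k \leq 1 + \log N$. Since the result is quoted from Ar{\i}kan, I would follow his original argument.

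First I would write $\E{G(X,Y)^\rho} = \sum_y P_Y(y)\sum_x P_{X|Y}(x|y) G(x,y)^\rho$ and use the identity
\begin{equation*}
\Bigl(\sum_x P_{X,Y}(x,y)^{\frac{1}{1+\rho}}\Bigr)^{1+\rho} = P_Y(y)\Bigl(\sum_x P_{X|Y}(x|y)^{\frac{1}{1+\rho}}\Bigr)^{1+\rho}
\end{equation*}
to reduce the claim to showing that for each $y\in\set{Y}$ with $P_Y(y)>0$,
\begin{equation*}
\sum_x P_{X|Y}(x|y) G(x,y)^\rho \geq \frac{1}{(1+\log \card{\set{X}})^\rho}\Bigl(\sum_x P_{X|Y}(x|y)^{\frac{1}{1+\rho}}\Bigr)^{1+\rho}.
\end{equation*}
Because $G(\cdot,y)$ is a bijection from $\set{X}$ onto $\{1,\ldots,N\}$ with $N = \card{\set{X}}$, this is a statement about an arbitrary PMF $Q$ on $\{1,\ldots,N\}$ and an arbitrary permutation $g$.

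The crucial step is H\"older's inequality applied to the factorization $Q(x)^{1/(1+\rho)} = \bigl(Q(x) g(x)^\rho\bigr)^{1/(1+\rho)} \cdot g(x)^{-\rho/(1+\rho)}$ with conjugate exponents $1+\rho$ and $(1+\rho)/\rho$, which yields
\begin{equation*}
\Bigl(\sum_x Q(x)^{\frac{1}{1+\rho}}\Bigr)^{1+\rho} \leq \Bigl(\sum_x Q(x) g(x)^\rho\Bigr) \Bigl(\sum_x g(x)^{-1}\Bigr)^{\rho}.
\end{equation*}
Since $g$ is a permutation of $\{1,\ldots,N\}$, the second factor on the right equals the harmonic sum $\sum_{k=1}^{N} 1/k$, which is at most $1 + \log N$ by the standard integral comparison. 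Rearranging and then multiplying by $P_Y(y)$ and summing over $y$ yields the lemma.

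The only point that requires thought is the choice of H\"older exponents, which must be matched to the exponent $\rho$ on the guessing function and the R\'enyi-type exponent $1/(1+\rho)$ on the PMF in such a way that the leftover factor is precisely the harmonic sum of the values $g(x)$. Once the exponents are chosen correctly, the rest of the derivation is essentially one line.
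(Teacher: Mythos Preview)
Your argument is correct and is precisely Ar{\i}kan's original proof: the reduction to a per-$y$ statement, the H\"older step with conjugate exponents $1+\rho$ and $(1+\rho)/\rho$, and the harmonic-sum bound are exactly the ingredients in \cite[Thm.~1]{arikan1996inequality}. The paper itself does not prove this lemma at all---it merely quotes it as a known result from Ar{\i}kan---so there is nothing further to compare.
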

\begin{lemma}[\protect{\cite[Thm.~5.6.5]{gallager1968information}}]
\label{lem:gallager}
A Necessary and sufficient condition for a PMF $P$ to minimize 
\begin{equation}
\sum_{y \in \set{Y}} \Bigl(\sum_{x\in \set{X}} P(x)
W(y|x)^{\frac{1}{1+\rho}} \Bigr)^{1+\rho}
\end{equation}
(and hence maximize $E_0(\rho,P)$)
is
\begin{equation}
\label{eq:gallager_kkt}
\sum_{y \in \set{Y}} W(y|x)^{\frac{1}{1+\rho}} \alpha_y(P)^\rho \geq
\sum_{y\in\set{Y}}
\alpha_y(P)^{1+\rho},
\end{equation}
for all $x\in\set{X}$, with equality if $P(x)>0$.
Here,
\begin{equation}
\alpha_y(P) = \sum_{x \in \set{X}} P(x) W(y|x)^{\frac{1}{1+\rho}}.
\end{equation}
\end{lemma}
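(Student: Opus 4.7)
The plan is to recognize this as a convex optimization problem over the probability simplex and to apply the Karush--Kuhn--Tucker (KKT) conditions, which for convex problems are both necessary and sufficient. Maximizing $E_0(\rho,P) = -\log F(P)$ is equivalent to minimizing
\begin{equation}
F(P) \eqdef \sum_{y \in \set{Y}} \alpha_y(P)^{1+\rho}
\end{equation}
over the simplex $\{P \colon P(x)\geq 0,\ \sum_x P(x)=1\}$. First I would check that $F$ is convex: $\alpha_y(P)$ is linear in $P$, and $t \mapsto t^{1+\rho}$ is convex on $[0,\infty)$ for $\rho>0$, so each summand is convex in $P$ and hence so is $F$.

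Next I would write down the Lagrangian
\begin{equation}
L(P,\lambda,\bfmu) = F(P) - \lambda\Bigl(\sum_{x} P(x) - 1\Bigr) - \sum_{x} \mu_x P(x),\quad \mu_x\geq 0,
\end{equation}
and compute the partial derivative
\begin{equation}
\frac{\partial F}{\partial P(x)} = (1+\rho) \sum_{y\in\set{Y}} W(y|x)^{\frac{1}{1+\rho}} \alpha_y(P)^{\rho}.
\end{equation}
The stationarity condition $\partial L/\partial P(x) = 0$ together with complementary slackness $\mu_x P(x)=0$ then gives, for every $x\in\set{X}$,
\begin{equation}
(1+\rho)\sum_{y\in\set{Y}} W(y|x)^{\frac{1}{1+\rho}} \alpha_y(P)^{\rho} \geq \lambda,
\end{equation}
with equality whenever $P(x)>0$.

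To identify $\lambda$, I would multiply this equality by $P(x)$ and sum over $x$; using the definition of $\alpha_y$, the left-hand side collapses to $(1+\rho)\sum_{y} \alpha_y(P)^{1+\rho}$, while the right-hand side equals $\lambda$. Substituting this value of $\lambda$ back into the stationarity/complementary-slackness condition yields exactly~\eqref{eq:gallager_kkt}, with equality when $P(x)>0$. Since $F$ is convex and the feasible set is a convex polytope, the KKT conditions are also sufficient for a global minimum, so the displayed condition characterizes the minimizers of $F$ (equivalently, the maximizers of $E_0(\rho,P)$).

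The only subtle step is the bookkeeping involving $P(x)=0$: here $\mu_x$ can be positive and only the inequality survives, while the stationarity equation at $P(x)>0$ is what lets us pin down $\lambda$ without needing differentiability issues at the boundary. I expect this is the main place to be careful, but since $\alpha_y(P)^{\rho}$ is continuous in $P$ on the simplex, the partial derivatives of $F$ remain well defined throughout, and the standard KKT machinery applies without modification.
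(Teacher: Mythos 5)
The paper does not prove this lemma; it cites it directly as Theorem~5.6.5 of Gallager's book. Your KKT argument is correct and is essentially the same as Gallager's: he also exploits the convexity of $P\mapsto\sum_y\alpha_y(P)^{1+\rho}$ over the simplex, invokes his general Kuhn--Tucker-type theorem for convex minimization over a simplex (Theorem~4.5.1 in his book), computes the same partial derivative, and determines the multiplier by multiplying the stationarity relation by $P(x)$ and summing. So this is the standard route, and your bookkeeping (stationarity plus complementary slackness yielding the inequality with equality at points of the support, and the identification of $\lambda$) is exactly right.
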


Equipped with these lemmas, we can now prove~\eqref{eq:1340_converse}.
Fix a sequence of rate-$R$ blocklength-$n$ encoders as in
\eqref{eq:feedback_encoder}. For each $\bfy\in\set{Y}^n$ list the messages in decreasing order of their likelihood
(resolving ties arbitrarily)
\begin{equation}
\prod_{i=1}^{n} W\bigl(y_i\big|f_n^{(i)}(m,y^{i-1})\bigr),\quad 1\leq m\leq e^{nR}.
\end{equation}
Let $G(m,\bfy)$ denote the position of the $m$-th message in this list. 
Then $G(\cdot,\bfy)$ is one-to-one for every $\bfy\in\set{Y}^n$, and 
\begin{equation}
\label{eq:itw_cutoff_1}
G(m,\bfy) \leq \card{\set{L}(m,\bfy)},\quad 1\leq m \leq e^{nR}.
\end{equation}
(Equality holds in~\eqref{eq:itw_cutoff_1} if no message other than $m$ has the same likelihood as $m$.)
By Lemma~\ref{lem:arikan}, 
\begin{align}
&\frac{(1+nR)^\rho}{e^{nR}} \sum_{m=1}^{e^{nR}} \sum_{\bfy\in\set{Y}^n} G(m,\bfy)^\rho \prod_{i=1}^{n}
W\bigl(y_i\big|f_n^{(i)}(m,y^{i-1})\bigr)\notag\\
&\geq \sum_{\bfy\in\set{Y}^n} \Biggl( \sum_{m=1}^{e^{nR}}
\biggl(\frac{1}{e^{nR}}\prod_{i=1}^{n}
W\bigl(y_i\big|f^{(i)}_n(m,y^{i-1})\bigr)\biggr)^{\frac{1}{1+\rho}}
\Biggr)^{1+\rho}\notag \\
&= e^{n\rho R}\sum_{\bfy\in\set{Y}^n} \biggl(
\sum_{\bff\in\set{F}} \widetilde{P}(\bff) 
\widetilde{W}_n(\bfy|\bff)^{\frac{1}{1+\rho}}\biggr)^{1+\rho},\label{eq:itw_cutoff_2}
\end{align}
where $\widetilde{W}_n$ is the channel whose input alphabet $\set{F}$ is the set of all 
$n$-tuples $\bff=(f^{(1)},\ldots,f^{(n)})$ of functions of the form 
\begin{equation}
f^{(i)} \colon \set{Y}^{i-1} \to \set{X}, \quad i=1,\ldots,n;
\end{equation}
whose output alphabet is~$\set{Y}^n$; and whose
transition law is 
\begin{equation}
\label{eq:equivalent_channel}
\widetilde{W}_n(\bfy|\bff) = \prod_{i=1}^n
W\bigl(y_i\big|f^{(i)}(y^{i-1})\bigr),\quad \bfy\in\set{Y}^n,\,\bff\in\set{F},
\end{equation}
and where $\widetilde{P}$ is the PMF on $\set{F}$ induced by uniform 
messages and the encoding functions:
\begin{equation}
\label{eq:Q_tilde}
\widetilde{P}(\bff) = \frac{\bigcard{\bigl\{m:
\bigl(f^{(1)}_n(m),\ldots,f^{(n)}_n(m,\cdot)\bigr)=\bff\bigr\}}}{e^{nR}}.
\end{equation}
The proof is complete once we establish that
\begin{equation}
\label{eq:25_3_21_39}
\sum_{\bfy\in\set{Y}^n} \biggl( \sum_{\bff\in\set{F}} \widetilde{P}(\bff)
\widetilde{W}_n(\bfy|\bff)^{\frac{1}{1+\rho}} \biggr)^{1+\rho}
\geq e^{-n \max_{P} E_0(\rho,P)},
\end{equation}
because it will then follow using~\eqref{eq:itw_cutoff_1} and~\eqref{eq:itw_cutoff_2}
that the $\rho$-th moment of $\card{\set{L}(M,\bfY)}$ cannot tend to one
unless~\mbox{$R \leq \max_P E_0(\rho,P)/\rho$}.
To establish~\eqref{eq:25_3_21_39}, let $P^\star$ be a PMF on $\set{X}$ that minimizes 
\begin{equation}
\sum_{y\in\set{Y}} \biggl(\sum_{x\in\set{X}} P(x) W(y|x)^{\frac{1}{1+\rho}}\biggr)^{1+\rho}
\end{equation}
and hence achieves the maximum of $E_0(\rho,P)$. 
We use Lemma~\ref{lem:gallager} (applied to the channel $\widetilde{W}_n$) to show that the PMF $\widetilde{P}^\star$ on
$\set{F}$ given by
\begin{equation}
\widetilde{P}^\star(\bff) = \begin{cases} \prod_{i=1}^n P^\star(x_i)&
f^{(1)}\equiv x_1,\ldots,f^{(n)}\equiv x_n,\\0
&\text{otherwise,}\end{cases}
\end{equation}
minimizes the left-hand side of~\eqref{eq:25_3_21_39} over all PMFs on $\set{F}$. 
The notation $f^{(i)} \equiv x_i$
means that $f^{(i)}(y^{i-1}) = x_i$ for all $y^{i-1} \in \set{Y}^{i-1}$.
To verify that $\widetilde{P}^\star$ satisfies the conditions of Lemma~\ref{lem:gallager} for the channel
$\widetilde{W}_n$, observe that
\begin{align}
&\sum_{\bfy\in\set{Y}^n} \widetilde{W}_n(\bfy|\bff)^{\frac{1}{1+\rho}}
\biggl(\sum_{\bff'\in\set{F}}
\widetilde{P}^\star(\bff')
\widetilde{W}_n(\bfy|\bff')^{\frac{1}{1+\rho}}\biggr)^{\rho}\notag\\
&=\sum_{\bfy\in\set{Y}^n} \widetilde{W}_n(\bfy|\bff)^{\frac{1}{1+\rho}}
\biggl(\sum_{\bfx\in\set{X}^n} (P^\star)^n(\bfx)
W^n(\bfy|\bfx)^{\frac{1}{1+\rho}}\biggr)^\rho\notag\\ 
&=\sum_{y_1} W(y_1|f^{(1)})^{\frac{1}{1+\rho}}
\alpha_{y_1}(P^\star)^\rho\notag\\
&\qquad\times \cdots \times \sum_{y_n} W\bigl(y_n\big|f^{(n)}(y^{n-1})\bigr)^{\frac{1}{1+\rho}}
\alpha_{y_n}(P^\star)^\rho.\label{eq:nested_sums}
\end{align}
Applying \eqref{eq:gallager_kkt} (with $P$ replaced by $P^\star$) to the innermost of the nested sums  on the right-hand side
of~\eqref{eq:nested_sums} (the sum over~$y_n$), then to the second innermost (the sum over~$y_{n-1}$), and so on, we obtain
\begin{align}
&\sum_{\bfy\in\set{Y}^n} \widetilde{W}_n(\bfy|\bff)^{\frac{1}{1+\rho}}
\biggl(\sum_{\bff'\in\set{F}}
\widetilde{P}^\star(\bff')
\widetilde{W}_n(\bfy|\bff')^{\frac{1}{1+\rho}}\biggr)^{\rho}\notag\\
&\quad\geq \biggl(\sum_{y\in\set{Y}} \Bigl(
\sum_{x\in\set{X}} P^\star(x) W(y|x)^{\frac{1}{1+\rho}}
\Bigr)^{1+\rho}\biggr)^n\notag\\
&\quad= \sum_{\bfy\in\set{Y}^n} \biggl(
\sum_{\bfx\in\set{X}^n} (P^\star)^{n}(\bfx) W^n(\bfy|\bfx)^{\frac{1}{1+\rho}}
\biggr)^{1+\rho}\notag\\
&\quad= \sum_{\bfy\in\set{Y}^n} \biggl( \sum_{\bff'\in\set{F}}
\widetilde{P}^\star(\bff')
\widetilde{W}_n(\bfy|\bff')^{\frac{1}{1+\rho}}\biggr)^{1+\rho},
\end{align}
with equality if $f^{(1)}\equiv x_1,\ldots,f^{(n)}\equiv x_n$ and
\mbox{$P^\star(x_i)>0$} for all $i\in\{1,\ldots,n\}$, i.e., with equality if
$\widetilde{P}^\star(\bff)>0$. The PMF~$\widetilde{P}^\star$ thus satisfies the
conditions of Lemma~\ref{lem:gallager} (for the channel $\widetilde{W}_n$) for minimizing the left-hand side
of~\eqref{eq:25_3_21_39}, and the
value of this minimum is equal to the right-hand side of~\eqref{eq:25_3_21_39}.\qed

\section{The Direct Part of Theorem~\ref{thm:list_cutoff_achieve}}
\label{sec:cutoff_achieve}
\begin{figure}
\centering
\begin{tikzpicture}[scale=2, >=latex, thick]
\draw[-] (0,0) node[left] {$2$} -- node[above] {$\delta$} (2,0) node[right] {$2$};
\draw[-] (0,1) node[left] {$1$} -- node[above] {$1-\epsilon$} (2,1) node[right] {$1$};
\draw[-] (0,2) node[left] {$0$} -- node[above] {$1-\epsilon$} (2,2) node[right] {$0$};
\draw[-] (0,1) -- node[above, near start] {$\epsilon$} (2,2);
\draw[-] (0,2) -- node[above, near start] {$\epsilon$} (2,1);
\draw[-] (0,0) -- node[near start] {$\frac{1}{2}(1-\delta)$} (2,2);
\draw[-] (0,0) -- node[near start] {$\frac{1}{2}(1-\delta)$} (2,1);
\end{tikzpicture}
\caption{A channel with $0<\Calf< \max_P E_0(\rho,P)/\rho$.}
\label{fig:no_cutoff}
\end{figure}


Before presenting the proof of the direct part of
Theorem~\ref{thm:list_cutoff_achieve}, we comment on the necessity of the assumption~$\Czero>0$. 
Since the z.u.e.\ capacity with feedback is given by~\cite{nakiboglu2012errors,bunte2012zero}
\begin{equation}
\label{eq:nakiboglu}
\Ceof = \begin{cases} C& \text{if $\Ceo>0$,}\\0& \text{otherwise,}\end{cases}
\end{equation}
one might suspect that for equality in~\eqref{eq:list_cutoff_achieve} it
suffices that $\Cal$ be positive (and not necessarily $\Czero$). 
This, however, is not true: 

\begin{proposition}
A positive value of $\Cal$ does not guarantee equality
in~\eqref{eq:list_cutoff_achieve}.
\end{proposition}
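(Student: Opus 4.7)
The plan is to exhibit the channel in Figure~\ref{fig:no_cutoff} with $\epsilon\in(0,1/2)$ and $\delta\in(0,1)$ (with $\delta$ eventually taken small). For this channel $\Cal>0$ follows immediately from Proposition~\ref{prop:list_erasure_relation} Part~\ref{item:prop:list_erasure_relation_2}: taking $x=2$, $x'=0$, $y=2$ gives $W(2|2)=\delta>0$ while $W(2|0)=0$. The work is in showing the strict inequality $\Calf<\max_P E_0(\rho,P)/\rho$.

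First I would evaluate the right-hand side. Using the optimality conditions of Lemma~\ref{lem:gallager}, one can check that for $\epsilon\in(0,1/2)$ and $\delta$ sufficiently small the maximizing PMF $P^\star$ assigns no probability to input~$2$---intuitively, input~$2$ is too noisy (nearly a mixture of inputs $0$ and $1$) to help the $E_0$ tradeoff when $\delta$ is small. Consequently $\max_P E_0(\rho,P)/\rho$ reduces to the cutoff rate of the BSC with crossover probability $\epsilon$ on inputs $\{0,1\}$, a positive constant $c(\epsilon,\rho)>0$ independent of $\delta$.

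Next I would upper-bound $\Calf$ by a quantity that vanishes as $\delta\to 0$. The structural observation is that $m\in\set{L}(\bfy)$ if and only if $f^{(i)}(m,y^{i-1})=2$ at every position with $y_i=2$. Defining $\tilde{X}_i^{(m)}=\mathbf{1}\{f^{(i)}(m,Y^{i-1})=2\}$ and $\tilde{Y}_i=\mathbf{1}\{Y_i=2\}$, the conditional law of $\tilde{Y}_i$ given $\tilde{X}_i$ is a Z-channel with crossover probability $1-\delta$, whose Shannon capacity $C_Z(\delta)$ tends to $0$ as $\delta\to 0$. My aim is to show that $\Calf\le C_Z(\delta)$; combined with the previous paragraph this gives $\Calf<\max_P E_0(\rho,P)/\rho$ whenever $\delta$ is chosen small enough that $C_Z(\delta)<c(\epsilon,\rho)$.

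The hard step is this last upper bound. Because of feedback the 2-indicator $\tilde{X}_i^{(m)}$ is allowed to depend on the full history $Y^{i-1}$, not only on its 2-pattern $\tilde{Y}^{i-1}$, so the induced scheme is richer than a standard Z-channel feedback code and it is not obvious \emph{a priori} that its list-decoding rate is capped by $C_Z(\delta)$. I expect the proof to proceed via a Fano-type converse on $\tilde{Y}^n$: because list membership is determined entirely by $\tilde{Y}^n$ together with the encoder's 2-strategy, the information about the true message that the list decoder can exploit is bottlenecked by what $\tilde{Y}^n$ carries---hence by $C_Z(\delta)$---regardless of what additional information is simultaneously conveyed through the non-$2$ outputs.
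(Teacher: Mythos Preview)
Your counterexample is the paper's, and your structural observation---that $m\in\set{L}(\bfy)$ if and only if the encoder fed input~$2$ at every time the output was~$2$---is exactly the right one. But you then overcomplicate both halves of the argument, and the second half has a real gap.

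For the lower bound on $\max_P E_0(\rho,P)/\rho$ you try to identify the exact maximizer and argue it puts no mass on input~$2$ for small~$\delta$. That is plausible but needs a verification you do not supply. Since only a lower bound is required, you can simply restrict $P$ to inputs $\{0,1\}$; this yields the BSC($\epsilon$) cutoff rate, a positive constant independent of~$\delta$. The paper does essentially this, noting that the restricted value is close to $\log 2$ for small~$\epsilon$.

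The genuine gap is your upper bound on $\Calf$. You aim for $\Calf\le C_Z(\delta)$ via a Fano-type converse on the derived Z-channel, but you yourself flag the obstacle: the encoder's ``2-strategy'' for each message may depend on the full output history $Y^{i-1}$, not merely its $2$-pattern, so the reduction to a standard Z-channel feedback code is not immediate. You do not close this gap; ``I expect the proof to proceed via\ldots'' is not a proof.

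The paper's argument bypasses all of this with a one-line observation that your own structural fact already contains: whenever the received sequence contains \emph{no} symbol~$2$, every message is consistent with the output, so $\card{\set{L}(\bfy)}=e^{nR}$. Regardless of what the (feedback) encoder transmits, the probability that no~$2$ appears in $n$ channel uses is at least $(1-\delta)^n$, because from every input the probability of output~$2$ is at most~$\delta$. Hence the $\rho$-th moment of the listsize is at least $(1-\delta)^n e^{n\rho R}=e^{n\rho(R+\rho^{-1}\log(1-\delta))}$, which forces
\[
\Calf \le -\rho^{-1}\log(1-\delta).
\]
This tends to zero as $\delta\to 0$, while the BSC lower bound on $\max_P E_0(\rho,P)/\rho$ stays bounded away from zero; choosing $\delta$ small enough finishes the proof. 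No Z-channel converse, no Fano, no analysis of the optimizer of $E_0$ is needed.
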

\begin{proof}
A counterexample is the channel in Figure~\ref{fig:no_cutoff}. 
For this channel $\Czero=0$, 
$\Cal>0$, and $\max_P E_0(\rho,P)/\rho$ is at least close to~$\log 2$ for small
$\epsilon$. But even with feedback, if the received
sequence contains only the symbols~$0$ and~$1$, then the decoder cannot rule out
any of the messages and the list it produces is of size~$e^{nR}$. And regardless of
what is fed to the channel, the probability of observing only the symbols~$0$ and~$1$ 
at the output is at least $(1-\delta)^n$. Consequently, the~$\rho$-th moment of
the length of the list produced by the decoder is at least
\begin{equation}
e^{n\rho(R+\rho^{-1}\log(1-\delta))},
\end{equation}
and $\Calf$ must thus be bounded by $-\rho^{-1} \log(1-\delta)$, which is close
to zero for very small $\delta>0$ and hence smaller than $\max_P
E_0(\rho,P)/\rho$ if $\epsilon>0$ is sufficiently small. 
\end{proof}

To prove the direct part of Theorem~\ref{thm:list_cutoff_achieve}, we 
propose the following coding scheme. Let~$P^\star$ be a PMF on~$\set{X}$ that
achieves the maximum of $E_0(\rho,P)$. Select a sequence of types 
$(P_n)_{n\geq 1}$ with $P_n\to P^\star$ as $n\to\infty$, where each $P_n$ is a 
type in $\set{X}^n$.\footnote{This is
possible because the set of PMFs with rational components is dense in the set of
all PMFs.}
In the first phase, we send one of $e^{nR}$
messages using the length-$n$ type-$P_n$ codewords~$\bfx_1,\ldots,\bfx_{e^{nR}}$.
(We will generate the codebook at random later on). 
In the second phase, after the output sequence $\bfy\in\set{Y}^n$ has been observed through the feedback
link, we use a zero-error code (of rate at least $\log 2$) to describe the conditional type~$V$ of~$\bfy$ 
given the codeword.\footnote{To avoid uniqueness issues, we define the conditional type
$V(y|x)$ only for $x \in \set{X}$ with $P_n(x)>0$. Also, when the zero-error capacity is positive, then it
is at least $\log 2$. }
Since the number of conditional types is polynomial in $n$, this requires at most~$o(n)$ additional
channel uses. Let $\set{M}(\bfy,V)\subseteq\{1,\ldots,e^{nR}\}$ denote 
the set of all messages that are mapped to codewords given which $\bfy$ has conditional type $V$, i.e., 
\begin{equation}
\set{M}(\bfy,V) = \{1 \leq m \leq e^{nR}: \bfy \in T_V(\bfx_m)\}.
\end{equation}
At the end of the second phase both the encoder and the decoder know
$\set{M}(\bfy,V)$ and the decoder knows that the transmitted message is an
element of it. 
We fix some (small) $\alpha>0$ and partition~$\set{M}(\bfy,V)$ into~$e^{n\alpha}$
lists of lengths at most 
\begin{equation}
\bigl\lceil e^{-n\alpha}\card{\set{M}(\bfy,V)}\bigr\rceil.
\end{equation}
In the third phase, we send the index of the list containing the correct
message using a zero-error code (of rate at least $\log 2$). This requires at
most~$\lceil n \alpha/\log 2 \rceil$
additional channel uses. Note that the length of this list is determined by the
codeword and the first~$n$ channel outputs. We can upper-bound its $\rho$-th
moment by
\begin{equation}
\label{eq:287}
e^{-nR} \sum_{m=1}^{e^{nR}} \sum_{\bfy\in\set{Y}^n} W^n(\bfy|\bfx_m)
\bigl\lceil e^{-n\alpha}\card{\set{M}(\bfy,P_{\bfy|\bfx_m})}\bigr\rceil^\rho,
\end{equation}
where $P_{\bfy|\bfx_m}$ denotes the conditional type of $\bfy$ given $\bfx_m$.
Using the inequality
\begin{equation}
\lceil \xi \rceil^\rho <1+2^\rho \xi^\rho,\quad \xi\geq 0,
\end{equation}
we can upper-bound~\eqref{eq:287} by
\begin{equation}
\label{eq:297}
1+ 2^\rho e^{-n(R+\rho\alpha)} \sum_{m=1}^{e^{nR}} \sum_{\bfy\in\set{Y}^n} W^n(\bfy|\bfx_m)
\bigcard{\set{M}(\bfy,P_{\bfy|\bfx_m})}^\rho.
\end{equation}
Changing the order of summation, we can rewrite~\eqref{eq:297} as
\begin{multline}
\label{eq:297_2}
1+ 2^\rho e^{-n(R+\rho\alpha)}\\ \times \sum_{\bfy\in\set{Y}^n} \sum_{V}
\sum_{m\in \set{M}(\bfy,V)} W^n(\bfy|\bfx_m)
\bigcard{\set{M}(\bfy,V)}^\rho,
\end{multline}
where the middle sum extends over conditional types $V$. 
Using the identity
\begin{equation*}
W^n(\bfy|\bfx_m) = e^{-n(D(V||W|P_n)+H(V|P_n))},\quad \text{if $m \in
\set{M}(\bfy,V)$,}
\end{equation*}
and the fact that $\set{M}(\bfy,V)$ can be nonempty only if $\bfy$ has type~$P_nV$, 
we can rewrite~\eqref{eq:297_2} as
\begin{multline}
\label{eq:314}
1+2^\rho e^{-n(R+\rho\alpha)}\\ \times \sum\limits_{V} \sum\limits_{\bfy\in T_{P_nV}}
e^{-n(D(V||W|P_n)+H(V|P_n))} \card{\set{M}(\bfy,V)}^{1+\rho}.
\end{multline}
Next, we average the
upper bound~\eqref{eq:314} over all realizations of a random codebook
$\bfX_1,\ldots,\bfX_{e^{nR}}$ in which each codeword is drawn independently
and uniformly from $T_{P_n}$. This average is
\begin{multline}
\label{eq:316}
1+2^\rho e^{-n(R+\rho\alpha)}\\
\times \sum\limits_{V}\sum\limits_{\bfy\in T_{P_nV}} 
e^{-n(D(V||W|P_n)+H(V|P_n))}\bigE{\card{\set{M}(\bfy,V)}^{1+\rho}}.
\end{multline}
We now upper-bound the $(1+\rho)$-th moment of $\card{\set{M}(\bfy,V)}$. 
Under the given distribution of the codebook,
\begin{equation}
\card{\set{M}(\bfy,V)} = \sum_{m=1}^{e^{nR}} 1\bigl\{ \bfy \in T_V(\bfX_m)\bigr\}
\end{equation}
is a sum of IID Bernoulli random variables (RVs). To compute the probability 
of the event $\{\bfy\in T_V(\bfX_m)\}$ observe
that if $\bfx_m\in T_{P_n}$ and $\bfy \in T_{P_nV}$, then~$\bfy$ 
is in the $V$-shell of~$\bfx_m$ if, and only if,~$\bfx_m$
is in the~$\widetilde{V}$-shell of~$\bfy$, where 
\begin{equation}
\widetilde{V}(x|y) = \frac{V(y|x)P_n(x)}{(P_nV)(y)},\quad x\in\set{X},\, y\in
\supp(P_nV).
\end{equation}
Consequently, if $\bfy \in T_{P_nV}$, then 
\begin{align}
\Pr\bigl(\bfy\in T_V(\bfX_m)\bigr) &= \Pr\bigl(\bfX_m \in
T_{\widetilde{V}}(\bfy)\bigr)\notag\\
&= \frac{\card{T_{\widetilde{V}}(\bfy)}}{\card{T_{P_n}}}\label{eq:1534_gdkfsl}\\
&\leq e^{-n(H(P_n) - H(\widetilde{V}|P_nV) - \delta_n)}\label{eq:1683_boss}\\
&= e^{-n(I(P_n,V) - \delta_n)},\label{eq:991}
\end{align}
where~\eqref{eq:1534_gdkfsl} follows because $T_{\widetilde{V}}(\bfy) \subseteq
T_{P_n}$ when
$\bfy\in T_{P_nV}$, and because $\bfX_m$ is drawn uniformly at random
from~$T_{P_n}$;
where~\eqref{eq:1683_boss} follows because $\card{T_{\widetilde{V}}(\bfy)}
\leq e^{n H(\widetilde{V}|P_nV)}$ when $\bfy\in T_{P_nV}$, and because
$\card{T_{P_n}} \geq
e^{n(H(P_n)-\delta_n)}$; and where~\eqref{eq:991} follows by noting that
\begin{align}
H(P_n) - H(\widetilde{V}|P_nV) &= H(P_nV) - H(V|P_n)\notag\\
&=I(P_n,V).\label{eq:1544_dfsdf}
\end{align}
It is important to note that the $\delta_n$ appearing in~\eqref{eq:991} does not
depend on $V$. In fact, it can be taken as
\begin{equation}
\delta_n = \frac{\card{\set{X}} \log(n+1)}{n}.
\end{equation}
To bound the $(1+\rho)$-th moment of a binomial RV with exponential parameters,
we use Lemma~\ref{lem:binomial} (Appendix~\ref{sec:binomial}),
specifically~\eqref{eq:355}. 
This yields for every $\bfy \in T_{P_nV}$
\begin{multline}
\label{eq:350}
\bigE{\card{\set{M}(\bfy,V)}^{1+\rho}}\\ \leq \gamma e^{n(R-I(P_n,V)+\delta_n)} +
\gamma e^{n(1+\rho)(R-I(P_n,V)+\delta_n)}.
\end{multline}
Using~\eqref{eq:350}, the fact that $\card{T_{P_nV}} \leq e^{nH(P_nV)}$, 
and~\eqref{eq:1544_dfsdf}, we can upper-bound \eqref{eq:316} by
\begin{multline}
\label{eq:1009}
1+\gamma 2^\rho \sum_{V} e^{-n(\rho\alpha + D(V||W|P_n)-\delta_n)}\\
+ \gamma 2^\rho
\sum_{V} e^{-n( \rho\alpha -\rho R+ D(V||W|P_n) +\rho I(P_n,V) -(1+\rho)\delta_n)}.
\end{multline}
Since $D(V||W|P_n)$ is nonnegative, and since the number of conditional types
$V$ is polynomial in $n$, we can
upper-bound~\eqref{eq:1009} by
\begin{multline}
\label{eq:1016}
1+\gamma 2^\rho e^{-n(\rho\alpha  -\delta'_n)}\\ 
+\gamma 2^\rho e^{-n(\rho\alpha -\rho R+ \min_V D(V||W|P_n) +\rho
I(P_n,V) -\delta'_n)},
\end{multline}
where the minimum is over all channels $V(y|x)$, $x\in \set{X}$, $y\in \set{Y}$. 
The first exponential term tends to zero as $n$ tends to infinity if $\alpha>0$.
The second
exponential term tends to zero if
\begin{equation}
\label{eq:528}
R<  \alpha+ \varliminf_{n\to\infty}\min_{V}  I(P_n,V)+ \rho^{-1} D(V||W|P_n).
\end{equation}
The rate of the coding scheme approaches $R/(1+\alpha/\log2)$ as $n$ tends
to infinity. Letting $\alpha$ tend to zero, it thus follows 
that any rate below 
\begin{equation}
\label{eq:1613_min_bla}
\varliminf_{n\to\infty} \min_V I(P_n,V) + \rho^{-1} D(V||W|P_n)
\end{equation}
is achievable. And since by~\cite[Exercise 10.24]{csiszar2011information}
\begin{equation}
I(P_n,V) + \rho^{-1} D(V||W|P_n) \geq \frac{E_0(\rho,P_n)}{\rho},
\end{equation}
it follows from the continuity of $E_0(\rho,P)$ in $P$ that all rates 
below~$E_0(\rho,P^\star)/\rho$
are achievable.\qed

\section{A Lower Bound on the Listsize Capacity with Feedback}
\label{sec:feedback_lb}
The direct part of Theorem~\ref{thm:list_cutoff_achieve} is useless when $\Czero=0$. With this
case in mind, we propose
\begin{theorem}
\label{thm:feedback_lb}
If $\Cal>0$, then
\begin{equation}
\label{eq:feedback_lb}
\Calf \geq \frac{R^\star(\rho)}{1+\frac{\rho R^\star(\rho)}{\log \frac{1}{1-q^\star}}},
\end{equation}
where
\begin{equation}
\label{eq:1644_rstar}
R^\star(\rho) = \sup_{\xi > 0} \max_{P} \frac{E_0(\xi,P)}{\xi+\rho},
\end{equation}
and where $q^\star$ is the maximum of $W(\set{Y}_0|x_1)$ taken over all
$x_1\in\set{X}$ and over all the subsets $\set{Y}_0 \subset \set{Y}$ for which there
exists some $x_0\in\set{X}$ with $W(\set{Y}_0|x_0)=0$. If $\Czero>0$, i.e., if
the zero-error capacity is positive, then
$q^\star =1$, and we interpret the right-hand side of~\eqref{eq:feedback_lb} as~$R^\star(\rho)$. 
\end{theorem}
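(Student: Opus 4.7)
The plan is to construct a two-phase feedback scheme. In phase~1, of length~$n$ and rate~$R$, we use a constant-composition random code with input type~$P$, analogously to the direct part of Theorem~\ref{thm:list_cutoff_achieve}. Arguments paralleling those of Section~\ref{sec:cutoff_achieve} yield a bound on the $\xi$-th moment of the initial list~$\set{L}_1$ (for a parameter~$\xi > 0$ to be chosen) in terms of Gallager's function~$E_0(\xi,P)$: schematically,
\[
\E{\card{\set{L}_1}^\xi} \leq 1 + c_1\,e^{n(\xi R - E_0(\xi,P))}.
\]

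Phase~2, of length~$n'$, uses the feedback and the triple $(x_0,x_1,\set{Y}_0)$ achieving~$q^\star$ to prune the list. At each channel use the encoder transmits~$x_0$ or~$x_1$ depending on whether the true message lies in a pre-agreed subset of the currently-surviving list; since $W(\set{Y}_0|x_0) = 0$ but $W(\set{Y}_0|x_1) \geq q^\star$, an output in~$\set{Y}_0$ certifies that the true message lies outside the subset, allowing it to be pruned. With a careful adaptive design, each fixed wrong candidate in~$\set{L}_1$ can be shown to survive phase~2 with probability at most $(1-q^\star)^{n'} = e^{-n'\gamma}$, where $\gamma = \log(1/(1-q^\star))$.

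Combining the two phases via a H\"older-type interpolation between the $\xi$-moment of~$\set{L}_1$ and the per-message survival probability yields
\[
\E{\card{\set{L}_{\text{final}}}^\rho} \leq 1 + c_2\,e^{n(\xi R - E_0(\xi,P)) - n'(\xi-\rho)\gamma},
\]
which tends to~$1$ provided $\xi R - E_0(\xi,P) < (\xi-\rho)\gamma\,(n'/n)$. Maximizing the overall rate $nR/(n+n')$ subject to this constraint---over $R$, $n'/n$, $\xi$, and $P$---yields, after straightforward algebra, the bound~\eqref{eq:feedback_lb}: the supremum defining $R^\star(\rho)$ comes from the freedom in choosing~$\xi$, and the factor $1+\rho R^\star(\rho)/\gamma$ in the denominator corresponds to the optimal blocklength allocation $n'/n = \rho R^\star(\rho)/\gamma$. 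When $\Czero > 0$, one has $q^\star = 1$ and $\gamma = \infty$, so phase~2 degenerates to the zero-error protocol already present in Theorem~\ref{thm:list_cutoff_achieve}.

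The main technical obstacle lies in the design of the phase~2 pruning protocol so that the $(1-q^\star)^{n'}$ per-message survival bound holds for \emph{every} wrong candidate in~$\set{L}_1$ simultaneously, while preserving the true message in the list with probability one. A secondary issue is establishing the phase~1 $\xi$-th moment bound for the \emph{all-consistent} list~$\set{L}_1$ (as opposed to the cutoff list~$\set{L}(M,\bfY)$), which requires a separate but parallel Gallager-type computation to the one carried out in Section~\ref{sec:cutoff_achieve}.
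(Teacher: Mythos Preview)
Your Phase~1 claim is the fatal gap. You assert that random constant-composition coding gives
\[
\bigE{\card{\set{L}_1}^\xi} \leq 1 + c_1\,e^{n(\xi R - E_0(\xi,P))}
\]
for the \emph{all-consistent} list $\set{L}_1=\set{L}(\bfy)$. This is false in general: random coding on $\set{L}(\bfy)$ yields only the Forney-type exponent of Proposition~\ref{prop:forney_cal_rho} (or its constant-composition refinement~\eqref{eq:const_comp_lb_cal}), not $E_0(\xi,P)$. Indeed, were your bound true it would give $\Cal \geq \max_P E_0(\rho,P)/\rho = \Coff$ with no feedback whatsoever, contradicting the channel of Figure~\ref{fig:bsc_plus_wire}, where $\Cal \leq \log 2$ but $\Coff \to \log 3$ as $\epsilon\to 0$. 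The ``parallel Gallager-type computation'' you allude to does not exist for $\set{L}(\bfy)$: the event $\{W^n(\bfy|\bfX_{m'})>0\}$ is governed by the zero pattern of $W$, not by the tilted likelihoods that produce $E_0$. In Section~\ref{sec:cutoff_achieve} the $E_0$ bound is recovered only \emph{after} the conditional type $V$ has been sent over a zero-error link---a step unavailable here, since precisely $\Czero=0$ is the regime Theorem~\ref{thm:feedback_lb} targets.

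The paper's proof sidesteps this by changing what $E_0$ is asked to bound. Phase~1 employs a \emph{fixed-size} list-of-$\ell$ ML decoder, and the relevant quantity is not a list moment but the list-decoding \emph{error probability}
\[
\Pr\bigl(\text{true message}\notin\text{top-}\ell\bigr)\leq e^{-n(E_0(\xi,P)-\xi R)},\qquad 0\leq\xi\leq\ell,
\]
from Gallager~\cite[Exercise~5.20]{gallager1968information}. On this event the decoder outputs all $e^{nR}$ messages, contributing at most $e^{n\rho R}e^{-n(E_0(\xi,P)-\xi R)}=e^{-n(\xi+\rho)\bigl(E_0(\xi,P)/(\xi+\rho)-R\bigr)}$ to the $\rho$-th moment; this is exactly where $R^\star(\rho)=\sup_{\xi,P}E_0(\xi,P)/(\xi+\rho)$ enters. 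Phase~2 is then merely an $n'$-fold repetition of $x_0$ or $x_1$ to certify, one-sidedly, that the true message made the short list (cost: $e^{n\rho R}(1-q^\star)^{n'}$ when the certificate fails), and a brief Phase~3 of length $k\ell$ pins down the position among the $\ell$ survivors. No H\"older interpolation and no adaptive per-candidate pruning are needed, so the ``main technical obstacle'' you flag never arises.
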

Note that the assumption $\Cal>0$ implies $q^\star>0$. Indeed, if $\Cal>0$, then, by 
Proposition~\ref{prop:list_erasure_relation}
Part~\ref{item:prop:list_erasure_relation_2}, we can find $x_0,x_1,y_0$ such that
$W(y_0|x_0)=0$ and $W(y_0|x_1)>0$. Taking $\set{Y}_0 = \{y_0\}$
thus shows that $q^\star \geq W(\set{Y}_0|x_1)>0$. 
Also note that, in view of Theorem~\ref{thm:list_cutoff_achieve} and
Proposition~\ref{prop:factorize}, the lower bound in~\eqref{eq:feedback_lb} is
interesting only when $\Czero=0$ and the channel law does not factorize in the
Csisz\'ar-Narayan sense~\eqref{eq:factorize}. 


Before presenting a proof of Theorem~\ref{thm:feedback_lb}, we use it to 
provide a proof of
Proposition~\ref{prop:low_noise_fb}, and we give another corollary to
Theorem~\ref{thm:feedback_lb}, Corollary~\ref{cor:limit}, 
which contains the earlier result~\eqref{eq:nakiboglu}. 

As to the proof of Proposition~\ref{prop:low_noise_fb}, in the notation of
Theorem~\ref{thm:feedback_lb} we have
$q^\star \geq 1-\epsilon$ if $W$ is $\epsilon$-noise and $\Cal>0$. 
Indeed, $\Cal>0$ implies that there exist $x_0$ and~$y_0$ such that $W(y_0|x_0)=0$
(Proposition~\ref{prop:list_erasure_relation} Part~\ref{item:prop:list_erasure_relation_2}), and the
$\epsilon$-noise property implies that $W(y_0|y_0)\geq 1-\epsilon$. 
Consequently, $\set{Y}_0 = \{y_0\}$, $x_1=y_0$ is a feasible choice in
the definition of $q^\star$. 
Moreover, if $P$ is the uniform PMF on~$\set{X}$, then
\begin{align}
&E_0(\xi,P)\notag\\
&= (1+\xi) \log \card{\set{X}} - \log \sum_{y\in\set{X}} \Bigl(
\sum_{x\in\set{X}}
W(y|x)^{\frac{1}{1+\xi}}\Bigr)^{1+\xi}\notag\\
&\geq (1+\xi)\log \card{\set{X}} - \log \sum_{y\in\set{X}} \Bigl(1+
\bigl(\card{\set{X}}-1\bigr)
\epsilon^{\frac{1}{1+\xi}}\Bigr)^{1+\xi}\notag\\
&=\xi \log \card{\set{X}} - (1+\xi)
\log\Bigl(1+\bigl(\card{\set{X}}-1\bigr)\epsilon^{\frac{1}{1+\xi}}\Bigr).\label{eq:1748_lownoise}
\end{align}
Now fix $\delta>0$ and choose $\xi>0$ large enough such that
$\xi/(\xi+\rho)> 1-\delta$. Then from~\eqref{eq:1644_rstar}
and~\eqref{eq:1748_lownoise} it follows that
\begin{equation}
R^\star(\rho) \geq (1-\delta) \log \card{\set{X}} - \frac{1+\xi}{\xi+\rho}
\log\Bigl(1+ \bigl(\card{\set{X}}-1\bigr) \epsilon^{\frac{1}{1+\xi}}\Bigr),
\end{equation}
and since the second term on the right-hand side tends to zero as $\epsilon \to 0$, it follows
from~\eqref{eq:feedback_lb} that
\begin{equation}
\liminf_{\epsilon \to 0} \Calf \geq (1-\delta) \log \card{\set{X}}.
\end{equation}
Letting $\delta \to 0$ thus proves~\eqref{eq:low_noise_fb}.\qed

\begin{corollary}
\label{cor:limit}
\begin{equation}
\label{eq:1145}
\lim_{\rho\to 0} \Calf = \Ceof = \begin{cases} C& \text{if
$\Ceo>0$,}\\0&\text{otherwise.}\end{cases}
\end{equation}
\end{corollary}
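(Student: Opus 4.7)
The plan is to split into two cases according to whether $\Ceo>0$. The easy case is $\Ceo=0$: equation~\eqref{eq:nakiboglu} immediately gives $\Ceof=0$, and Proposition~\ref{prop:list_erasure_relation_fb} asserts that positivity of any one of $\Cal,\Calf,\Ceo,\Ceof$ is equivalent to positivity of all four. Hence $\Calf=0$ for every $\rho>0$, and the limit is trivially zero.

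In the substantive case $\Ceo>0$, equation~\eqref{eq:nakiboglu} gives $\Ceof=C$, while Proposition~\ref{prop:list_erasure_relation_fb} Part~\ref{item:prop:list_erasure_relation_1} yields the trivial upper bound $\Calf\leq\Ceof=C$ for every $\rho>0$. For the matching lower bound I apply Theorem~\ref{thm:feedback_lb} (which is applicable because $\Ceo>0$ forces $\Cal>0$) and analyze the right-hand side of~\eqref{eq:feedback_lb} as $\rho\to0$. The constant $q^\star\in(0,1]$ does not depend on $\rho$, so once I establish that $R^\star(\rho)\to C$ the denominator $1+\rho R^\star(\rho)/\log(1/(1-q^\star))$ (or its convention when $\Czero>0$, in which case the right-hand side of~\eqref{eq:feedback_lb} is just $R^\star(\rho)$) tends to $1$, and the whole lower bound tends to $C$.

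Showing that $\lim_{\rho\to0}R^\star(\rho)=C$ is therefore the only substantive step, and the one I expect to be the main obstacle. The upper direction is easy: the standard bound $E_0(\xi,P)\leq\xi I(P,W)\leq\xi C$ yields $R^\star(\rho)\leq C$ for every $\rho>0$. For the lower direction I use the standard fact $\lim_{\xi\to0^+}E_0(\xi,P)/\xi=I(P,W)$: fix $\epsilon>0$, choose $P^\star$ with $I(P^\star,W)>C-\epsilon$ and $\xi_0>0$ with $E_0(\xi_0,P^\star)/\xi_0>C-2\epsilon$, and observe that for all sufficiently small $\rho$ one has $E_0(\xi_0,P^\star)/(\xi_0+\rho)>C-3\epsilon$. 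Hence $\liminf_{\rho\to0}R^\star(\rho)\geq C-3\epsilon$, and since $\epsilon$ is arbitrary the limit equals $C$. I take care not to swap the $\sup_\xi$ with $\lim_{\rho\to0}$ without justification, which is exactly what the explicit $\epsilon$-argument avoids. Combining with the trivial upper bound yields $\lim_{\rho\to0}\Calf=C=\Ceof$, completing the proof.
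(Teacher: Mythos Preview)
Your proof is correct and follows essentially the same route as the paper: the case split on $\Ceo$, the appeal to Theorem~\ref{thm:feedback_lb} for the lower bound when $\Ceo>0$, and the computation $\lim_{\rho\to0}R^\star(\rho)=C$ (the paper does this last step by observing that $R^\star(\rho)$ is nonincreasing and swapping suprema, rather than with your explicit $\epsilon$-argument, but the content is the same). One small difference worth noting: you invoke~\eqref{eq:nakiboglu} to obtain $\Ceof=C$ (and $\Ceof=0$), whereas the paper deliberately avoids this and instead recovers $\Ceof=C$ from the sandwich $C\leq\lim_{\rho\to0}\Calf\leq\Ceof\leq C$; this is why the text advertises the corollary as \emph{containing}~\eqref{eq:nakiboglu} rather than relying on it.
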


\begin{proof}
If $\Ceo=0$, then by Proposition~\ref{prop:list_erasure_relation_fb} also $\Ceof=0$ and
$\Calf=0$ for all $\rho>0$.
If $\Ceo>0$, then $\Cal>0$ and~\eqref{eq:feedback_lb} holds
for all $\rho>0$.
Moreover, from~\eqref{eq:1644_rstar} we have 
\begin{align}
\lim_{\rho\to 0} R^\star(\rho) &=\sup_{\rho>0} \sup_{\xi>0} \max_P
\frac{E_0(\xi,P)}{\xi+\rho}\notag\\
&= \max_P \sup_{\xi >0} 
\frac{E_0(\xi,P)}{\xi}\notag\\
&= \max_P I(P,W)\label{eq:1590_342}\\
&= C,\label{eq:1709_C}
\end{align}
where~\eqref{eq:1590_342} follows because $E_0(\xi,P)/\xi$ is nonincreasing in
$\xi>0$, $E_0(0,P)=0$, and $\partial E_0(\xi,P)/\partial \xi |_{\xi=0} =
I(P,W)$ (see~\cite[Thm.~5.6.3]{gallager1968information}). Consequently, by~\eqref{eq:feedback_lb}, $\lim_{\rho\to 0 } \Calf  \geq C$. And since $\Calf \leq
\Ceof \leq C$, it follows that $\lim_{\rho\to 0} \Calf = \Ceof = C$.  
\end{proof}

To prove Theorem~\ref{thm:feedback_lb}, we propose the following coding scheme. 
Select a positive integer $\ell$ and let $x_0,x_1,\set{Y}_0$ achieve~$q^\star$. 
In the first phase, we use a blocklength-$n$ rate-$R$ encoder paired with a
decoder that produces a list of the $\ell$ most likely messages given the
output of the channel (resolving ties arbitrarily).  
As shown in~\cite[Exercise 5.20]{gallager1968information}, for every PMF~$P$ on
$\set{X}$ we can find a sequence of
such encoders (indexed by the blocklength~$n$) such that the probability of the 
correct message not being on the list is
at most $e^{-n(E_0(\xi,P)-\xi R)}$ for every~$0 \leq \xi \leq \ell$. 

Thanks to the feedback, the transmitter knows which messages are on the
decoder's list, and in the second phase it tries to tell the receiver whether the
correct message is among them. To indicate that the correct message is on the
list, it sends~$n'$ times the symbol~$x_1$; otherwise it sends~$n'$ times the
symbol~$x_0$. 
Accordingly, if the receiver observes at least one symbol in $\set{Y}_0$ during the second
phase, it knows with certainty that the correct message is on the list
(because~$W(\set{Y}_0|x_0)=0$); otherwise it assumes that the correct message is
not on the list, it ignores the third phase, and it produces a final list
containing all $e^{nR}$ messages. 

If the first two phases are successful, i.e., if the list contains the correct
message and the receiver is aware of it, then the third phase 
is used to transmit the position of the correct message in the
list.
To this end, we construct~$\ell$ auxiliary codewords $\bfx_1,\ldots,\bfx_{\ell}$
of length~$k \ell$, where~$k$ is a fixed positive integer, as follows. The components $(i-1)k+1,\ldots, ik$ of $\bfx_i$
equal~$x_1$ and all its other components equal~$x_0$. 
The receiver can identify the correct auxiliary codeword, and thus produce the
correct message, if at least one symbol in~$\set{Y}_0$
is observed at the output during the third phase (because $W(\set{Y}_0|x_0)=0$ 
and the~$x_1$-patterns are disjoint). If no symbol in~$\set{Y}_0$ is observed
during the third phase, it produces the
list of size $\ell$ (which is guaranteed to contain the correct message). 
If the first or the second phase is unsuccessful, then it
does not matter what the transmitter does in the third phase. For concreteness,
it sends~$k \ell$ times the symbol~$x_0$. 

%

To analyze the performance of this coding scheme, define the events
\begin{align*}
E_1&=\{\textnormal{correct message not on the list after 1st phase}\},\\
E_2&=\{\textnormal{no symbol in $\set{Y}_0$ is observed in 2nd phase}\},\\
E_3&=\{\textnormal{no symbol in $\set{Y}_0$ is observed in 3rd phase}\}.
\end{align*}
Let $L$ be the length of the list produced by the receiver. The $\rho$-th moment
of $L$ is upper-bounded by
\begin{multline}
1+ \E{L^\rho|E_1} \Pr(E_1) + \E{L^\rho|\setcomp{E}_1\cap E_2}
\Pr(\setcomp{E}_1\cap E_2)\\
+ \E{L^\rho|\setcomp{E}_1\cap \setcomp{E}_2 \cap E_3}
\Pr(\setcomp{E}_1\cap\setcomp{E}_2\cap E_3)\label{eq:expected_list_size_decomp}.
\end{multline}
We upper-bound the right-hand side of~\eqref{eq:expected_list_size_decomp} term by term, beginning
with
\begin{align}
&\E{L^\rho|E_1} \Pr(E_1)\notag\\
&\quad\leq e^{n\rho R} e^{-n(E_0(\xi,P)-\xi R)}\notag\\
&\quad=e^{-n(\xi+\rho)\bigl(\frac{E_0(\xi,P)}{\xi+\rho}-R\bigr)},\quad 0\leq \xi \leq
\ell.\label{eq:lower_bound_first_cond_exp}
\end{align}
The right-hand side of \eqref{eq:lower_bound_first_cond_exp} approaches zero as $n$ tends to
infinity provided that $R < R^\star(\rho)$ and
$\ell$ is large enough so that we can pick a~$\xi$ in the interval $[0,\ell]$
and a PMF $P$ that achieve a value of
$E_0(\xi,P)/(\xi+\rho)$ close enough to $R^\star(\rho)$. The next term on the
right-hand side of~\eqref{eq:expected_list_size_decomp} can be upper-bounded as follows.
\begin{align}
E[L^\rho|\setcomp{E}_1\cap E_2] \Pr(\setcomp{E}_1\cap E_2) &\leq e^{n \rho R}
(1-q^\star)^{n'}\notag\\
&= e^{n\bigl(\rho R -\frac{n'}{n} \log\frac{1}{1-q^\star}\bigr)}.\label{eq:lower_bound_second_cond_exp}
\end{align}
The right-hand side of \eqref{eq:lower_bound_second_cond_exp} approaches zero as
$n$ tends to infinity if we choose
\begin{equation}
\label{eq:ratio_of_blocklengths}
n' = n(1+\delta)\frac{\rho R}{ \log \frac{1}{1-q^\star}}
\end{equation}
for an arbitrarily small $\delta>0$. (If $\Czero>0$, and hence $q^\star=1$, then we
may take $n'=1$.) 
Finally, 
\begin{equation}
E[L^\rho|\setcomp{E}_1 \cap \setcomp{E}_2 \cap
E_3]\Pr(\setcomp{E}_1\cap\setcomp{E}_2\cap E_3) \leq \ell (1-q^\star)^k,
\end{equation}
and the right-hand side can be made arbitrarily small by choosing $k$ sufficiently large.
(If $\Czero>0$, we may take $k=1$.)
The rate of the coding scheme is
\begin{equation}
\frac{R}{1+ \frac{n'}{n}+ \frac{k\ell}{n}}.
\end{equation}
Choosing first~$\ell$ sufficiently large, then~$R$ close to~$R^\star(\rho)$, then~$n'$ as
in~\eqref{eq:ratio_of_blocklengths} with~$\delta$ sufficiently small, then~$k$
sufficiently large, and finally~$n$ sufficiently large shows that that all rates
strictly less than the right-hand side of \eqref{eq:feedback_lb} are
achievable.\qed

\section{A Proof of Proposition~\ref{prop:forney_cal_rho} and 
the Asymptotic Tightness of~\eqref{eq:forney_cal_rho_multi}}
\label{sec:forney}
In this section we derive the lower bound~\eqref{eq:forney_cal_rho} and show 
that its $n$-letter version~\eqref{eq:forney_cal_rho_multi} becomes tight as
$n$ tends to infinity even when $P$ is restricted to be uniform over its support. 

We begin with a proof of~\eqref{eq:forney_cal_rho}.
Given a \mbox{blocklength-$n$} \mbox{rate-$R$} codebook $\bfx_1,\ldots,\bfx_{e^{nR}}$, we can
write the $\rho$-th moment of~$\card{\set{L}(\bfY)}$ as
\begin{equation}
\label{eq:1244}
\frac{1}{e^{nR}} \sum_{m=1}^{e^{nR}} \sum_{\bfy\in\set{Y}^n} W^n(\bfy|\bfx_m)\biggl(1+
\sum_{m'\neq m} Z_{m'}(\bfy)\biggr)^\rho,
\end{equation}
where we define
\begin{equation}
Z_m(\bfy) = 1\bigl\{W^n(\bfy|\bfx_m)>0\bigr\},\quad 1\leq m \leq e^{nR}.
\end{equation}
If the codebook is generated at random by drawing each component of each codeword 
independently according to a PMF~$P$ on~$\set{X}$, then the expectation of~\eqref{eq:1244} (over the codebook) is
\begin{equation}
\label{eq:1251}
\sum_{\bfy\in\set{Y}^n} (PW)^n(\bfy) \BiggE{\biggl(1+\sum_{m=2}^{e^{nR}} Z_m(\bfy)
\biggr)^\rho},
\end{equation}
where for every $\bfy\in\set{Y}^n$ the RVs $Z_1(\bfy),\ldots,Z_{e^{nR}}(\bfy)$
are IID Bernoulli. Note that
\begin{align}
\Pr\bigl(Z_m(\bfy) =1\bigr)&=\prod_{i=1}^n P\bigl(\set{X}(y_i)\bigr)\notag\\
&=\prod_{y\in\set{Y}} P\bigl(\set{X}(y)\bigr)^{nP_\bfy(y)}\notag\\
&=e^{n \sum_{y\in\set{Y}} P_\bfy(y) \log P(\set{X}(y))}\notag\\
&=e^{-n F(P_\bfy)},
\end{align}
where $P_\bfy$ is the type of $\bfy$, and where we define
\begin{equation}
F(Q) = -\sum_{y\in\set{Y}} Q(y) \log P\bigl(\set{X}(y)\bigr).
\end{equation}
To prove~\eqref{eq:forney_cal_rho} it suffices to show
that~\eqref{eq:1251} tends to one as $n$ tends to infinity whenever
\begin{equation}
\label{eq:2030_blabla}
R< - \rho^{-1} \log \sum_{y\in\set{Y}} (PW)(y) P\bigl(\set{X}(y)\bigr)^\rho.
\end{equation}
We first show that~\eqref{eq:2030_blabla} is equivalent to  
\begin{equation}
\label{eq:1274}
R<\min_Q  F(Q) + \rho^{-1} D(Q||PW).
\end{equation}
where the minimum is over all PMFs $Q$ on $\set{Y}$. 
Indeed, observe that
\begin{align}
&F(Q) + \rho^{-1} D(Q||PW)\notag\\
&\quad= - \rho^{-1} \sum_{y\in\set{Y}} Q(y) \log \frac{
(PW)(y)P(\set{X}(y))^\rho}{Q(y)}\notag\\
&\quad\geq -\rho^{-1} \log \sum_{y\in\set{Y}} (PW)(y)
P(\set{X}(y))^\rho,\label{eq:2148_jensen}
\end{align}
where~\eqref{eq:2148_jensen} follows from Jensen's Inequality. 
The choice
\begin{equation}
Q(y) = \frac{(PW)(y) P(\set{X}(y))^\rho}{\sum_{y'\in\set{Y}}
(PW)(y')P(\set{X}(y'))^\rho},\quad y\in\set{Y},
\end{equation}
achieves equality in~\eqref{eq:2148_jensen}.

Using Lemma~\ref{lem:binomial} (Appendix~\ref{sec:binomial}),
specifically~\eqref{eq:1547}, 
we can upper-bound~\eqref{eq:1251} by
\begin{multline}
\label{eq:1796_3420}
1+\gamma \sum_{\bfy\in\set{Y}^n}(PW)^n(\bfy) \Bigl(e^{n(R- F(P_\bfy))}1\{R \leq
F(P_{\bfy})\}\\
+ e^{n\rho(R- F(P_\bfy))}1\{R > F(P_{\bfy})\}\Bigr).
\end{multline}
Since $(PW)^n(T_Q) \leq e^{-nD(Q||PW)}$, we can upper-bound the sum
in~\eqref{eq:1796_3420} by 
\begin{multline}
\label{eq:1803_3908}
\sum_{Q:R\leq F(Q)}  e^{n(R- F(Q) - D(Q||PW))}\\
+\sum_{Q:R>F(Q)}
 e^{n\rho (R- F(Q) - \rho^{-1} D(Q||PW))},
\end{multline}
where $Q$ runs over all types in $\set{Y}^n$. 
Next, we show that if the rate $R$ satisfies~\eqref{eq:1274}, then~\eqref{eq:1803_3908} tends to
zero as $n$ tends to infinity.
Assume therefore that~\eqref{eq:1274} holds and define the positive number
\begin{equation}
\delta= \min_Q  F(Q) + \rho^{-1} D(Q||PW)-R. 
\end{equation}
The second sum in~\eqref{eq:1803_3908} tends to zero as $n$ tends to infinity
because the summand is upper-bounded by $e^{-n\rho \delta}$ and the number of
different types is polynomial in $n$. 
To show that the first sum in~\eqref{eq:1803_3908} tends to zero, we consider separately the cases
$\rho \geq 1$ and $\rho<1$. In the former case, the summand is upper-bounded by
$e^{-n \delta}$ because $D(Q||PW) \geq \rho^{-1} D(Q||PW)$. In the latter
case, the summand is upper-bounded by $e^{-n \rho \delta}$ because 
$R-F(Q) \leq \rho(R-F(Q))$ when $R\leq F(Q)$. 
We conclude that~\eqref{eq:1803_3908} tends to zero as $n$ tends to infinity 
for all rates $R$ satisfying~\eqref{eq:1274}. 
In view of the equivalence of~\eqref{eq:2030_blabla} and~\eqref{eq:1274}, this
completes the proof of~\eqref{eq:forney_cal_rho}.

To prove that~\eqref{eq:forney_cal_rho_multi} is asymptotically tight even when
$P$ is restricted to be uniform over its support, we define
\begin{equation}
J_n(\rho,P) = -\frac{1}{n\rho}\log \sum_{\bfy\in\set{Y}} (PW^n)(\bfy)
P\bigl(\set{X}^n(\bfy)\bigr)^\rho.
\end{equation}
Since~\eqref{eq:forney_cal_rho_multi} holds for every $n$, and since restricting
the feasible set cannot help, 
\begin{equation}
\label{eq:1476}
\Cal \geq \varlimsup_{n\to\infty} \max_{P\in\set{U}_n} J_n(\rho,P),
\end{equation}
where $\set{U}_n$ denotes the set of PMFs on $\set{X}^n$ that are uniform over
their support. 
It remains to show that 
\begin{equation}
\label{eq:2069_show}
\Cal \leq \varliminf_{n\to\infty} \max_{P\in\set{U}_n} J_n(\rho,P).
\end{equation}
To this end, fix a sequence of rate-$R$ blocklength-$n$ encoders~$(f_n)_{n\geq1}$ with 
\begin{equation}
\label{eq:1299}
e^{-nR} \sum_{m=1}^{e^{nR}} \sum_{\bfy\in\set{Y}^n} W^n\bigl(\bfy|f_n(m)\bigr)
\card{\set{L}(\bfy)}^\rho \leq  1+\epsilon_n,
\end{equation}
where $\epsilon_n\to0$ as $n\to\infty$. 
We first argue that the number of codewords to which only one message is mapped
by $f_n$ is at least $e^{n(R-\delta_n)}$. 
Indeed, if $m\neq m'$
and $f_n(m)=f_n(m')$, then $\card{\set{L}(\bfy)}\geq 2$ whenever $W^n(\bfy|f_n(m))>0$
(because then also $W^n(\bfy|f_n(m'))>0$), and hence
\begin{equation}
\label{eq:1306}
\sum_{\bfy\in\set{Y}^n} W^n\bigl(\bfy|f_n(m)\bigr) \card{\set{L}(\bfy)}^\rho \geq 2^\rho.
\end{equation}
If we define 
\begin{equation*}
\set{M}_n=\bigl\{1\leq m \leq e^{nR}: \text{$f_n(m')\neq f_n(m)$ for all $m' \neq
m$}\bigr\},
\end{equation*}
then it follows from~\eqref{eq:1299} and~\eqref{eq:1306} that
\begin{align}
e^{-nR} \card{\setcomp{\set{M}}_n} 2^\rho &\leq e^{-nR}\sum_{m=1}^{e^{nR}}
\sum_{\bfy\in\set{Y}^n} W^n\bigl(\bfy|f_n(m)\bigr)
\card{\set{L}(\bfy)}^\rho\notag\\
&\leq 1+\epsilon_n,  \label{eq:2067_rearrange}
\end{align}
where $\setcomp{\set{M}}_n$ denotes the set complement of $\set{M}_n$ 
in~$\{1,\ldots, e^{nR}\}$. 
Rearranging~\eqref{eq:2067_rearrange} gives
\begin{equation}
\label{eq:1964_engl}
\card{\setcomp{\set{M}}_n} \leq e^{nR} 2^{-\rho}(1+\epsilon_n).
\end{equation}
Since $\epsilon_n \to 0$ as $n\to\infty$, there exists $n_0$ such that $2^{-\rho}(1+\epsilon_n)<1$ for all $n\geq n_0$. 
Henceforth assume that~$n\geq n_0$. Since
$\card{\set{M}_n}+\card{\comp{\set{M}}_n}=e^{nR}$, it follows
from~\eqref{eq:1964_engl} that
\begin{align}
\card{\set{M}_n} &\geq e^{nR}\bigl(1-
2^{-\rho}(1+\epsilon_n)\bigr)\notag\\
&=e^{n(R-\delta_n)}.\label{eq:1971_fsds}
\end{align}
Since $1+\epsilon_n < 2^{\rho}$, restricting the
message set to $\set{M}_n$ can only decrease the $\rho$-th moment of
the length of the list, so
\begin{equation}
\label{eq:1332}
\frac{1}{\card{\set{M}_n}} \sum_{m\in\set{M}_n}\sum_{\bfy\in\set{Y}^n}
W^n\bigl(\bfy|f_n(m)\bigr) \card{\tilde{\set{L}}(\bfy)}^\rho \leq 1+\epsilon_n,
\end{equation}
where 
\begin{equation}
\tilde{\set{L}}(\bfy) =\bigl\{m\in\set{M}_n: W^n(\bfy|f_n(m))>0\bigr\}.
\end{equation}
Let $P_n$ be the uniform PMF on the set $\{f_n(m):m\in\set{M}_n\}$. Then $P_n
\in \set{U}_n$ and 
\begin{align}
P_n\bigl(\set{X}^n(\bfy)\bigr)&=
\frac{\card{\tilde{\set{L}}(\bfy)}}{\card{\set{M}_n}}\notag\\
&\leq e^{-n(R-\delta_n)} \card{\tilde{\set{L}}(\bfy)},\label{eq:1990_dfse}
\end{align}
where~\eqref{eq:1990_dfse} follows from~\eqref{eq:1971_fsds}.
Rearranging~\eqref{eq:1990_dfse} gives
\begin{equation}
\label{eq:1487}
\card{\tilde{\set{L}}(\bfy)} \geq e^{n(R-\delta_n)}
P_n\bigl(\set{X}^n(\bfy)\bigr).
\end{equation}
Combining~\eqref{eq:1487} and~\eqref{eq:1332}, and taking logarithms, we obtain
\begin{equation}
\label{eq:1293}
\log(1+\epsilon_n) \geq  n\rho (R-\delta_n) - n\rho J_n(\rho,P_n). 
\end{equation}
Dividing by $n\rho$ and letting $n\to\infty$ shows that
\begin{equation}
\label{eq:1972_bfd}
R \leq \varliminf_{n\to\infty} J_n(\rho,P_n). 
\end{equation}
The right-hand side of~\eqref{eq:1972_bfd} is upper-bounded by the right-hand side
of~\eqref{eq:2069_show} because $P_n \in \set{U}_n$.\qed

\appendices
\section{Exponential Upper Bounds on the $\rho$-th Moment of Binomial RVs}
\label{sec:binomial}
\begin{lemma}
\label{lem:binomial}
Let $X_1,\ldots,X_{e^{n\alpha}}$ be IID Bernoulli RVs with success probability 
\begin{equation}
\label{eq:success_p_bound}
p_n=\Pr(X_i=1) =1-\Pr(X_i=0) \leq e^{-n\beta}, 
\end{equation}
where $n\in\naturals$, $\alpha>0$ and $\beta\geq 0$. Let $\rho>0$. Then 
\begin{equation}
\label{eq:1547}
\BiggE{\biggl(1+\sum_{i=1}^{e^{n\alpha}} X_i\biggr)^\rho}\leq \begin{cases} 1 +
\gamma e^{n(\alpha-\beta)}& \text{if $\beta \geq \alpha$,}\\ \gamma
e^{n\rho(\alpha-\beta)}& \text{if $\beta<\alpha$,}\end{cases}
\end{equation}
and
\begin{equation}
\BiggE{ \biggl(\sum_{i=1}^{e^{n\alpha}} X_i\biggr)^{\rho}} \leq \begin{cases}
\gamma e^{n(\alpha-\beta)}& \text{if $\beta\geq \alpha$,} \label{eq:355}
\\\gamma e^{n\rho(\alpha-\beta)}& \text{if $\beta < \alpha$,}  
\end{cases}
\end{equation}
where
\begin{equation}
\label{eq:2125_gamma}
\gamma=\max\bigl\{e^{e^{\rho}-1}, (\lceil \rho \rceil !)^2 \lceil \rho \rceil\bigr\}.
\end{equation}
\end{lemma}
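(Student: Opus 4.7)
The plan is to reduce $\E{(1+S)^\rho}$ (where $S = \sum_i X_i$) to $\E{S^\rho}$ via an elementary scalar inequality, then bound $\E{S^\rho}$ via the falling-factorial moments of $S$, splitting according to whether we are in the small-mean regime $Np_n \leq 1$ (i.e.\ $\beta \geq \alpha$) or the large-mean regime $Np_n > 1$ (i.e.\ $\beta < \alpha$).

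The reduction relies on $(1+s)^\rho \leq 1 + 2^\rho s^\rho$ for every nonnegative integer $s$: trivial for $s = 0$, and for $s \geq 1$ it follows from $1 + s \leq 2s$. Hence
\begin{equation*}
\E{(1+S)^\rho} \leq 1 + 2^\rho \E{S^\rho},
\end{equation*}
so it suffices to bound $\E{S^\rho}$. The falling-factorial moments of the Binomial $S$ are exactly
\begin{equation*}
\E{S(S-1)\cdots(S-j+1)} = N(N-1)\cdots(N-j+1)\,p_n^j \leq (Np_n)^j \leq e^{nj(\alpha-\beta)}.
\end{equation*}
Expanding $S^k$ in terms of falling factorials using Stirling numbers of the second kind $\sigma(k,j)$, and taking $k = \lceil \rho \rceil$, I obtain
\begin{equation*}
\E{S^\rho} \leq \E{S^k} \leq \sum_{j=1}^{k} \sigma(k,j)\,(Np_n)^j,
\end{equation*}
where the first inequality uses that $S^\rho \leq S^k$ for every nonnegative integer $S$ (trivial at $S=0$; clear from $S \geq 1$ and $k \geq \rho$ otherwise).

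In the small-mean regime $Np_n \leq 1$, each factor $(Np_n)^j$ with $j \geq 1$ is bounded by $Np_n$, giving $\E{S^\rho} \leq B_k\,Np_n \leq B_k\,e^{n(\alpha-\beta)}$, where $B_k = \sum_j \sigma(k,j)$ is the Bell number. In the large-mean regime $Np_n > 1$, instead $(Np_n)^j \leq (Np_n)^k$, so $\E{S^k} \leq B_k\,(Np_n)^k$; applying Jensen's inequality to the concave function $t \mapsto t^{\rho/k}$ (note $\rho/k \leq 1$) recovers the correct exponent:
\begin{equation*}
\E{S^\rho} = \E{(S^k)^{\rho/k}} \leq \bigl(\E{S^k}\bigr)^{\rho/k} \leq B_k\,(Np_n)^\rho \leq B_k\,e^{n\rho(\alpha-\beta)}.
\end{equation*}
The two parts of~\eqref{eq:1547} follow by substituting these into $\E{(1+S)^\rho} \leq 1 + 2^\rho\E{S^\rho}$ (the stray $1$ in the large-mean case is absorbed into the constant since $e^{n\rho(\alpha-\beta)} \geq 1$), while~\eqref{eq:355} follows directly from the $\E{S^\rho}$ bounds. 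The final bookkeeping step is to verify that the stated $\gamma = \max\{e^{e^\rho - 1}, (\lceil\rho\rceil!)^2\lceil\rho\rceil\}$ dominates $2^\rho B_{\lceil\rho\rceil}$ and $B_{\lceil\rho\rceil}$, which follows from $B_k \leq k!$ and $2^\rho \leq e^\rho$.

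The main obstacle is the large-mean regime with non-integer $\rho$: without the Jensen step, the factorial-moment method yields $\E{S^k} \leq B_k (Np_n)^{\lceil\rho\rceil}$, which exceeds the target $(Np_n)^\rho$ by the exponential factor $(Np_n)^{\lceil\rho\rceil - \rho}$. Applying Jensen's inequality to $t \mapsto t^{\rho/k}$ is the essential device that rescues the correct exponent at the cost of replacing $B_k^{\rho/k}$ by $B_k$. Everything else is routine moment bookkeeping and constant chasing.
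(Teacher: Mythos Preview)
Your approach is correct in substance and genuinely different from the paper's. The paper treats~\eqref{eq:1547} directly via the exponential bound $(1+\xi)^\rho \leq e^{\rho\xi}$ together with the convexity inequality $e^{\eta\xi} \leq 1 + \xi(e^\eta - 1)$ for $0 \leq \xi \leq 1$, computing the moment-generating function of the binomial explicitly; only for~\eqref{eq:355} in the case $\beta \geq \alpha$ does it resort to a multinomial expansion of $S^k$, in the same spirit as your falling-factorial argument. Your route is more uniform: you prove~\eqref{eq:355} first via falling-factorial moments and Stirling numbers, then reduce~\eqref{eq:1547} to it through $(1+s)^\rho \leq 1 + 2^\rho s^\rho$. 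The Jensen step $\E{S^\rho} \leq (\E{S^k})^{\rho/k}$ for the large-mean non-integer-$\rho$ case is a tidy device the paper does not use; the paper instead factors out $e^{n\rho(\alpha-\beta)}$ before applying the MGF bound.

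Two small points. First, the identification ``$Np_n \leq 1$ iff $\beta \geq \alpha$'' is only one implication; when $\beta < \alpha$ but $Np_n \leq 1$, your small-mean bound still gives $\E{S^\rho} \leq B_k\,Np_n \leq B_k \leq B_k\,e^{n\rho(\alpha-\beta)}$, so no harm done. Second, and more to the point, your constant check does not close for small $\rho$. In the $\beta < \alpha$ case of~\eqref{eq:1547} you actually need $1 + 2^\rho B_k \leq \gamma$ (to absorb the stray $1$), not merely $2^\rho B_k \leq \gamma$. For $\rho$ small, $k = 1$, $B_1 = 1$, this reads $1 + 2^\rho \leq e^{e^\rho - 1}$, which fails (e.g.\ at $\rho = 0.5$: $1 + \sqrt{2} \approx 2.41 > e^{\sqrt{e}-1} \approx 1.91$; the second term of $\gamma$ is just $1$ here). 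Your method does prove the lemma with a slightly larger constant, say $\gamma' = 1 + 2^\rho B_{\lceil\rho\rceil}$, which is all the applications in the paper need; but the stated $\gamma$ is not recovered by this route.
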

\begin{proof}
We use the inequalities
\begin{equation}
\label{eq:1778}
\xi< 1+\xi \leq e^\xi,\quad \xi\in\reals, 
\end{equation}
and
\begin{equation}
\label{eq:exp_convex_ineq}
e^{\eta \xi} \leq 1+ \xi (e^{\eta}-1),\quad 0\leq \xi \leq 1, \; \eta>0.
\end{equation}
(The inequality~\eqref{eq:exp_convex_ineq} is a consequence of the convexity of
the function $\xi \mapsto e^{\eta \xi}$.)

We begin with a proof of~\eqref{eq:1547}. Consider first the case $\beta \geq \alpha$ and observe that
\begin{align}
\BiggE{\biggl(1+\sum_{i=1}^{e^{n\alpha}} X_i\biggr)^\rho} &\leq
\BiggE{\exp\biggl(\rho \sum_{i=1}^{e^{n\alpha}} X_i\biggr)}\label{eq:2233_ineq}\\
&= \bigE{e^{\rho X_1}}^{e^{n\alpha}}\label{eq:2073_34980}\\
&=\bigl(1+p_n (e^\rho-1)\bigr)^{e^{n\alpha}}\notag\\
&\leq \exp\bigl( p_n e^{n\alpha} (e^\rho-1)\bigr)\label{eq:2236_ineq}\\
&\leq \exp\bigl( e^{n(\alpha-\beta)} (e^\rho-1)\bigr)\label{eq:2338_balling}\\
&\leq 1+ e^{n(\alpha-\beta)} (e^{e^\rho-1}-1)\label{eq:2238_ineq}\\
&\leq 1+ \gamma e^{n(\alpha-\beta)},\label{eq:2156_ineq}
\end{align}
where~\eqref{eq:2233_ineq} and~\eqref{eq:2236_ineq} follow from~\eqref{eq:1778};
where~\eqref{eq:2073_34980} follows because the $X_i$'s are IID;
where~\eqref{eq:2338_balling} follows from~\eqref{eq:success_p_bound}; 
where~\eqref{eq:2238_ineq}
follows from~\eqref{eq:exp_convex_ineq} with $\eta=e^\rho-1$ and
$\xi=e^{n(\alpha-\beta)}$; and where~\eqref{eq:2156_ineq} follows
from~\eqref{eq:2125_gamma}.

Now consider the case $\beta <\alpha $ and observe that
\begin{align}
&\BiggE{\biggl(1+\sum_{i=1}^{e^{n\alpha}} X_i\biggr)^\rho}\notag\\
&= e^{n\rho(\alpha-\beta)} \BiggE{\biggl(e^{-n(\alpha-\beta)}+e^{-n(\alpha-\beta)}
\sum_{i=1}^{e^{n\alpha}} X_i\biggr)^\rho}\notag\\
&\leq e^{n\rho(\alpha-\beta)} \BiggE{\biggl(1+e^{-n(\alpha-\beta)}
\sum_{i=1}^{e^{n\alpha}} X_i\biggr)^\rho}\label{eq:2171_ineq}\\
&\leq e^{n\rho(\alpha-\beta)} \BiggE{\exp\biggl(\rho e^{-n(\alpha-\beta)}
\sum_{i=1}^{e^{n\alpha}} X_i\biggr)}\label{eq:2255_ineq_2}\\
&=e^{n\rho(\alpha-\beta)}\bigE{ \exp(\rho e^{-n(\alpha-\beta)}
X_1)}^{e^{n\alpha}}\notag\\
&=e^{n\rho(\alpha-\beta)}\Bigl(1+p_n\bigl(\exp(\rho
e^{-n(\alpha-\beta)})-1\bigr)\Bigr)^{e^{n\alpha}}\notag\\
&\leq e^{n\rho(\alpha-\beta)} \exp \Bigl(p_n e^{n\alpha} \bigl(\exp(\rho
e^{-n(\alpha-\beta)})-1\bigr)\Bigr)\label{eq:2260_ineq}\\
&\leq e^{n\rho(\alpha-\beta)} \exp \Bigl(e^{n(\alpha-\beta)} \bigl(\exp(\rho
e^{-n(\alpha-\beta)})-1\bigr)\Bigr)\label{eq:1578}\\
&\leq e^{n\rho(\alpha-\beta)} e^{e^{\rho}-1}\label{eq:2263_ineq}\\
&\leq \gamma e^{n\rho(\alpha-\beta)},\label{eq:2182_ineq}
\end{align}
where~\eqref{eq:2171_ineq} follows because $e^{-n(\alpha-\beta)} \leq 1$; 
where~\eqref{eq:2255_ineq_2} and~\eqref{eq:2260_ineq} follow from~\eqref{eq:1778}; 
where~\eqref{eq:1578} follows from~\eqref{eq:success_p_bound};
where~\eqref{eq:2263_ineq} follows from~\eqref{eq:exp_convex_ineq} with
$\eta=\rho$ and $\xi= e^{-n(\alpha-\beta)}$; and where~\eqref{eq:2182_ineq}
follows from~\eqref{eq:2125_gamma}.

We now prove~\eqref{eq:355}. The case $\beta<\alpha$ is implied
by~\eqref{eq:1547}, and we only need to treat the case~$\beta\geq \alpha$. 
We first show that~\eqref{eq:355} holds when $\rho$ is an arbitrary positive
integer, which we denote by $k$. 
For any such~$k$, 
\begin{equation}
\label{eq:1431}
\biggl(\sum_{i=1}^{e^{n\alpha}} X_i\biggr)^{k} = \sum
\binom{k}{k_1,\ldots,k_{e^{n\alpha}}} 
\prod_{i=1}^{e^{n\alpha}} X_i^{k_i},
\end{equation}
where the sum on the right-hand side extends over all possible choices of nonnegative integers
$k_1,\ldots,k_{e^{n\alpha}}$ that sum up to~$k$. Taking the expectation on both
sides of~\eqref{eq:1431} yields
\begin{equation}
\label{eq:1440}
\BiggE{\biggl(\sum_{i=1}^{e^{n\alpha}} X_i\biggr)^{k}} = \sum
\binom{k}{k_1,\ldots,k_{e^{n\alpha}}} 
\prod_{i=1}^{e^{n\alpha}} \bigE{X_i^{k_i}},
\end{equation}
where we used the independence of the $X_i$'s.
Since the $X_i$'s are 0--1 valued, we
have $X_i^{k_i} = X_i$ if $k_i\geq 1$, and $X_i^{k_i} = 1$ if $k_i=0$.
Since the $X_i$'s have identical distributions, we thus have
\begin{equation}
\label{eq:1449}
\prod_{i=1}^{e^{n\alpha}} \bigE{X_i^{k_i}}= \E{X_1}^{\card{\{i: k_i\geq 1\}}}.
\end{equation}
Using the trivial upper bound
\begin{equation}
\binom{k}{k_1,\ldots,k_{e^{n\alpha}}} \leq k!,
\end{equation}
and substituting~\eqref{eq:1449} into~\eqref{eq:1440}, we obtain
\begin{equation}
\label{eq:409}
\BiggE{\biggl(\sum_{i=1}^{e^{n\alpha}} X_i\biggr)^{k}} \leq  k!
\sum_{k_1+\ldots+k_{e^{n\alpha}}=k}\E{X_1}^{\card{\{i: k_i\geq 1\}}}.
\end{equation}
For any choice of nonnegative integers $k_1,\ldots,k_{e^{n\alpha}}$ that sum up to
$k$, the number of indices $i$ for which $k_i\geq 1$ must be between~$1$ and~$k$, so we
may rewrite~\eqref{eq:409} as
\begin{equation}
\label{eq:2011_5839}
\BiggE{\biggl(\sum_{i=1}^{e^{n\alpha}} X_i\biggr)^{k}}\leq k! \sum_{\ell=1}^k \binom{e^{n\alpha}}{\ell} \binom{k-1}{\ell-1}\E{X_1}^\ell,
\end{equation}
where the first binomial coefficient accounts for the number of ways we can choose
exactly~$\ell$ of the~$e^{n\alpha}$ integers $k_1,\ldots,k_{e^{n\alpha}}$ to be positive, and where the second binomial
coefficient accounts for the number of ways we can choose the values of $\ell$
positive integers that sum up to~$k$. Upper-bounding~$\binom{k-1}{\ell-1}$ by~$k!$
and upper-bounding~$\binom{e^{n\alpha}}{\ell}$ by~$e^{n\ell
\alpha}$,~\eqref{eq:2011_5839} becomes
\begin{align}
\BiggE{\biggl(\sum_{i=1}^{e^{n\alpha}} X_i\biggr)^{k}} &\leq (k!)^2 \sum_{\ell=1}^k
e^{n\ell \alpha}\E{X_1}^\ell\notag\\
&\leq (k!)^2 \sum_{\ell=1}^k e^{n\ell (\alpha-\beta)}\notag\\
&\leq e^{n(\alpha-\beta)} (k!)^2 k.\label{eq:2255_ineq}
\end{align}
This proves~\eqref{eq:355} for $\beta\geq \alpha$ and all nonnegative integer values of
$\rho$. If $\beta \geq \alpha$ but $\rho$ is not an integer, then 
\begin{align}
\BiggE{\biggl(\sum_{i=1}^{e^{n\alpha}} X_i\biggr)^{\rho}} &\leq
\BiggE{\biggl(\sum_{i=1}^{e^{n\alpha}} X_i\biggr)^{\lceil \rho\rceil}}\notag\\
& \leq e^{n(\alpha-\beta)} (\lceil \rho\rceil!)^2 \lceil \rho
\rceil,\label{eq:2262_ineq}\\
& \leq \gamma e^{n(\alpha-\beta)},\label{eq:2264_ineq}
\end{align}
where~\eqref{eq:2262_ineq} follows from~\eqref{eq:2255_ineq}, and
where~\eqref{eq:2264_ineq} follows from~\eqref{eq:2125_gamma}. This completes
the proof of~\eqref{eq:355}.
\end{proof}

\section{A Proof of the Direct Part of~\eqref{eq:225_gallager}}
\label{appendix:cutoff}
Here we prove the achievability part of~\eqref{eq:225_gallager}, i.e., we
prove that for all $\rho>0$, 
\begin{equation}
\label{eq:gallager_direct}
\Coff \geq \max_P \frac{E_0(\rho,P)}{\rho}.
\end{equation}
Fix $\rho>0$ and a PMF $P$ on $\set{X}$. Generate a random 
\mbox{blocklength-$n$} rate-$R$ codebook $\bfX_1,\ldots,\bfX_{e^{nR}}$ by drawing each
component of each codeword independently according to~$P$. 
It suffices to show that the expectation of
\begin{equation}
\label{eq:1879}
\frac{1}{e^{nR}} \sum_{1\leq m \leq e^{nR}} \sum_{\bfy\in\set{Y}^n}
W^n(\bfy|\bfX_m) \card{\set{L}(m,\bfy)}^\rho
\end{equation}
(with respect to the distribution of the codebook) 
tends to one as $n$ tends to infinity when $R<E_0(\rho,P)/\rho$. 
This expectation can be expressed as 
\begin{align}
\label{eq:1894}
\sum_{\bfy\in\set{Y}^n}\sum_{\bfx_1\in\set{X}^n} W^n(\bfy|\bfx_1) P^n(\bfx_1) \BiggE{\biggl(1+\sum_{m=2}^{e^{nR}}
B_m(\bfy,\bfx_1)\biggr)^\rho},
\end{align}
where we define the RVs
\begin{equation}
B_m(\bfy,\bfx) = 1\bigl\{ W^n(\bfy|\bfX_m) \geq W^n(\bfy|\bfx)\bigr\}.
\end{equation}
Note that the distribution of $B_m(\bfy,\bfx)$ depends on $\bfx$ and $\bfy$ only
via their joint type. Moreover, if~$\bfx \in T_Q$ and $\bfy \in T_V(\bfx)$, then 
\begin{equation}
W^n(\bfy|\bfx) = e^{-n(D(V||W|Q)+H(V|Q))}.
\end{equation}
Thus, by introducing for every type $Q$, every
conditional type~$V$, and every $m \in \{1,\ldots,e^{nR}\}$ the RV
\begin{equation}
\tilde{B}_m(Q,V) = 1\bigl\{W^n(\bfy_{QV}|\bfX_m) \geq e^{-n(D(V||W|Q)+H(V|Q))}
\bigr\},
\end{equation}
where $\bfy_{QV}$ is an arbitrary sequence in $\set{Y}^n$ of type $QV$, we can rewrite~\eqref{eq:1894} as
\begin{equation}
\label{eq:1675}
\sum_{Q,V} (P\circ W)^n( T_{Q\circ V}) \BiggE{\biggl(1+\sum_{m=2}^{e^{nR}}
\tilde{B}_m(Q,V)\biggr)^\rho},
\end{equation}
where the sum extends over all types $Q$ and all conditional types $V$, and
where $P\circ W$ denotes the distribution on $\set{X}\times\set{Y}$ induced by $P$ and $W$
\begin{equation}
(P\circ W)(x,y) = P(x)W(y|x),\quad x\in\set{X},\,y\in\set{Y}.
\end{equation}
Next, we derive an upper-bound on~\eqref{eq:1675}. To this end,
note that for fixed $Q$ and $V$ the RVs 
\begin{equation}
\tilde{B}_1(Q,V),\ldots,\tilde{B}_{e^{nR}}(Q,V) 
\end{equation}
are IID Bernoulli. We can upper-bound their probability of success as follows.
\begin{align}
&\Pr(\tilde{B}_m(Q,V)=1)\notag\\
&=\Pr\bigl(W^n(\bfy_{QV}|\bfX_m) \geq e^{-n(D(V||W|Q)+H(V|Q))}\bigr)\notag\\
&=\Pr\bigl(W^n(\bfy_{QV}|\bfX_m)^{\frac{1}{1+\rho}} \geq
e^{-\frac{n}{1+\rho}(D(V||W|Q)+H(V|Q))}\bigr)\notag\\
&\leq e^{\frac{n}{1+\rho}(D(V||W|Q)+H(V|Q))}
\bigE{W^n(\bfy_{QV}|\bfX_m)^\frac{1}{1+\rho}},\label{eq:1687}
\end{align}
where~\eqref{eq:1687} follows from Markov's inequality.
As to the expectation on the right-hand side of~\eqref{eq:1687},
\begin{align}
&\bigE{W^n(\bfy_{QV}|\bfX_m)^\frac{1}{1+\rho}}\notag\\
&\quad= \prod_{i=1}^n
\bigE{W(y_{QV,i}|X_{m,i})^\frac{1}{1+\rho}}\label{eq:2522_independence}\\
&\quad=\prod_{y\in\set{Y}}\biggl(\sum_{x\in\set{X}} P(x) W(y|x)^{\frac{1}{1+\rho}}\biggr)^{n
(QV)(y)}\notag\\
&\quad=e^{n\sum_{y\in\set{Y}} (QV)(y) \log \sum_{x\in\set{X}} P(x)
W(y|x)^{\frac{1}{1+\rho}}}\notag\\
&\quad=e^{-n K(QV)},\label{eq:1697}
\end{align}
where~\eqref{eq:2522_independence} follows from the independence of the components of the
codewords, and where we define for every PMF $\tilde{P}$ on~$\set{Y}$
\begin{equation}
K(\tilde{P}) = -\sum_{y\in\set{Y}} \tilde{P}(y) \log \sum_{x\in\set{X}} P(x) W(y|x)^{\frac{1}{1+\rho}}.
\end{equation}
Substituting~\eqref{eq:1697} into~\eqref{eq:1687},
\begin{equation}
\Pr(\tilde{B}_m(Q,V)=1)\leq e^{-n\bigl(K(QV) -
\frac{D(V||W|Q)+H(V|Q)}{1+\rho}\bigr)}.
\end{equation}
Having bounded the probability of success of $\tilde{B}(Q,V)$, we next use
Lemma~\ref{lem:binomial} (Appendix~\ref{sec:binomial}), 
specifically~\eqref{eq:1547},  to conclude that the $\rho$-th 
moment in~\eqref{eq:1675} is bounded by 
\begin{equation}
\label{eq:1713_a}
1+\gamma e^{n(R-K(QV)+\frac{D(V||W|Q)+H(V|Q)}{1+\rho})}
\end{equation}
if $(Q,V) \in \set{G}(R)$, where 
\begin{multline*}
\set{G}(R)\\
= \biggl\{(Q,V): K(QV) - \frac{D(V||W|Q)+H(V|Q)}{1+\rho} \geq
R\biggr\},
\end{multline*}
and otherwise is bounded by 
\begin{equation}
\label{eq:1713_b}
\gamma e^{n\rho(R-K(QV)+\frac{D(V||W|Q)+H(V|Q)}{1+\rho})}.
\end{equation}
The other term in~\eqref{eq:1675} can be bounded as
\begin{align}
(P\circ W)^n(T_{Q\circ V}) &\leq e^{-nD(Q\circ V|| P\circ W)}\notag\\
&=e^{-n(D(Q||P)+D(V||W|Q))}.\label{eq:1718}
\end{align}
Using~\eqref{eq:1713_a},~\eqref{eq:1713_b} and~\eqref{eq:1718}, we can bound the summand
in~\eqref{eq:1675}. We treat separately the cases $(Q,V) \notin \set{G}(R)$ and $(Q,V)
\in \set{G}(R)$. In the former case,~\eqref{eq:1713_b} and~\eqref{eq:1718} give
\begin{multline}
\label{eq:2523_tube}
(P\circ W)^n(T_{Q\circ V})\BiggE{\biggl(1+\sum_{m=2}^{e^{nR}}
\tilde{B}_m(Q,V)\biggr)^\rho}\\ \leq \gamma e^{n\rho\bigl(R-K(QV)-\rho^{-1}D(Q||P) +
\frac{H(V|Q)-\rho^{-1}D(V||W|Q)}{1+\rho}\bigr)},\\
(Q,V) \notin \set{G}(R).
\end{multline}
We upper-bound the right-hand side of~\eqref{eq:2523_tube} in terms of $R$, $n$, $\rho$, and $E_0(\rho,P)$ by
showing that 
\begin{multline}
\label{eq:1963}
\min_{Q,V} \Bigl\{K(QV) + \rho^{-1} D(Q||P)\\
- \frac{H(V|Q)-\rho^{-1}
D(V||W|Q)}{1+\rho}\Bigr\} =  \frac{E_0(\rho,P)}{\rho},
\end{multline}
where the minimum is over all PMFs $Q$ on $\set{X}$ and all auxiliary channels
$V(y|x)$, $x\in \set{X}$, $y\in\set{Y}$. 
To establish~\eqref{eq:1963}, define 
\begin{equation}
\alpha(y) = \biggl(\sum_{x\in\set{X}} P(x) W(y|x)^{\frac{1}{1+\rho}}\biggr)^\rho,
\end{equation}
and observe that 
\begin{align}
& K(QV) + \rho^{-1} D(Q||P)- \frac{H(V|Q)-\rho^{-1}
D(V||W|Q)}{1+\rho}\notag\\
&=-\frac{1}{\rho}\sum_{y\in\set{Y}}\sum_{x\in\set{X}} Q(x) V(y|x) \log \frac{P(x)W(y|x)^{\frac{1}{1+\rho}}
\alpha(y)}{Q(x)V(y|x)}\notag\\
&\geq - \frac{1}{\rho}\log \sum_{y\in\set{Y}}\sum_{x\in\set{X}} P(x) W(y|x)^{\frac{1}{1+\rho}}
\alpha(y)\label{eq:2261_4342}\\
&=-\frac{1}{\rho}\log \sum_{y\in\set{Y}} \biggl(\sum_{x\in\set{X}} P(x)
W(y|x)^{\frac{1}{1+\rho}}\biggr)^{1+\rho}\notag\\
&=\frac{E_0(\rho,P)}{\rho},
\end{align}
where~\eqref{eq:2261_4342} follows from Jensen's Inequality.
The proof of~\eqref{eq:1963} is completed by noting that the choice 
\begin{equation*}
Q(x)V(y|x) = \frac{P(x)W(y|x)^{\frac{1}{1+\rho}}
\alpha(y)}{\sum_{x'\in\set{X},y'\in\set{Y}}P(x')W(y'|x')^{\frac{1}{1+\rho}}\alpha(y')}
\end{equation*}
achieves equality in~\eqref{eq:2261_4342}. 

Combining~\eqref{eq:2523_tube} with~\eqref{eq:1963} shows that
\begin{multline}
\label{eq:2570_ineq}
(P\circ W)^n(T_{Q\circ V})\BiggE{\biggl(1+\sum_{m=2}^{e^{nR}}
\tilde{B}_m(Q,V)\biggr)^\rho}\\
\leq \gamma e^{n\rho\bigl(R-\frac{E_0(\rho,P)}{\rho}\bigr)},\quad
(Q,V) \notin \set{G}(R),\; \rho>0.
\end{multline}
We now turn to the case where $(Q,V) \in \set{G}(R)$. 
We treat separately the subcases $\rho \geq 1$ and $0<\rho<1$, beginning with the
former. From~\eqref{eq:1718} and the fact that relative entropies are
nonnegative, it follows that
\begin{multline}
\label{eq:2578_blub}
(P\circ W)^n(T_{Q\circ V})\\ \leq e^{-n \rho^{-1}(D(Q||P)+D(V||W|Q))},\quad
\rho\geq 1.
\end{multline}
Combining~\eqref{eq:2578_blub} with~\eqref{eq:1713_a} and~\eqref{eq:1963} gives
\begin{multline}
\label{eq:2581_nice}
(P\circ W)^n(T_{Q\circ V})\BiggE{\biggl(1+\sum_{m=2}^{e^{nR}}
\tilde{B}_m(Q,V)\biggr)^\rho}\\ \leq (P\circ W)^n(T_{Q\circ V})+\gamma
e^{n\bigl(R-\frac{E_0(\rho,P)}{\rho}\bigr)},\\
(Q,V) \in \set{G}(R),\; \rho \geq 1.
\end{multline}
It remains to treat the case where $(Q,V) \in \set{G}(R)$ and $0<\rho<1$. In this
case, 
\begin{multline}
\label{eq:2561_lunch}
R-K(QV)+\frac{D(V||W|Q)+H(V|Q)}{1+\rho}\\
\leq \rho
\Bigl(R-K(QV)+\frac{D(V||W|Q)+H(V|Q)}{1+\rho}\Bigr),\\
(Q,V) \in \set{G}(R),\; 0<\rho < 1.
\end{multline}
Using~\eqref{eq:2561_lunch} to upper-bound the right-hand side
of~\eqref{eq:1713_a}, we obtain
\begin{multline}
\label{eq:2568_hair}
\BiggE{\biggl(1+\sum_{m=2}^{e^{nR}}
\tilde{B}_m(Q,V)\biggr)^\rho}\\ \leq 1+\gamma
e^{n\rho\bigl(R-K(QV)+\frac{D(V||W|Q)+H(V|Q)}{1+\rho}\bigr)},\\
(Q,V) \in \set{G}(R),\; 0<\rho<1.
\end{multline}
Combining~\eqref{eq:2568_hair} with~\eqref{eq:1718} and~\eqref{eq:1963} yields
\begin{multline}
\label{eq:3227_rev}
(P\circ W)^n(T_{Q\circ V})\BiggE{\biggl(1+\sum_{m=2}^{e^{nR}}
\tilde{B}_m(Q,V)\biggr)^\rho}\\ \leq (P\circ W)^n(T_{Q\circ V})+\gamma
e^{n\rho\bigl(R-\frac{E_0(\rho,P)}{\rho}\bigr)},\\
(Q,V) \in \set{G}(R),\; 0<\rho<1.
\end{multline}
Combining~\eqref{eq:2581_nice} with~\eqref{eq:2570_ineq} and using the fact that
the number of types and conditional types is polynomial in $n$, we obtain
\begin{multline}
\label{eq:3226_final}
\sum_{Q,V} (P\circ W)^n(T_{Q\circ V})\BiggE{\biggl(1+\sum_{m=2}^{e^{nR}}
\tilde{B}_m(Q,V)\biggr)^\rho}\\ \leq 1+ 
e^{n\bigl(R-\frac{E_0(\rho,P)}{\rho}+\delta_n\bigr)}+ 
e^{n\rho\bigl(R-\frac{E_0(\rho,P)}{\rho}+\delta_n\bigr)},\\
\rho\geq 1.
\end{multline}
Similarly, combining~\eqref{eq:3227_rev} with~\eqref{eq:2570_ineq}, we obtain
\begin{multline}
\label{eq:3235_final}
\sum_{Q,V} (P\circ W)^n(T_{Q\circ V})\BiggE{\biggl(1+\sum_{m=2}^{e^{nR}}
\tilde{B}_m(Q,V)\biggr)^\rho}\\ 
\leq 1+ e^{n\rho\bigl(R-\frac{E_0(\rho,P)}{\rho}+\delta_n\bigr)},\quad 0<\rho<1. 
\end{multline}
This completes the proof of~\eqref{eq:gallager_direct} 
because the right-hand sides of~\eqref{eq:3226_final} and~\eqref{eq:3235_final} 
tend to one as $n$ tends to infinity provided that
$R< E_0(\rho,P)/\rho$,
and we may choose a $P$ that maximizes the right-hand side. 
\qed

\section{A Proof that~\eqref{eq:const_comp_lb_cal} is at Least as Tight
as~\eqref{eq:forney_cal_rho}}
\label{appendix:D}
As pointed out in~\cite{telatar1997zero}, we may add the constraint $V'\ll
W$ in the minimization in~\eqref{eq:const_comp_lb_cal} without increasing the value of
the minimum. For any input PMF $P$ and any two auxiliary
channels $V,V'\ll W$ satisfying $PV=PV'$, 
\begin{align}
&-\rho^{-1} \log \sum_{y\in\set{Y}} (PW)(y) P(\set{X}(y))^\rho\notag\\
&\leq -\rho^{-1} \log \sum_{y\in\supp(PV')} (PW)(y)
P(\set{X}(y))^\rho\label{eq:2511_subset}\\
&= -\rho^{-1} \log \sum_{y\in\supp(PV')} (PV')(y) \frac{(PW)(y)
P(\set{X}(y))^\rho}{(PV')(y)}\label{eq:2514_algebraic}\\
&\leq -\rho^{-1}  \sum_{y\in\supp(PV')} (PV')(y) \log \frac{(PW)(y)
P(\set{X}(y))^\rho}{(PV')(y)}\label{eq:2516_jensen}\\
&=\rho^{-1} D(PV'||PW)+ \sum_{y\in\supp(PV')} (PV')(y) \log
\frac{1}{P(\set{X}(y))},\label{eq:2084}
\end{align}
where~\eqref{eq:2511_subset} follows because the support
of $PV'$ is a subset of the support of $PW$ (because $V'\ll W$); where~\eqref{eq:2514_algebraic}
follows by multiplying and dividing the summand by $(PV')(y)$;
and where~\eqref{eq:2516_jensen} follows from Jensen's Inequality. By the Log-Sum Inequality
\cite[Lemma~3.1]{csiszar2011information}
\begin{equation}
\label{eq:2091}
D(PV'||PW) \leq D(V'||W|P).
\end{equation}
The second term on the right-hand side of~\eqref{eq:2084} can be 
upper-bounded as follows.
\begin{align}
&\sum_{y\in\supp(PV')} (PV')(y) \log \frac{1}{P(\set{X}(y))}\notag\\
&=\sum_{y\in\supp(PV)} (PV)(y) \log
\frac{1}{P(\set{X}(y))}\label{eq:2533_blab}\\
&=\sum_{y\in\supp(PV)} (PV)(y) \log
\frac{(PV)(y)}{(PV)(y)P(\set{X}(y))}\notag\\
&\leq \sum_{y\in\supp(PV)} (PV)(y) \log
\frac{(PV)(y)}{(PV)(y)\sum_{x:V(y|x)>0}P(x)}\label{eq:2538_ineq}\\
&\leq \sum_{y\in\supp(PV)} \sum_{x: V(y|x)>0} P(x) V(y|x) \log
\frac{V(y|x)}{(PV)(y)}\label{eq:2540_logsum}\\
&=I(P,V),\label{eq:2103}
\end{align}
where~\eqref{eq:2533_blab} follows because $PV=PV'$; 
where~\eqref{eq:2538_ineq} follows because $V\ll W$; and
where~\eqref{eq:2540_logsum}
follows from the Log-Sum Inequality. 
Combining~\eqref{eq:2084} with~\eqref{eq:2091} and~\eqref{eq:2103} shows
that the right-hand side of~\eqref{eq:forney_cal_rho} never exceeds the right-hand side of~\eqref{eq:const_comp_lb_cal}.\qed

\section{A Property of Gallager's $E_0$ Function}
\label{appendix:gallager}
Gallager \cite{gallager1968information} defined the function
\begin{equation}
\label{eq:gallager_E0}
E_0(\rho,P) = -\log \sum_{y\in \set{Y}} \biggl( \sum_{x\in\set{X}}
P(x)W(y|x)^{\frac{1}{1+\rho}} \biggr)^{1+\rho},
\end{equation}
for all $\rho\geq 0$ and all PMFs $P$ on $\set{X}$. Here we show that
\begin{equation}
\label{eq:E0_limit}
\lim_{\rho\to\infty}
\max_P \frac{E_0(\rho,P)}{\rho}=-\log \pi_0,
\end{equation}
where $\pi_0$ is defined in~\eqref{eq:pi_0}.
This identity is noted without proof in~\cite{gallager1968information}.
To establish~\eqref{eq:E0_limit}, we first show that for any $P$
\begin{equation}
\label{eq:2279_0343}
\lim_{\rho\to\infty} \frac{E_0(\rho,P)}{\rho} = - \log \max_{y\in\set{Y}}
P(\set{X}(y)).
\end{equation}
We then use Lemma~\ref{lem:minimax} (Appendix~\ref{appendix:minimax}) to justify
the interchange of limit and maximization. The lemma applies
because $E_0(\rho,P)/\rho$ is nonincreasing and continuous in $\rho>0$ and continuous on the
set of all PMFs on $\set{X}$ (a compact subset of~$\Reals^{\card{\set{X}}}$).

To prove~\eqref{eq:2279_0343} for a given $P$, we distinguish two cases: 
Assume first that there exists $y_0\in\set{Y}$ such
that~$W(y_0|x)>0$ for all~$x\in\set{X}$ with~$P(x)>0$. In this case, the right-hand side
of~\eqref{eq:2279_0343} is equal to zero because $P(\set{X}(y_0)) = 1$. 
As to the left-hand side of~\eqref{eq:2279_0343}, note that replacing the sum over all
$y\in\set{Y}$ on the right-hand side
of~\eqref{eq:gallager_E0} with the term
corresponding to $y_0$ shows that
\begin{align}
\label{eq:2766_gallager}
E_0(\rho,P) \leq - (1+\rho) \log \sum_{x\in\set{X}}
P(x)W(y_0|x)^{\frac{1}{1+\rho}}.
\end{align}
Using L'Hospital's Rule, 
\begin{align}
&\lim_{\rho\to\infty} (1+\rho) \log \sum_{x\in\set{X}}
P(x)W(y_0|x)^{\frac{1}{1+\rho}}\notag\\
&= \lim_{\xi \searrow 0} \frac{\log \sum_{x\in
\set{X}} P(x)W(y_0|x)^\xi}{\xi}\notag\\
&=\sum_{x\in\set{X}} P(x) \log W(y_0|x). \label{eq:2587_lhopital}
\end{align}
Combining~\eqref{eq:2587_lhopital} and~\eqref{eq:2766_gallager},
\begin{equation}
\label{eq:2779_gallager}
\lim_{\rho\to\infty} E_0(\rho,P) \leq -\sum_{x\in\set{X}} P(x)\log W(y_0|x).
\end{equation}
Since the right-hand side of~\eqref{eq:2779_gallager} is a finite number, and
$E_0(\rho,P)\geq 0$, it follows that
\begin{equation}
\lim_{\rho\to \infty} \frac{E_0(\rho,P)}{\rho} =0.
\end{equation}
This establishes~\eqref{eq:2279_0343} for the first case. 
It remains to check the case where for every $y\in \set{Y}$ there is some $x_y\in \set{X}$
for which $P(x_y)>0$ and $W(y|x_y)=0$. 
In this case, for every $y\in \set{Y}$,
\begin{align}
\biggl( \sum_{x\in\set{X}}
P(x)W(y|x)^{\frac{1}{1+\rho}} \biggr)^{1+\rho} &\leq
\bigl(1-P(x_y)\bigr)^{1+\rho}\notag\\
&\to 0,\quad (\rho \to \infty).
\end{align}
Consequently, $E_0(\rho,P) \to \infty$ as $\rho \to \infty$, so by L'Hospital's Rule
\begin{equation}
\lim_{\rho\to\infty} \frac{E_0(\rho,P)}{\rho} = \lim_{\rho\to\infty}
\frac{\partial E_0(\rho,P)}{\partial \rho}.
\end{equation}
Straightforward computations show that
\begin{align}
\label{eq:2221_865}
&\frac{\partial E_0(\rho,P)}{\partial \rho}\notag= -\sum_{y\in\set{Y}}\frac{
\bigl(\sum_{x\in\set{X}}
P(x)W(y|x)^{\frac{1}{1+\rho}}\bigr)^{1+\rho}
}{\sum_{y'\in\set{Y}}\bigl(\sum_{x'\in\set{X}}P(x')W(y'|x')^{\frac{1}{1+\rho}}\bigr)^{1+\rho}}\notag\\
&\quad\quad\times\Bigl(\epsilon(\rho)+\log
\sum_{x''\in\set{X}}
P(x'')W(y|x'')^{\frac{1}{1+\rho}}
\Bigr),
\end{align}
where $\epsilon(\rho)\to0$ as $\rho\to\infty$. For each $y\in\set{Y}$, the
expression
\begin{multline}
\label{eq:2813_decay}
\biggl(\sum_{x\in\set{X}} P(x)W(y|x)^{\frac{1}{1+\rho}}\biggr)^{1+\rho}\\
=e^{(1+\rho) \log \sum_{x\in\set{X}}P(x)W(y|x)^{\frac{1}{1+\rho}}}
\end{multline}
is either zero for all $\rho>0$ or decays exponentially with $\rho$. Noting that
\begin{equation}
\lim_{\rho\to\infty}\sum_{x\in\set{X}}P(x)W(y|x)^{\frac{1}{1+\rho}} = P\bigl(\set{X}(y)\bigr),
\end{equation}
we see that the slowest decay in~\eqref{eq:2813_decay} occurs for those $y\in\set{Y}$ that maximize
$P(\set{X}(y))$. This implies that the right-hand side of~\eqref{eq:2221_865} approaches the right-hand side
of~\eqref{eq:2279_0343} as $\rho$ tends to infinity.\qed

\section{A Minimax Lemma}
\label{appendix:minimax}
\begin{lemma}
\label{lem:minimax}
Let $\set{C}$ be a compact subset of $\Reals^n$, let $\set{I}=[\alpha,\infty)$ for
some $\alpha\in\Reals$, and let $f\colon
\set{I}\times \set{C}
\to \Reals$ be such that $f(\cdot,\pi)$ is nonincreasing and continuous for every $\pi\in \set{C}$
and $f(\rho,\cdot)$ is
continuous for every $\rho\in\set{I}$. Then
\begin{equation}
\label{eq:2459_dfsdf}
\lim_{\rho \to \infty} \max_{\pi \in \set{C}} f(\rho,\pi) = \max_{\pi\in
\set{C}}\lim_{\rho\to\infty}
f(\rho,\pi).
\end{equation}
\end{lemma}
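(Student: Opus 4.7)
The plan is to bracket the two sides by separate inequalities. Write $M(\rho)=\max_{\pi\in\set{C}} f(\rho,\pi)$, which is well defined by continuity of $f(\rho,\cdot)$ on the compact set $\set{C}$. Because $f(\cdot,\pi)$ is nonincreasing for each $\pi$, the function $M(\cdot)$ is itself nonincreasing, and the pointwise limit $g(\pi)\eqdef \lim_{\rho\to\infty} f(\rho,\pi)$ exists in $[-\infty,\infty)$; moreover $L\eqdef \lim_{\rho\to\infty} M(\rho)$ exists in the same sense.

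The easy direction is $L\geq \sup_{\pi\in\set{C}} g(\pi)$: for every $\pi$ and every $\rho$, $M(\rho)\geq f(\rho,\pi)$, so letting $\rho\to\infty$ gives $L\geq g(\pi)$, and then one takes the supremum in $\pi$.

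For the reverse direction, I would pick for each $\rho\in\set{I}$ a maximizer $\pi_\rho\in\set{C}$ with $f(\rho,\pi_\rho)=M(\rho)$, and use compactness of $\set{C}$ to extract a sequence $\rho_k\to\infty$ with $\pi_{\rho_k}\to \pi^\star\in\set{C}$. Fix an arbitrary $\rho_0\in\set{I}$. For all $k$ large enough that $\rho_k\geq \rho_0$, monotonicity in the first argument gives
\begin{equation}
M(\rho_k)=f(\rho_k,\pi_{\rho_k})\leq f(\rho_0,\pi_{\rho_k}).
\end{equation}
Sending $k\to\infty$ and using continuity of $f(\rho_0,\cdot)$ at $\pi^\star$ yields $L\leq f(\rho_0,\pi^\star)$, and then sending $\rho_0\to\infty$ gives $L\leq g(\pi^\star)\leq \sup_{\pi\in\set{C}} g(\pi)$. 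Combining with the easy direction proves that the supremum on the right of~\eqref{eq:2459_dfsdf} equals $L$ and is attained at $\pi^\star$, so it may be written as a maximum.

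The only subtle point, and the place I expect to be most careful, is justifying the interchange: $g$ need not be continuous (it is merely upper semicontinuous as an infimum of continuous functions), so a direct ``uniform convergence plus continuous-max'' argument (Dini-style) does not apply. The trick is that one does not need to pass to the limit in $\rho$ inside the supremum; one only passes to the limit in $\pi$ at a \emph{fixed} $\rho_0$, where continuity in $\pi$ is available, and then exploits the monotonicity in $\rho$ to let $\rho_0\to\infty$ afterwards. Compactness is used exactly once, to produce the accumulation point $\pi^\star$.
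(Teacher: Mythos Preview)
Your argument is correct. Both directions are handled cleanly, and the key maneuver---freezing $\rho_0$, passing to the limit in $\pi$ using continuity of $f(\rho_0,\cdot)$, and only then letting $\rho_0\to\infty$ via monotonicity---is exactly the right way to sidestep the lack of continuity of $g$. The attainment of the maximum on the right-hand side falls out for free from your construction of $\pi^\star$, which is a nice bonus.

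The paper's proof reaches the same conclusion by a different, more indirect route. Rather than tracking the maximizers $\pi_\rho$, it fixes levels $a<b<L$, introduces the sublevel sets $\set{B}(\pi)=\{\rho:f(\rho,\pi)\leq b\}$ and $\set{A}(\pi)=\{\rho:f(\rho,\pi)\leq a\}$, observes that $\bigcap_\pi \set{B}(\pi)=\emptyset$, and then argues by cases and by contradiction that some $\pi_\infty$ must satisfy $\set{A}(\pi_\infty)=\emptyset$ (hence $g(\pi_\infty)\geq a$). Compactness enters in the same place---extracting a convergent subsequence from a sequence of ``good'' $\pi$'s---and the pivotal step is again continuity in $\pi$ at a fixed $\rho_0$. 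So the two proofs share the same skeleton, but yours is the more direct and economical implementation: you never need the auxiliary level-set machinery, the separate argument that the right-hand maximum is attained, or the case split. The paper's version has the minor advantage of making the role of the two levels $a<b$ (and hence of continuity in $\rho$, which is otherwise barely used) more explicit, but your proof is the one most readers would find in a topology or analysis text.
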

\begin{proof}
We first show that the maximum on the right-hand side of~\eqref{eq:2459_dfsdf} is attained. Select a sequence
$\pi_1,\pi_2,\ldots$ in $\set{C}$ such that 
\begin{equation}
\lim_{n\to\infty} \lim_{\rho\to\infty} f(\rho,\pi_n) = \sup_{\pi\in \set{C}}
\lim_{\rho\to\infty} f(\rho,\pi). 
\end{equation}
By compactness of $\set{C}$, we can find a
convergent subsequence $\pi_{n_k} \to \pi_\infty \in \set{C}$ as $k\to\infty$. 
By continuity and monotonicity we have for every $\rho_0\in\set{I}$ that
\begin{align}
f(\rho_0,\pi_\infty) &= \lim_{k\to\infty} f(\rho_0,\pi_{n_k})\notag\\
&\geq \lim_{k\to\infty} \lim_{\rho\to\infty} f(\rho,\pi_{n_k})\notag\\
&= \sup_{\pi\in \set{C}} \lim_{\rho\to\infty} f(\rho,\pi).
\end{align}
Taking $\rho_0\to\infty$ thus shows that $\pi_\infty$ attains the maximum on the right-hand side of~\eqref{eq:2459_dfsdf}.

To prove that equality holds in~\eqref{eq:2459_dfsdf}, first note that the left-hand side
is clearly never smaller than the right-hand side, so it remains to prove the
reverse inequality. If the left-hand side equals $-\infty$, then there is
nothing left to prove. 
Otherwise select real numbers $a$ and $b$ such that 
\begin{equation}
a<b<\lim_{\rho\to\infty} \max_{\pi\in \set{C}} f(\rho,\pi)
\end{equation}
and define the sets
\begin{subequations}
\begin{align}
\set{A}(\pi)&=\{\rho\in\set{I}: f(\rho,\pi) \leq a\},\\
\set{B}(\pi)&=\{\rho\in\set{I}: f(\rho,\pi) \leq b\}.
\end{align}
\end{subequations}
Our choice of $a$ and $b$ implies that $\set{A}(\pi) \subseteq \set{B}(\pi)$ and
$\bigcap_{\pi\in \set{C}} \set{B}(\pi)=\emptyset$. For a fixed $\pi\in \set{C}$, the
set $\set{B}(\pi)$ is either empty
or, by monotonicity and continuity, 
an interval of the form $[\lambda, \infty)$. 
If $\set{B}(\pi_0)=\emptyset$ for some $\pi_0\in \set{C}$, then
$f(\rho,\pi_0) > b$ for every $\rho\in\set{I}$, so $\lim_{\rho\to\infty}
f(\rho,\pi_0) \geq b$, and
hence $\max_{\pi\in \set{C}} \lim_{\rho\to\infty} f(\rho,\pi) \geq b>a$. If
$\set{B}(\pi) \neq \emptyset$
for every $\pi \in \set{C}$, then, since $\bigcap_{\pi\in \set{C}} \set{B}(\pi)=\emptyset$,
we can find a sequence $\pi_1,\pi_2,\ldots$ in $\set{C}$ such
that $\set{B}(\pi_n) = [\lambda_n,\infty)$ where $\lambda_n \to \infty$ as
$n\to\infty$. By compactness of $\set{C}$, we can then find a convergent 
subsequence $\pi_{n_k} \to \pi_\infty\in \set{C}$ as
$k\to\infty$. We claim that $\set{A}(\pi_\infty)=\emptyset$. Indeed, for if $\rho_0
\in \set{A}(\pi_\infty)$, i.e., if $f(\rho_0,\pi_\infty) \leq
a$, then by continuity $f(\rho_0, \pi_{n_k}) \leq b$ for all sufficiently large $k$,
i.e., $\rho_0 \in \set{B}(\pi_{n_k})$ for all sufficiently large $k$.
This leads to a contradiction because $\set{B}(\pi_{n_k}) = [\lambda_{n_k},\infty)$
and $\lambda_{n_k} \to \infty$ as $k\to\infty$ so
$\lambda_{n_k} > \rho_0$ for sufficiently large $k$. Thus, $\set{A}(\pi_\infty)= \emptyset$
and hence $\lim_{\rho\to\infty} f(\rho,\pi_\infty) \geq a$, so $\max_{\pi\in \set{C}}
\lim_{\rho\to\infty} f(\rho,\pi) \geq a$. Letting $a \nearrow \lim_{\rho\to\infty}
\max_{\pi\in \set{C}} f(\rho,\pi)$ completes the proof. 
\end{proof}

\section*{Acknowledgment}
We thank the anonymous reviewers for their helpful comments. 


\ifCLASSOPTIONcaptionsoff
  \newpage
\fi



\bibliographystyle{IEEEtran}
\bibliography{IEEEabrv,list_jrnl}
\end{document}